\keywords{superposition calculus, Boolean-free lambda-free higher-order logic, refutational completeness}
\def\@seccntformat#1{%
  \protect\textup{\protect\@secnumfont
    \ifnum\pdfstrcmp{subsection}{#1}=0 \bfseries\fi
    \csname the#1\endcsname
    \protect\@secnumpunct
  }%
}
\newcommand\Section{Section}
\g@addto@macro{\UrlBreaks}{\UrlOrds\do\=\do\_}
\DeclareFontFamily{OT1}{pzc}{}
\DeclareFontShape{OT1}{pzc}{m}{it}{<-> s * [1.10] pzcmi7t}{}
\DeclareMathAlphabet{\mathcalx}{OT1}{pzc}{m}{it}
\def\negvthinspace{\kern-0.083333em}
\def\vthinspace{\kern+0.083333em}
\def\vvthinspace{\kern+0.0416667em}
\def\negvvthinspace{\kern-0.0416667em}
\def\hypsep{\quad}
\def\rulesep{\kern1.75em}
\newcommand\medrightarrow{\mathrel{{{\color{black}\relbar}\kern-0.9ex\rlap{\color{white}\ensuremath{\blacksquare}}\kern-0.9ex}\joinrel{\color{black}\rightarrow}}}
\newcommand\medleftarrow{\mathrel{{\color{black}\leftarrow}\kern-0.9ex\rlap{\color{white}\ensuremath{\blacksquare}}\kern-0.9ex\joinrel{{\color{black}\relbar}}}}
\newcommand\medleftrightarrow{\mathrel{\leftarrow\kern-1.685ex\rightarrow}}
\newcommand{\rewrite}{\medrightarrow}
\newcommand{\leftrightrewrite}{\medleftrightarrow}
\def\cpp{C\nobreak\raisebox{.1ex}{+}\nobreak\raisebox{.1ex}{+}}
\definecolor{light-gray}{gray}{0.9}
\definecolor{darker-gray}{gray}{0.45}
\let\oldSigma=\Sigma
\renewcommand\Sigma{\mathrm{\oldSigma}}
\newcommand\eqIH{\overset{\smash{\scriptscriptstyle\text{IH}}}{=}}
\newcommand{\flooronly}{\mathcalx{F}}
\newcommand{\ceilonly}{\flooronly^{-1}}
\newcommand{\floor}[1]{\flooronly(#1)}
\newcommand{\ceil}[1]{\ceilonly(#1)}
\newcommand{\III}{\mathcal{I}}
\newcommand{\II}{\mathcalx{J}}
\newcommand{\IIty}{\II_\mathsf{ty}}
\newcommand{\IIIty}{\III_\mathsf{ty}}
\newcommand{\Sigmaty}{\Sigma_\mathsf{ty}}
\newcommand{\VV}{\mathcalx{V}}
\newcommand{\Vty}{\VV_\mathsf{ty}}
\newcommand{\UU}{\mathcalx{U}}
\newcommand{\U}{U}
\newcommand{\EE}{\mathcalx{E}}
\newcommand{\uho}{\UU^{\smash{\GH}}}
\newcommand{\iho}{\II^{\smash{\GH}}}
\newcommand{\eho}{\EE^{\smash{\GH}}}
\newcommand{\IIIho}{\III^{\smash{\GH}}}
\newcommand{\ifo}{\II}
\newcommand{\RfN}{R}
\newcommand\foralltynospace[1]{\mathsf{\Pi}#1.}
\newcommand\forallty[1]{\foralltynospace{#1}\;}
\newcommand{\infname}[1]{\textsc{#1}}
\newcommand{\unified}[1]{\smash{\setlength{\fboxsep}{.3ex}\colorbox{light-gray}{\ensuremath{\vphantom{('q}{#1}}}}}
\newcommand{\lefthosubterm}{[}
\newcommand{\righthosubterm}{]}
\newcommand{\hosubterm}[2]{#1\lefthosubterm#2\righthosubterm}
\newcommand{\lang}{\begin{picture}(5,7)
\put(1.1,3){\rotatebox{45}{\line(1,0){6.0}}}
\put(1.1,3){\rotatebox{315}{\line(1,0){6.0}}}
\end{picture}}
\newcommand{\rang}{\begin{picture}(5,7)
\put(.1,3){\rotatebox{135}{\line(1,0){6.0}}}
\put(.1,3){\rotatebox{225}{\line(1,0){6.0}}}
\end{picture}}
\newcommand{\leftfosubterm}{\lang\vthinspace}
\newcommand{\rightfosubterm}{\rang\,}
\newcommand{\fosubterm}[2]{#1\leftfosubterm #2\rightfosubterm}
\newcommand{\subterm}[2]{#1[#2]}
\newcommand{\leftinterpret}{\llbracket}
\newcommand{\rightinterpret}{\rrbracket}
\newcommand{\interpret}[3]{\smash{\leftinterpret #1\rightinterpret_{#2}^{\smash{#3}}}}
\newcommand{\interpreta}[1]{\interpret{#1}{\III}{}}
\newcommand{\interpretaxi}[1]{\interpret{#1}{\III}{\xi}}
\newcommand{\interpretfo}[2]{\interpret{#1}{\RfN}{#2}}
\newcommand{\interpretho}[2]{\interpret{#1}{\IIIho}{#2}}
\newcommand{\interpretfog}[1]{\interpretfo{#1}{}}
\newcommand{\interprethog}[1]{\interpretho{#1}{}}
\renewcommand{\doteq}{\mathrel{\dot\eq}}
\newcommand{\eq}{\approx}
\newcommand{\noteq}{\not\eq}
\newcommand{\namedinference}[3]{\prftree[r]{\small{\infname{#1}}}{\strut#2}{\strut#3}}
\DeclareMathOperator{\mgu}{mgu}
\newcommand{\tuple}[1]{\bar{#1}}
\newcommand\succL{\succ}
\newcommand{\precC}{\prec}
\newcommand{\preceqC}{\preceq}
\newcommand{\succC}{\succ}
\newcommand{\succeqC}{\succeq}
\newcommand{\cst}[1]{{\mathsf{#1}}}
\newcommand{\var}[1]{{\mathit{#1}}}
\newcommand{\typ}[1]{{\cst{#1}}}
\newcommand\defeq{=}
\newcommand\fun{\rightarrow}
\newcommand\fofun{\Rightarrow}
\newcommand{\gnd}{\mathcalx{G}}
\newcommand{\typeargs}[1]{{\langle#1\rangle}}
\newcommand\oftype{:}
\newcommand\oftypedecl{:}
\DeclareMathOperator{\pure}{\mathcalx{pure}}
\DeclareMathOperator{\pureext}{\pure}
\DeclareMathOperator{\pureint}{\pure}
\newcommand{\diff}{\mathsf{diff}}
\newcommand{\llor}{\mathrel\lor}
\newcommand{\ccup}{\mathrel\cup}
\newcommand{\ccap}{\mathrel\cap}
\newcommand{\TT}{\mathcalx{T}}
\newcommand{\Ty}{\mathcalx{Ty}}
\newcommand{\CC}{\mathcalx{C}}
\newcommand{\HH}{{\mathrm{H}}}
\newcommand{\GH}{{\mathrm{GH}}}
\newcommand{\GF}{{\mathrm{GF}}}
\newcommand{\THH}{\TT_\HH}
\newcommand{\TGH}{\TT_\GH}
\newcommand{\TGF}{\TT_\GF}
\newcommand{\TyHH}{\Ty_\HH}
\newcommand{\TyGH}{\Ty_\GH}
\newcommand{\TyGF}{\Ty_\GF}
\newcommand{\CHH}{\CC_\HH}
\newcommand{\CGH}{\CC_\GH}
\newcommand{\CGF}{\CC_\GF}
\newcommand{\Gmodels}{\models_\gnd}
\newcommand{\Inf}{\mathit{Inf}}
\newcommand{\HInf}{\mathit{HInf}}
\newcommand{\GFInf}{\mathit{GFInf}}
\newcommand{\GHInf}{\mathit{GHInf}}
\newcommand{\concl}{\mathit{concl}}
\newcommand{\Undef}{\mathit{undef}}
\newcommand{\preconcl}{\mathit{preconcl}}
\newcommand{\prem}{\mathit{prems}}
\newcommand{\mprem}{\mathit{mprem}}
\newcommand{\HSel}{\mathit{HSel}}
\newcommand{\GHSel}{\mathit{GHSel}}
\newcommand{\GFSel}{\mathit{GFSel}}
\newcommand{\RedC}{\mathit{Red}_{\mathrm{C}}}
\newcommand{\RedI}{\mathit{Red}_{\mathrm{I}}}
\newcommand{\HRedC}{{\mathit{HRed}_{\mathrm{C}}}}
\newcommand{\HRedI}{\mathit{HRed}_{\mathrm{I}}}
\newcommand{\GHRedC}{\mathit{GHRed}_{\mathrm{C}}}
\newcommand{\GHRedI}{\mathit{GHRed}_{\mathrm{I}}}
\newcommand{\GFRedC}{\mathit{GFRed}_{\mathrm{C}}}
\newcommand{\GFRedI}{\mathit{GFRed}_{\mathrm{I}}}
\newcommand{\wrt}{\hbox{w.r.t.}}
\newcommand{\soundmodels}{\mathrel{|\kern-.1ex}\joinrel\approx}
\newcommand{\cmark}{\ding{51}}%
\let\xmark=\relax
\newcommand{\paper}{article}
\newcommand{\OK}[1]{#1}
\newtheoremstyle{plainx}
  {6pt}   %
  {6pt}   %
  {\slshape}  %
  {0pt}       %
  {\bfseries} %
  {.}         %
  {5pt plus 1pt minus 1pt} %
  {}
\theoremstyle{plainx}
\newtheorem{theoremx}[thm]{Theorem}
\newtheorem{lemmax}[thm]{Lemma}
\newtheorem{corollaryx}[thm]{Corollary}
\theoremstyle{definition}
\newtheorem{definitionx}[thm]{Definition}
\newtheorem{notationx}[thm]{Notation}
\newtheorem{examplex}[thm]{Example}
\newtheorem{remarkx}[thm]{Remark}
\begin{document}

\title[Superposition for Lambda-Free Higher-Order Logic]{Superposition for
Lambda-Free Higher-Order Logic} %
\titlecomment{Extended version of Bentkamp et al., ``Superposition for lambda-free higher-order logic'' \cite{bentkamp-et-al-2018}}

\author[A.~Bentkamp]{Alexander Bentkamp\rsuper{a}}	%
%

\author[J.~Blanchette]{Jasmin Blanchette\rsuper{a}}	%
\address{\lsuper{a}Vrije Universiteit Amsterdam,
Department of Computer Science,
De Boelelaan 1111, 1081 HV Amsterdam, The Netherlands}	%
\email{a.bentkamp@vu.nl}  %
\email{j.c.blanchette@vu.nl}  %

\author[S.~Cruanes]{Simon Cruanes\rsuper{b}}	%
\address{\lsuper{b}Aesthetic Integration,
600 Congress Ave., Austin, Texas, 78701, USA}	%
\email{simon@imandra.ai}

\author[U. Waldmann]{Uwe Waldmann\rsuper{c}}	%
\address{\lsuper{c}Max-Planck-Institut f\"ur Informatik, Saarland Informatics Campus, Saarbr\"ucken, Germany}
\email{uwe@mpi-inf.mpg.de}

\begin{abstract}
We introduce refutationally complete superposition calculi for
intentional and extensional clausal $\lambda$-free higher-order logic, two
formalisms that allow partial application and applied variables.
The calculi are parameterized by a term order that need not be fully
monotonic, making it possible to employ the $\lambda$-free higher-order
lexicographic path and Knuth--Bendix orders. We implemented the calculi in the
Zipperposition prover and evaluated them on Isabelle/HOL and TPTP benchmarks. They appear
promising as a stepping stone towards complete, highly efficient automatic theorem
provers for full higher-order logic.
\end{abstract}

\maketitle

\setcounter{footnote}{0}

\section{Introduction}
\label{sec:introduction}

Superposition is a highly successful calculus for reasoning about first-order
logic with equality. We are interested in \emph{graceful} generalizations to
higher-order logic:\ calculi that, as much as possible, coincide with standard
superposition on first-order problems and that scale up to arbitrary
higher-order problems.

As a stepping stone towards full higher-order logic,
in this \paper{} we restrict our attention to a
clausal $\lambda$-free fragment of polymorphic higher-order logic
that supports partial application and application of variables
(\Section~\ref{sec:logic}). This formalism is expressive
enough to permit the axiomatization of higher-order combinators such as
$\cst{pow} : \forallty{\alpha} \typ{nat} \to (\alpha \to \alpha) \to \alpha \to \alpha$
(intended to denote the iterated application~$h^n\:x$):
\begin{align*}
\cst{pow}\typeargs{\alpha}\; \cst{Zero}\; h & \approx \cst{id}\typeargs{\alpha}
&
\cst{pow}\typeargs{\alpha}\; (\cst{Succ}\; n)\; h\; x & \approx h\; (\cst{pow}\typeargs{\alpha}\; n\; h\; x)
\end{align*}
Conventionally,
functions are applied without
parentheses and commas, and variables are italicized. %
Notice the variable number of arguments to $\cst{pow}\typeargs{\alpha}$ %
and the application of~$h.$
The expressiveness
of full higher-order logic can be recovered
by introducing 
$\cst{SK}$%
-style%
\ combinators to represent $\lambda$-abstractions and proxies for the logical
symbols \cite{kerber-1991,robinson-1970}.

A widespread technique to support partial application and application of
variables in first-order logic is to make all symbols nullary and to represent
application of functions %
by a distinguished binary symbol $\cst{app} \oftypedecl \forallty{\alpha,\beta}
\typ{fun}(\alpha, \beta) \times \alpha \to \beta$,
where $\typ{fun}$ is an uninterpreted binary type constructor.
Following this scheme,
the higher-order term $\cst{f}\>(h\;\cst{f})$,
where $\cst{f} \oftype \kappa\to\kappa'$, is translated to
$\cst{app}(\cst{f}, \cst{app}(h, \cst{f}))$---or rather
$\cst{app}\typeargs{\kappa,\kappa'}(\cst{f},
\cst{app}\typeargs{\typ{fun}(\kappa,\kappa'),\kappa}(h, \cst{f}))$
if we specify the type arguments.
We call this the \emph{applicative encoding.}
The existence of such a reduction to first-order logic explains why
$\lambda$-free higher-order terms are also called ``applicative first-order
terms.''
Unlike for
full higher-order logic, most general unifiers are unique for
our \hbox{$\lambda$-free} fragment, just as they are for
applicatively encoded first-order terms.

Although the applicative encoding is complete \cite{kerber-1991} and is
employed fruitfully in tools such as HOLyHammer and Sledgehammer
\cite{blanchette-et-al-2016-qed}, it suffers from a number of weaknesses, all
related to its gracelessness. Transforming all the function symbols into
constants considerably restricts what can be achieved with term
orders; for example, argument tuples cannot easily be compared
using different methods for different symbols
\cite[\Section~2.3.1]{kop-2012-phd}. In a prover, the
encoding also clutters the data structures, slows down the algorithms,
and neutralizes the heuristics that look at the terms' root symbols. But our
chief objection is the sheer clumsiness of encodings and their
poor integration with interpreted symbols. %
And they quickly accumulate; for example, using the traditional encoding of
polymorphism relying on a distinguished binary function symbol $\cst{t}$
\cite[\Section~3.3]{blanchette-et-al-2016-types} in conjunction with the
applicative encoding, the term $\cst{Succ}\; \var{x}$ becomes
$\cst{t}(\cst{nat},\allowbreak \cst{app}(\cst{t}(\cst{fun}(\cst{nat},\allowbreak \cst{nat}),\allowbreak
\cst{Succ}),\allowbreak \cst{t}(\cst{nat}, \var{x})))$.
The term's simple structure is lost in translation.

Hybrid schemes have been proposed to strengthen the applicative encoding: If
a given symbol always occurs with at least $k$ arguments, these
can be passed directly \cite{meng-paulson-2008-trans}. However, this relies on
a closed-world assumption:\ that all terms that will ever be compared
arise in the initial problem. This noncompositionality conflicts with the
need for complete higher-order calculi to synthesize arbitrary terms during
proof search \cite{benzmueller-miller-2014}.
As a result, hybrid encodings are not an ideal basis for
higher-order automated reasoning.

Instead, we propose to generalize the superposition calculus to \emph{intensional} and
\emph{extensional} clausal $\lambda$-free higher-order logic. For the extensional
version of the logic, the property $\vthinspace(\forall x.\; h\>x \approx k\>x) \medrightarrow
h \approx k\vthinspace$ holds for all functions $h{,}\> k$ of the same type. For each logic, we
present two calculi (\Section~\ref{sec:the-calculi}). The intentional
calculi perfectly coincide with standard superposition on first-order
clauses; the extensional calculi depend on an extra axiom.

Superposition is parameterized by a term order, which is used to prune the
search space. If we assume that the term order is a simplification order
enjoying totality on ground terms (i.e., terms containing no term or type
variables), the standard calculus rules and completeness proof can be lifted
verbatim. The only necessary changes concern the basic definitions of terms and
substitutions.
However, there is one monotonicity property that is hard to obtain
unconditionally:\ \emph{compatibility with arguments}. It states that $s' \succ s$ implies $s'\>t \succ
s\;t$ for all terms $s{,}\> s'\!{,}\> t$ such that $s\;t$ and $s'\;t$ are well typed.
Blanchette, Waldmann, and colleagues recently introduced graceful
generalizations of the lexicographic path order (LPO)
\cite{blanchette-et-al-2017-rpo} and the Knuth--Bendix order (KBO)
\cite{becker-et-al-2017} with argument coefficients, but they both lack this
property. For example, given a KBO with $\cst g \succ \cst f$, it may well be
that $\cst g \; \cst a \prec \cst f \; \cst a$ if
$\cst f$ has a large enough multiplier %
on its argument.

Our superposition calculi are designed to be refutationally complete for such
nonmonotonic orders (\Section~\ref{sec:refutational-completeness}).
To achieve this, they include an inference rule for argument congruence, which
derives $C \llor s\;\var{x} \approx t\;\var{x}$ from $C \llor s \approx t$.
The redundancy criterion is defined in such a way that the larger,
derived clause is not subsumed by the premise. In the completeness proof, the
most difficult case is the one
that normally excludes superposition at or below variables using the induction
hypothesis. With nonmonotonicity, this approach no longer works, and we propose two
alternatives: Either perform some superposition inferences into higher-order
variables or ``purify'' the clauses to circumvent the issue. We refer to the
corresponding calculi as \emph{nonpurifying} and \emph{purifying.}

The calculi are implemented in the Zipperposition prover \cite{cruanes-2017}
(\Section~\ref{sec:implementation}). We evaluate them on
first-~and higher-order Isabelle/HOL \cite{boehme-nipkow-2010} and TPTP %
benchmarks \cite{sutcliffe-et-al-2012-tff,sutcliffe-et-al-2009} and compare
them with the applicative encoding (\Section~\ref{sec:evaluation}). We find
that there is a substantial cost associated with the applicative
encoding, that the nonmonotonicity is not particularly expensive, and that the
nonpurifying calculi outperform the purifying calculi.

\looseness=-1
An earlier version of this work was presented at IJCAR 2018 \cite{bentkamp-et-al-2018}.
This \paper{} extends the conference paper with detailed soundness and
completeness proofs and more explanations.
Because of too weak selection restrictions on the purifying calculi,
our claim of refutational completeness in the conference version was not entirely correct.
We now strengthened the selection restrictions accordingly.
Moreover, we extended the logic with polymorphism, leading
to minor modifications to the calculus. We also simplified the
presentation of the clausal fragment of the logic that interests us.
In particular, we removed mandatory arguments. %
The redundancy criterion also differs slightly from the conference version.
Finally,
we updated the empirical evaluation to reflect recent improvements in the
Zipperposition prover.

Since the publication of the conference paper, 
two research groups have built on our work.
Bhayat and Reger~\cite{bhayat-reger-2020-combsup}
extended our intensional nonpurifying calculus 
to a version of higher-order logic with combinators.
They use it with a nonmonotonic order that orients
the defining equations of $\cst{SK}$-style combinators.
Bentkamp et al.~\cite{bentkamp-et-al-lamsup-journal}
extended our extensional nonpurifying calculus
to a calculus on $\lambda$-terms.
For them, support for nonmonotonic orders is crucial as well
because no ground-total simplification order exists
for $\lambda$-terms up to $\beta$-conversion.
Based on these two extensions of our calculi, 
the provers Zipperposition and Vampire
achieved first and third place, respectively, in the higher-order category of
the 2020 edition of the CADE ATP System Competition (CASC-J10).

\section{Logic}
\label{sec:logic}

Our logic is intended as an intermediate step on the way towards full
higher-order logic
(also called simple type theory)
\cite{church-1940,gordon-melham-1993}.
Refutational completeness of calculi for higher-order logic is usually
stated in terms of Henkin semantics
\cite{henkin-1950,benzmueller-miller-2014}, in which the universes used to
interpret functions need only contain
the functions that can be expressed as terms. Since the terms of \hbox{$\lambda$-free}
higher-order logic exclude $\lambda$-abstractions, in ``\hbox{$\lambda$-free} Henkin
semantics'' the universes interpreting functions %
can be even smaller.
In that sense, our semantics resemble Henkin prestructures \cite[\Section~5.4]{leivant-1994}. %
In contrast to other higher-order logics \cite{vaananen-2019},
there are no comprehension principles, and we disallow nesting of Boolean
formulas inside terms.

\subsection{Syntax}

We fix a set $\Sigma_\mathsf{ty}$ of type constructors with arities and a set
$\VV_\mathsf{ty}$ of type variables.
We require at least one nullary type constructor
and a binary type constructor ${\fun}$ to be present in $\Sigma_\mathsf{ty}$.
We inductively define a $\lambda$-free higher-order type to be
either a type variable
$\alpha\in\Vty$ or of the form $\kappa(\tuple{\tau}_n)$
for an $n$-ary type constructor $\kappa\in\Sigmaty$ and types $\tuple{\tau}_n$.
Here and elsewhere, we write $\tuple{a}_n$ or $\tuple{a}$ to abbreviate the
tuple $(a_1,\dots,a_n)$ or product $a_1 \times \dots \times a_n$, for $n \ge 0$.
We write $\kappa$ for $\kappa()$ and $\tau\fun\upsilon$ for ${\fun}(\tau,\upsilon)$.
A \relax{type declaration} is
an expression of the form $\forallty{\tuple{\alpha}_m}\tau$ (or simply $\tau$
if $m=0$), where all type variables occurring in $\tau$ belong to~$\tuple{\alpha}_m$.

We fix a set
$\Sigma$ of symbols with type declarations,
written as $\cst{f}\oftypedecl\forallty{\tuple{\alpha}_m}\tau $ or~$\cst{f}$,
and a set $\VV$ of typed variables, written as
$\var{x}\oftype\tau$ or $\var{x}$.
We require $\Sigma$ to contain
a symbol with type declaration $\forallty{\alpha}\alpha$,
to ensure that the (Herbrand) domain of every type is nonempty.
The sets $(\Sigmaty,\Sigma)$ form the logic's signature.
We reserve the letters $s, t, u, v, w$ for terms and
$x, y, z$ for variables and write ${\oftype}\;\tau$
to indicate their type. The set of \hbox{$\lambda$-free} higher-order
terms is defined inductively as
follows. Every variable $x\oftype\tau\in\VV$ is a term. 
If $\cst{f}\oftypedecl
\forallty{\tuple{\alpha}_m}\tau$ is a symbol and
$\tuple{\upsilon}_m$ are types,
then $\cst{f}\typeargs{\tuple{\upsilon}_m}\oftype\tau\{\tuple{\alpha}_m \mapsto \tuple{\upsilon}_m\}$ 
is a term. If
$t\oftype\tau\fun\upsilon$ and $u\oftype\tau$, then $t\;u\oftype\upsilon$ is a term,
called an \emph{application}. Nonapplication terms are called
\emph{heads}.
Application is left-associative, and correspondingly
the function type constructor $\fun$ is right-associative.
Using the spine notation \cite{cervesato-pfenning-2003}, terms can be
decomposed in a unique way as a head $t$ applied to zero or more
arguments: $t\>s_1\dots s_n$ or $t\>\tuple{s}_n$ (abusing notation).
A term is \emph{ground} if it is built without using type or term variables.

Substitution and unification are generalized in the obvious way, without the
difficulties caused by $\lambda$-abstractions.
A substitution has the form $\{\tuple{\alpha}_m, \tuple{x}_n
\mapsto \tuple{\upsilon}_m, \tuple{s}_n\}$, where each $x_{\!j}$ has type $\tau_{\!j}$
and each $s_{\!j}$ has type $\tau_{\!j}\{\tuple{\alpha}_m \mapsto \tuple{\upsilon}_m\}$,
mapping $m$~type variables to $m$~types and $n$~term variables to $n$~terms.
A unifier of two terms $s$ and $t$ is a substitution $\rho$ such that $s\rho = t\rho$.
A most general unifier $\mgu(s,t)$ of two terms $s$ and $t$ is
a unifier $\sigma$ of $s$ and $t$ such that for every other unifier $\theta$, there exists a
substitution $\rho$ such that $\alpha\theta = \alpha\sigma\rho$ and $x\theta = x\sigma\rho$ 
for all $\alpha\in\Vty$ and all $x\in\VV$.
As in first-order logic, the most general unifier is unique up to variable renaming.
For example, $\mgu(x\>\cst{b}\>z,\>\cst{f}\>\cst{a}\>y\>\cst{c}) =
\{x\mapsto\cst{f}\>\cst{a}{,}\>y\mapsto \cst{b}{,}\> z\mapsto \cst{c}\}$,
and $\mgu(y\>(\cst{f}\>\cst{a}),\>\cst{f}\>(y\>\cst{a})) = \{y\mapsto\cst{f}\}$,
assuming that the types of the unified subterms are equal.

An equation $s \eq t$ is formally an unordered pair of terms $s$ and $t$ of the same type. A
literal is an equation or a negated equation, written $s\noteq t$. A clause $L_1 \lor \dots \lor L_n$ is a finite multiset of literals
$L_j$. The empty clause is written as $\bot$.

\subsection{Semantics}

A \emph{type interpretation} $\IIIty = (\UU, \IIty)$ is defined as follows.
The set $\UU$ is a nonempty collection of nonempty sets, called
\emph{universes}.
The function $\IIty$ associates a function
$\IIty(\kappa) : \UU^n \rightarrow \UU$
with each $n$-ary type constructor~$\kappa$.
A \emph{type valuation} $\xi$ is a function that maps every type variable to a universe.
The \emph{denotation} of a type for a type interpretation $\IIIty$
and a type valuation $\xi$ is defined by
$\interpret{\alpha}{\IIIty}{\xi}=\xi(\alpha)$ and
$\interpret{\kappa(\tuple{\tau})}{\IIIty}{\xi}=
\IIty(\kappa)(\interpret{\tuple{\tau}}{\IIIty}{\xi})$.
Here and elsewhere, we abuse notation by applying an operation on a tuple when it must be applied elementwise;
thus, $\interpret{\tuple{\tau}_n}{\IIIty}{\xi}$ stands for $\interpret{\tau_1}{\IIIty}{\xi},\dots, \interpret{\tau_n}{\IIIty}{\xi}$.

A type valuation $\xi$ can be extended to be a \emph{valuation} by additionally
assigning an element $\xi(x)\in\interpret{\tau}{\IIIty}{\xi}$ to each variable $x \oftype \tau$.
An \emph{interpretation function} $\II$ for a type interpretation $\IIIty$
associates with each symbol
$\cst{f}\oftypedecl\forallty{\tuple{\alpha}_m}\tau$ and universe tuple $\tuple{\U}_m\in\UU^m$
a value
$\II(\cst{f},\tuple{\U}_m) \in \interpret{\tau}{\IIIty}{\xi}$,
where $\xi$ is the type valuation that maps each $\alpha_i$ to $\U_i$.
Loosely following Fitting~\cite[\Section~2.5]{fitting-2002},
an \emph{extension function} $\EE$
associates to any pair of universes $\U_1,\U_2\in\UU$ a
function
$\EE_{\U_1,\U_2} : \IIty({\fun})(\U_1,\U_2)\rightarrow\allowbreak
(\U_1 \rightarrow \U_2)$.
Together, a type interpretation, an interpretation function, and an extension function
form an \emph{interpretation} $\III=(\UU,\IIty,\II,\EE)$.

An interpretation is \emph{extensional} if $\EE_{\U_1,\U_2}$ is
injective for all $\U_1, \U_2$.
Both intensional and extensional logics are widely used for interactive
theorem proving; for example, Coq's
calculus of inductive constructions is intensional \cite{bertot-casteran-2004},
whereas Isabelle/HOL is extensional \cite{nipkow-et-al-2002}.
The semantics is \emph{standard} if
$\EE_{\U_1,\U_2}$ is bijective for all $\U_1, \U_2$.

For an interpretation $\III = (\UU,\IIty,\II,\EE)$ and a valuation $\xi$, 
the denotation of a term is defined
as follows:
For variables $x$, let $\interpretaxi{x} \defeq \xi(x)$.
For symbols $\cst{f}$, let
$\interpretaxi{\cst{f}\typeargs{\tuple{\tau}} } \defeq \II(\cst{f},\interpret{\tuple{\tau}}{\IIIty}{\xi})$.
For applications $s\;t$ of a term $s \oftype \tau \fun \upsilon$
to a term $t\oftype \tau$, let
$\U_1 = \interpret{\tau}{\IIIty}{\xi}$,
$\U_2 = \interpret{\upsilon}{\IIIty}{\xi}$, and
$\interpretaxi{s\;t} \defeq \EE_{\U_1,\U_2}(\interpretaxi{s}) (\interpretaxi{t})$.
If $t$ is a ground term, we also write $\interpreta{t}$ for the denotation of $t$ because
it does not depend on the valuation.

An equation $s\eq t$ is true in $\III$ for $\xi$ if $\interpretaxi{s} = \interpretaxi{t}$;
otherwise, it is false.
A disequation $s\noteq t$ is true if $s\eq t$ is false.
A clause is true if at least one of its literals is true.
The interpretation $\III$ is a model of a clause $C$,
written $\III \models C$, if
$C$ is true in $\III$ for all valuations~$\xi$.
It is a model of a set of clauses if it is a model of all contained clauses.

For example, given the signature
$(\{\kappa,\fun\},\{\cst{a}\oftype\kappa\})$
and a variable $h:\kappa\fun\kappa$,
the clause $\var{h}\;\cst{a}\noteq\cst{a}$ has
an extensional model with $\UU = \{\U_1,\U_2\}$, $\U_1=\{a,b\}$
($a \not= b$), $\U_2=\{f\}$,
$\IIty(\kappa) = \U_1$, $\IIty({\fun})(\U_1, \U_1) = \U_2$,
$\II(\cst{a})=a$,
$\EE_{\U_1,\U_1}(f)(a) = \EE_{\U_1,\U_1}(f)(b) = b$.

\section{The Calculi}
\label{sec:the-calculi}

We introduce four versions of the \emph{Boolean-free $\lambda$-free higher-order
superposition calculus}, articulated along two axes:\ intentional versus
extensional, and nonpurifying versus purifying. To avoid repetitions, our
presentation unifies them into a single framework.

\subsection{The Inference Rules}
\label{ssec:the-inference-rules}

To support nonmonotonic term orders, we
restrict superposition inferences to \emph{green subterms},
which are defined inductively as follows:
\begin{definitionx}[Green subterms and contexts] \label{def:green-subterm}
A term $t'$ is a \emph{green subterm} of $t$ if
$t = t'$ or if
$t = s\;\tuple{u}$
and $t'$ is a green subterm of $u_i$ for some $i$.
We write $\fosubterm{s}{u}$ to indicate that
the subterm $u$ of $\hosubterm{s}{u}$ is a green subterm.
Correspondingly, we call the context $\fosubterm{s}{\phantom{.}}$ around a green subterm a \emph{green context}.
\end{definitionx}
By this definition, $\cst{f}$ and $\cst{f}\;\cst{a}$ are subterms of
$\cst{f}\;\cst{a}\;\cst{b}$, but not green subterms.
The green subterms of $\cst{f}\;\cst{a}\;\cst{b}$ are $\cst{a}$, $\cst{b}$, and $\cst{f}\;\cst{a}\;\cst{b}$.
Thus, $[\phantom{i}]\;\cst{a}\;\cst{b}$ and $[\phantom{i}]\;\cst{b}$ are not
green contexts, but $\cst{f}\;[\phantom{i}]\;\cst{b}$, $\cst{f}\;\cst{a}\;[\phantom{i}]$,
and $[\phantom{i}]$ are.

The calculi are parameterized by a partial order $\succ$ on terms
that
\begin{itemize}
  \item is well founded on ground terms;
  \item is total on ground terms;
  \item has the subterm property on ground terms;
  \item is \emph{compatible with green contexts} on ground terms: 
    $t' \succ t$ implies $\fosubterm{s}{t'} \succ \fosubterm{s}{t}$;
  \item is stable under
  grounding substitutions: 
  $t \succ s$ implies $t\theta \succ s\theta$ for all
  substitutions~$\theta$ grounding $t$ and $s$.
\end{itemize}
The order need not be \emph{compatible with arguments}:\
$s' \succ s$ need not imply $s'\>\relax{t} \succ s\;\relax{t}$,
even on ground terms.
The literal and clause orders %
are defined from $\succ$ as multiset extensions in the
standard way \cite{bachmair-ganzinger-1994}.
Despite their names, the term, literal, and clause orders need not be
transitive on nonground entities.

The $\lambda$-free higher-order generalizations of LPO
\cite{blanchette-et-al-2017-rpo} and KBO~\cite{becker-et-al-2017} as well as EPO
\cite{bentkamp-2018-epo-formalization} fulfill these requirements, with the caveat that they
are defined on untyped terms. To use them on polymorphic terms, we can
encode type arguments as term arguments and ignore the remaining type
information.

Literal selection is supported.
The selection function maps each clause $C$ to a subclause of $C$
consisting of negative literals.
A literal $L$ is (\emph{strictly}) \emph{eligible} \wrt\ a substitution $\sigma$ in $C$ if it is
selected in $C$ or there are no selected literals in $C$ and $L\sigma$ is
(strictly) maximal in $C\sigma.$
If $\sigma$ is the identity substitution, we leave it implicit.

The following four rules are common to all four calculi.
We regard positive and negative superposition as two cases
of the same rule
\[\namedinference{Sup}
{\overbrace{D' \mathrel\lor { t \eq t'}}^{\vphantom{\cdot}\smash{D}} \hypsep
 \overbrace{C' \mathrel\lor s\leftfosubterm u\rightfosubterm \doteq s'}^{\smash{C}}}
{(D' \mathrel\lor C' \mathrel\lor s\leftfosubterm t'\rightfosubterm \doteq s')\sigma}\]
where $\doteq$ denotes $\eq$ or $\noteq$ and the following conditions are fulfilled:

\begin{enumerate}
  \item[1.] $\sigma = \mgu(t,u)$;%
  \hfill 2.\enskip $t\sigma \not\preceq t'\sigma$;%
  \hfill 3.\enskip $s\leftfosubterm u\rightfosubterm\sigma \not\preceq s'\sigma$ ;%
  \hfill\hbox{}
  \item[4.] $t \eq t'$ is strictly eligible \wrt\ $\sigma$ in $D$;%
  \hfill 5.\enskip $C\sigma \not\preceqC D\sigma$;%
  \hfill\hbox{} 
  \item[6.] $s\leftfosubterm u\rightfosubterm \doteq s'$ is eligible \wrt\ $\sigma$ in $C$
  and, if positive, even strictly eligible;%
  \item[7.] the \emph{variable condition} must hold, which 
   is specified individually for each calculus below.
\end{enumerate}
In each calculus, we will define the variable condition 
to coincide with the condition ``$u \not\in \VV$'' if the premises are first-order.

The equality resolution and equality factoring rules are almost identical to
their standard counterparts:
\[
\namedinference{ERes}
{\overbrace{C' \mathrel\lor {s \noteq s'}}^C}
{C'\sigma}
\]
where
 \begin{enumerate}
  \item[1.] $\sigma = \mgu(s,s')$;%
  \hfill 2.\enskip $s \noteq s'$ is eligible \wrt\ $\sigma$ in $C$.%
  \hfill\hbox{}
 \end{enumerate}
\medskip

\[
 \namedinference{EFact}
{\overbrace{C' \mathrel\lor {s'} \eq t' \mathrel\lor {s} \eq t}^C}
{(C' \mathrel\lor t \noteq t' \mathrel\lor s \eq t')\sigma}
\]
where
 \begin{enumerate}
   \item[1.] $\sigma = \mgu(s,s')$;%
  \hfill 2.\enskip $s\sigma \not\preceq t\sigma$;%
  \hfill 3.\enskip $s \eq t$ is eligible \wrt\ $\sigma$ in $C$.%
  \hfill\hbox{}
 \end{enumerate}
\medskip

The following \emph{argument congruence} rule compensates for the limitation
that the superposition rule applies only to green subterms:
\[\namedinference{ArgCong}
{\overbrace{C' \mathrel\lor s \eq s'}^C}
{C'\sigma \lor s\sigma\;\tuple{x} \eq s'\sigma\>\tuple{x}}\]
where
\begin{enumerate}
  \item[1.] $s \eq s'$ is strictly eligible \wrt\ $\sigma$ in $C$;%
  \item[2.] $\tuple{x}$ is a nonempty tuple of distinct fresh variables;%
  \item[3.] $\sigma$ is the most general type substitution that ensures
  well-typedness of the conclusion.
\end{enumerate}
In particular, if $s$ takes $m$ arguments,
there are $m$ \infname{ArgCong} conclusions for this literal,
for which $\sigma$ is the identity and $\tuple{x}$ is a tuple of 1, \dots, $m -1$, or $m$ variables.
If the result type of $s$ is a type variable, we have in addition infinitely many
\infname{ArgCong} conclusions,
for which $\sigma$ instantiates the type variable in the result type of $s$
with $\tuple{\alpha}_k \fun \beta$
for some $k>0$ and fresh type variables $\tuple{\alpha}_k$ and $\beta$ and
for which $\tuple{x}$ is a tuple of $m + k$ variables.
In practice, the enumeration of the infinitely many conclusions 
must be interleaved with other inferences via some form of dovetailing.

For the \textbf{intensional nonpurifying}
calculus, the variable condition
of the \infname{Sup} rule is as follows:
\begin{quote}
Either $u \notin \VV$ or there exists a grounding substitution~$\theta$ with $t\sigma\theta \succ t'\sigma\theta$
and $C\sigma\theta \precC C\{u\mapsto t'\}\sigma\theta$.
\end{quote}
This condition generalizes the standard condition that $u \notin \VV\!$.
The two coincide if $C$ is first-order or if the term order is monotonic. In some cases
involving nonmonotonicity, the variable condition effectively mandates
\infname{Sup} inferences at variable positions of the right premise, but never below.
We will call theses inferences \emph{at variables}.

For the \textbf{extensional nonpurifying} calculus,
the variable condition uses the following definition.
\begin{definitionx} \label{def:jell}
  A term of the form $x\;\tuple{s}_n$, for $n \ge 0$, \emph{jells} with a literal $t \eq t' \in D$ if
  $t=\tilde{t}\;\tuple{y}_n$ and $t'=\tilde{t}\vthinspace'\>\tuple{y}_n$
  for some terms $\tilde{t}$, $\tilde{t}\vthinspace'$ and distinct variables
  $\tuple{y}_n$ that do not occur elsewhere in $D.$
\end{definitionx}%
\noindent
Using the naming
convention from Definition~\ref{def:jell} for $\tilde{t}\vthinspace'$,
the variable condition can be stated as follows:
\begin{quote}
  If $u$ has a variable head $x$ and jells with the literal $t \eq t' \in D$,
  there must exist a grounding substitution~$\theta$ with $t\sigma\theta \succ t'\sigma\theta$ and
  $C\sigma\theta \precC C''\sigma\theta$, where $C'' = C\{x\mapsto \tilde{t}\vthinspace'\}$.
\end{quote}
If $C$ is first-order, this amounts to $u \notin \VV\!$.
Since the order is compatible with green contexts,
the substitution $\theta$ can exist only if $x$ occurs applied in $C$.

Moreover, the extensional nonpurifying calculus has one additional rule,
the positive extensionality rule, and one axiom, the extensionality axiom.
The rule is
    \[\namedinference{PosExt}
    {C' \mathrel\lor s\;\tuple{x} \eq s'\>\tuple{x}}
    {C' \mathrel\lor s \eq s'}\]
where 
\begin{enumerate}
\item[1.] $\tuple{x}$ is a tuple of distinct variables that do not occur in $C'$, $s$, or $s'$
\item[2.] $s\;\tuple{x} \eq s'\>\tuple{x}$ is strictly eligible in the premise.
\end{enumerate}

The extensionality axiom uses a
polymorphic
Skolem symbol~$\diff
  : \forallty{\alpha,\beta}{(\alpha\fun\beta)}^2\fun{\alpha}$ characterized by the axiom
\begin{align*}
  x\;(\diff\typeargs{\alpha,\beta}\;x\;y )
  \noteq y\>(\diff\typeargs{\alpha,\beta}\;x\;y ) \llor x \eq y
  \tag{\infname{Ext}}
\end{align*}

Unlike the nonpurifying calculi, the purifying calculi never perform superposition at variables.
Instead, they rely on purification
\cite{Brand1975,DigricoliHarrison1986,SchmidtSchauss1989,SnyderLynch1991}
(also called abstraction) to circumvent nonmonotonicity.
The idea
is to rename apart problematic occurrences of a variable~$x$ in
a clause to $x_1, \dots, x_n$ and to add \emph{purification literals}
$x_1 \noteq x$, \dots, $x_n \noteq x$ to connect the new variables to $x$.
We must then ensure
that all clauses are purified, by processing the initial clause set
and the conclusion of every inference or simplification.

In the \textbf{intensional purifying} calculus, the purification
$\pureext(C)$ of clause~$C$ is defined as the result of the following procedure.
Choose a variable $x$ that occurs applied in $C$ and also unapplied in
a literal of $C$ that is not of the form $x \noteq y$.
If no such variable exists, terminate.
Otherwise, replace all unapplied occurrences of
$x$ in $C$ by a fresh variable $x'$ and add the purification literal $x'\noteq x$.
Then repeat these steps with another variable.
The procedure terminates because the number of variables that can be chosen reduces with each step.
For example,
\[\pureint(x\;\cst{a} \eq x\;\cst{b} \mathrel\lor \cst{f}\;x \eq \cst{g}\;x)
\vthinspace=\vthinspace x\;\cst{a} \eq x\;\cst{b} \mathrel\lor \cst{f}\;x' \eq \cst{g}\;x' \mathrel\lor x \noteq x'\]
The variable condition is standard:
\begin{quote}
The term $u$ is not a variable.
\end{quote}
The conclusion $C$ of \infname{ArgCong} is changed to $\pureint(C)$; the other
rules preserve purity of their premises.

In the \textbf{extensional purifying} calculus, $\pureint(C)$ is defined
as follows.
Choose a variable~$x$ occurring in green subterms
$x\;\tuple{u}$ and $x\;\tuple{v}$ in literals of $C$
that are not of the form $x \noteq y$,
where $\tuple{u}$ and $\tuple{v}$ are distinct
(possibly empty) term tuples.
If no such variable exists, terminate.
Otherwise, replace all green subterms $x\;\tuple{v}$ with
$x'\>\tuple{v}$, where $x'$ is fresh, and add the purification literal $x'\noteq x$.
Then repeat the procedure until no variable fulfilling the requirements is left.
The procedure terminates because the number of variables fulfilling the requirements
does not increase and with each step, the number of distinct tuples
$\tuple{u}$ and $\tuple{v}$ fulfilling the requirements decreases for the chosen variable $x$.
For example,
\[\pureext(x\;\cst{a} \eq x\;\cst{b} \mathrel\lor \cst{f}\;x \eq \cst{g}\;x)
\vthinspace=\vthinspace x\;\cst{a} \eq x'\;\cst{b} \mathrel\lor \cst{f}\;x''
\eq \cst{g}\;x'' \mathrel\lor x' \noteq x \mathrel\lor x'' \noteq x\]
Like the extensional nonpurifying calculus, this calculus also
contains the \infname{PosExt} rule and axiom (\infname{Ext}) introduced above.
The variable condition is as follows: 
\begin{quote}
Either $u$ has a
nonvariable head or $u$ does not jell with the literal $t \eq t' \in D.$
\end{quote}
The conclusion $E$ of each rule is changed to $\pureext(E)$, except for
\infname{PosExt}, which preserves purity.

Finally, we impose further restrictions on literal selection.
In the nonpurifying calculi, a literal must not be
selected if $x\;\tuple{u}$ is a maximal term of the clause and the literal
contains a green subterm $x\;\tuple{v}$ with $\tuple{v}\neq\tuple{u}$.
In the purifying calculi, a literal must not be selected if it
contains a variable of functional type.
These restrictions are
needed in our completeness proof to show that superposition inferences below variables are redundant. 
For the purifying calculi, the restriction is also needed to avoid inferences 
whose conclusion is identical to their premise, such as
\[\namedinference{\infname{ERes}}
{z\>\cst{a} \eq \cst{b} \llor z \noteq x \llor \cst{f} \noteq x}
{z\>\cst{a} \eq \cst{b} \llor z' \noteq \cst{f} \llor z' \noteq z}\]
where $\cst{f} \noteq x$ is selected.
For the nonpurifying calculi, it might be possible to avoid the restriction at
the cost of a more elaborate argument.

\begin{examplex}
We illustrate the different variable condition with a few examples.
Consider a left-to-right LPO \cite{blanchette-et-al-2017-rpo} instance
with precedence $\cst{h} \succ \cst{g} \succ \cst{f} \succ \cst{c} \succ \cst{b} \succ
\cst{a}$. 
Given the clauses $D$ and $C$ of a superposition inference in which the
grayed
subterms are unified, we list below whether each calculus's
variable condition is fulfilled (\cmark) or not.

\vskip\abovedisplayskip

\centerline{%
\begin{tabular}{ c c c c c c c }
\toprule
$D$ & $C$ & 
\small\shortstack{\vphantom{X$^X$}intensional\\nonpurifying} & 
\small\shortstack{\vphantom{X}extensional\\nonpurifying} & 
\small\shortstack{\vphantom{X}intensional\\purifying} & 
\small\shortstack{\vphantom{X}extensional\\purifying}\\
\midrule
$\unified{\cst{h}} \eq \cst{g}$ &
$\cst{f}\>\unified{y} \eq \cst{c}$ 
& \xmark & \xmark 
& \xmark & \xmark \\[\jot]
$\unified{\cst{f}\;(\cst{h}\;\cst{a})} \eq \cst{h}$ &
$\cst{g}\;\unified{\var{y}} \eq \var{y}\;\cst{b}$
& \cmark %
& \cmark %
& \xmark %
& \xmark %
\\[\jot]
$\unified{\cst{h}\;\var{x}} \eq \cst{f}\>\var{x}$
& $\cst{g}\;(\unified{\var{y}\;\cst{b}})\;\var{y} \eq \cst{a}$ 
& \cmark %
& \xmark %
& \cmark %
& \xmark %
\\[\jot]
$x\eq\cst{c} \llor \unified{\cst{h}\;\var{x}} \eq \cst{f}\>\var{x}$
& $\cst{g}\;(\unified{\var{y}\;\cst{b}})\;\var{y} \eq \cst{a}$ 
& \cmark %
& \cmark %
& \cmark %
& \cmark %
\\[\jot]
  \bottomrule
\end{tabular}}

\vskip\belowdisplayskip

\noindent
For the purifying calculi,
the clauses would actually undergo purification first,
but this has no impact on the variable conditions.
In the last row, the term $y\>\cst{b}$ does not jell with $\cst{h}\>x \eq\cst{f}\>x \in D$ because $x$ occurs also in the first literal of $D$.
\end{examplex}

\begin{remarkx}
In descriptions of first-order logic with equality, the property
$y \eq y' \medrightarrow \cst{f}(\bar x, y, \bar z) \eq \cst{f}(\bar x, y'\!, \bar z)$
is often referred to as ``function congruence.'' It seems natural to use the
same label for the higher-order version
$t \eq t' \medrightarrow s\;t \eq s\;t'$
and to call the companion property
$s \eq s' \medrightarrow s\;t \eq s'\>t$
``argument congruence,'' whence the name \infname{ArgCong} for our inference
rule. This nomenclature is far from universal; for example, the Isabelle/HOL
theorem \textit{fun\_cong} captures argument congruence and \textit{arg\_cong}
captures function congruence.
\end{remarkx}

\subsection{Rationale for the Inference Rules}
\label{ssec:rationale-for-the-inference-rules}

A key restriction of all four calculi is that they superpose only at
green subterms, %
mirroring the term order's compatibility with green contexts. The
\infname{ArgCong} rule then makes it possible to simulate superposition at
nongreen subterms. However, in conjunction with the \infname{Sup} rules,
\infname{ArgCong} can exhibit an unpleasant behavior, which we call
\emph{argument congruence explosion}:
\begin{align*}
\prftree[l]{\small{\infname{Sup}}}{
  \prftree[l]{\small{\infname{ArgCong}}}
  {\strut\cst{g}\eq \cst{f}}
  {\strut\cst{g}\;\var{x} \eq \cst{f}\>\var{x} }
}
{\strut\var{h}\;\cst{a} \noteq \cst{b}}
{\strut\cst{f}\;\cst{a} \noteq \cst{b}}
\rulesep
\prftree[l]{\small{\infname{Sup}}}{
  \prftree[l]{\small{\infname{ArgCong}}}
  {\strut\cst{g}\eq \cst{f}}
  {\strut\cst{g}\;\var{x}\;\var{y}\;\var{z} \eq \cst{f}\>\var{x}\;\var{y}\;\var{z} }
}
{\strut\var{h}\;\cst{a} \noteq \cst{b}}
{\strut\cst{f}\>\var{x}\;\var{y}\;\cst{a} \noteq \cst{b}}
\end{align*}
In both derivation trees, the higher-order variable $\var{h}$ is effectively the target
of a \infname{Sup} inference. Such derivations essentially amount to
superposition at variable positions (as shown on the left) or even
superposition below variable positions (as shown on the right), both of which
can be extremely prolific. In standard superposition, the explosion
is averted by the condition on the \infname{Sup} rule that $u \notin
\VV\negvthinspace.$
In the extensional purifying calculus, the variable condition
tests that either $u$ has a nonvariable head or
$u$ does not jell with the literal $t \eq t' \in D,$
which %
prevents derivations such as the above.
In the corresponding nonpurifying calculus, some such derivations may need to
be performed when the term order exhibits nonmonotonicity for the terms of
interest.

In the intensional calculi, the explosion can arise because the variable conditions are weaker.
The following example shows that the intensional nonpurifying calculus would be incomplete
if we used the variable condition of the extensional nonpurifying calculus.

\begin{examplex}
Consider a left-to-right LPO \cite{blanchette-et-al-2017-rpo} instance
with precedence $\cst{h} \succ \cst{g} \succ \cst{f} \succ \cst{b} \succ
\cst{a}$, and consider the following unsatisfiable clause set:
\begin{align*}
\cst{h}\;\var{x} & \eq \cst{f}\>\var{x}
& \cst{g}\;(\var{x}\;\cst{b})\;\var{x} & \eq \cst{a}
& \cst{g}\;(\cst{f}\;\cst{b})\;\cst{h} & \noteq \cst{a}
\end{align*} %
The only possible inference is a \infname{Sup} inference of the first into the second clause,
but the variable condition of the extensional nonpurifying calculus is not met.
\end{examplex}

\noindent
It is unclear whether the variable condition
of the intensional purifying calculus could be strengthened,
but our completeness proof suggests that it cannot.

The variable conditions in the extensional calculi are
designed to prevent the argument congruence explosion shown above, but since they
consider only the shape of the clauses, they might also block \infname{Sup}
inferences whose side premises do not originate from \infname{ArgCong}.
This is why we need the \infname{PosExt} rule.
\begin{examplex}
In the following unsatisfiable clause set, the only possible inference from these clauses
in the extensional nonpurifying calculus is \infname{PosExt}, showing its
necessity:
\begin{align*}
\cst{g}\;\var{x} & \eq \cst{f}\>\var{x}
& \cst{g} & \noteq \cst{f}
& x\;(\diff\typeargs{\alpha,\beta}\;x\;y)
\noteq y\>(\diff\typeargs{\alpha,\beta}\;x\;y) \llor x \eq y
\end{align*} %
The same argument applies for the purifying calculus with the difference that
the third clause must be purified.
\end{examplex}

\noindent
Due to nonmonotonicity, for refutational completeness we need either to purify the clauses
or to allow some superposition at variable positions, as mandated by the respective
variable conditions. Without either of these measures, at least the extensional
calculi and presumably also the intensional calculi would be incomplete, as the next example demonstrates.

\begin{examplex}
Consider the following clause set:
\begin{equation*}
\begin{gathered}
\cst{k}\;(\cst{g}\;\var{x}) \eq \cst{k}\;(\var{x}\;\cst{b}) \quad
\cst{k}\;(\cst{f}\;(\cst{h}\;\cst{a})\;\cst{b}) \noteq \cst{k}\;(\cst{g}\;\cst{h}) \quad
\cst{f}\;(\cst{h}\;\cst{a}) \eq \cst{h} \quad
\cst{f}\;(\cst{h}\;\cst{a})\;\var{x} \eq \cst{h}\;\var{x}\\
x\;(\diff\typeargs{\alpha,\beta}\;x\;y)
\noteq y\>(\diff\typeargs{\alpha,\beta}\;x\;y) \llor x \eq y
\end{gathered}
\end{equation*}%
Using a left-to-right LPO \cite{blanchette-et-al-2017-rpo} instance
with precedence
$\cst{k} \succ \cst{h} \succ \cst{g} \succ \cst{f} \succ \cst{b} \succ \cst{a}$,
this clause set is saturated \wrt\  the extensional purifying calculus when omitting
purification. It also quickly saturates using the extensional
nonpurifying calculus when omitting \infname{Sup} inferences at variables.
By contrast, the intensional calculi derive $\bot$, even
without purification and without \infname{Sup} inferences at variables, because of the less restrictive variable conditions.
\end{examplex}

\noindent
This raises the question as to whether the intensional calculi actually need to purify
or to perform \infname{Sup} inferences at variables. We conjecture that omitting purification and
\infname{Sup} inferences at variables in the intensional calculi is complete when redundant
clauses are kept but that it is incomplete in general.

We initially considered inference rules instead of axiom (\infname{Ext}).
However, we did not find a set of inference rules that is complete and leads to
fewer inferences than (\infname{Ext}). We considered the $\infname{PosExt}$
rule described above in combination with the following rule:
\begin{align*}
&\namedinference{NegExt}{C \llor s \noteq t}
{C \llor s\>(\cst{sk}\typeargs{\tuple{\alpha}}\>\tuple{x}_n)
\noteq t\>(\cst{sk}\typeargs{\tuple{\alpha}}\>\tuple{x}_n)}\end{align*}
where $\cst{sk}$ is a fresh Skolem symbol and
$\tuple{\alpha}$ and $\tuple{x}_n$ are the type and term variables
occurring free in the literal $s \noteq t$.
However, these two rules do not suffice for a refutationally complete
calculus, as the following example demonstrates:

\begin{examplex}
Consider the clause set
\begin{align*}
&\cst{f}\;x\eq \cst{a}&
&\cst{g}\;x\eq \cst{a}&
&\cst{h}\;\cst{f}\eq \cst{b}&
&\cst{h}\;\cst{g}\noteq \cst{b}
\end{align*}
Assuming that all four equations are oriented from left to right, this set is
saturated \wrt\ the extensional calculi if (\infname{Ext})
is replaced by \infname{NegExt}; yet it is unsatisfiable in an extensional
logic.
\end{examplex}

\begin{examplex}
\label{ex:add-l-r}
A significant advantage of our calculi over the use of standard superposition
on applicatively encoded problems is the flexibility they offer in orienting
equations. The following equations provide two definitions of addition on Peano
numbers:
\begin{align*}
  \cst{add_L}\,\cst{Zero}\;\var{y} & \eq \var{y}
& \cst{add_R}\;\var{x}\;\cst{Zero} & \eq \var{x}
\\[-.5\jot] %
  \cst{add_L}\,(\cst{Succ}\;x)\;\var{y} & \eq \cst{add_L}\,x\;(\cst{Succ}\;\var{y})
& \cst{add_R}\;\var{x}\;(\cst{Succ}\;\var{y}) & \eq \cst{add_R}\;(\cst{Succ}\;\var{x})\;y
\end{align*}
Let $\cst{add_L}\,(\smash{\cst{Succ}^{100}}\;\cst{Zero})\;\cst{n} \noteq
\cst{add_R}\;\cst{n}\;(\smash{\cst{Succ}^{100}}\;\cst{Zero})$ be the negated conjecture.
With LPO, we can use a left-to-right comparison for $\cst{add_L}$'s arguments
and a right-to-left comparison for $\cst{add_R}$'s arguments to orient all
four equations from left to right. Then the negated conjecture can be simplified to
$\cst{Succ}^{100}\, \cst{n} \noteq \cst{Succ}^{100}\, \cst{n}$ by simplification
(demodulation), and $\bot$ can be derived with a single inference.
If we use the applicative encoding instead, there is no instance of LPO or KBO
that can orient both recursive equations from left to right. For at least one of
the two sides of the negated conjecture, simplification is replaced by
100~\infname{Sup} inferences, which is much less efficient, especially in the
presence of additional axioms.

\end{examplex}

\subsection{Soundness}
\label{ssec:soundness}

To show the inferences' soundness, we need the substitution lemma for our logic:

  \begin{lemmax}[Substitution lemma]
    \label{lem:subst-lemma-general}
    Let $\III = (\UU, \IIty,\II,\EE)$ be a $\lambda$-free higher-order interpretation. Then
    \[\interpret{\tau\negvthinspace\rho}{\IIIty}{\xi} = \interpret{\tau}{\IIIty}{\xi'}
    \text{ and }
    \interpret{t\rho}{\III}{\xi} = \interpret{t}{\III}{\xi'}\]
    for all terms $t$, all types $\tau$, and all substitutions
    $\rho$,
    where $\xi'(\alpha) = \interpret{\alpha\rho}{\IIIty}{\xi}$ for all type
    variables $\alpha$ and $\xi'(x) = \interpret{x\rho}{\III}{\xi}$ for all term
    variables $x$.
  \end{lemmax}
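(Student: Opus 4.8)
The plan is to prove the two identities by simultaneous reference to a single structural induction, but to settle the type-level equation first, because the denotation of an application refers, through the universes $\U_1$ and $\U_2$, to the denotations of the argument and result types. Establishing the type part first is also what makes $\xi'$ a well-defined valuation: for $x \oftype \tau$ we have $x\rho \oftype \tau\rho$, so $\xi'(x) = \interpret{x\rho}{\III}{\xi}$ lies in $\interpret{\tau\rho}{\IIIty}{\xi}$, which the type part identifies with $\interpret{\tau}{\IIIty}{\xi'}$.

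For the type equation $\interpret{\tau\rho}{\IIIty}{\xi} = \interpret{\tau}{\IIIty}{\xi'}$ I would induct on the structure of $\tau$. If $\tau = \alpha$ is a type variable, then $\interpret{\alpha}{\IIIty}{\xi'} = \xi'(\alpha) = \interpret{\alpha\rho}{\IIIty}{\xi}$ directly from the definition of $\xi'$. If $\tau = \kappa(\tuple{\tau})$, then $\tau\rho = \kappa(\tuple{\tau}\rho)$, and unfolding the type denotation on both sides reduces the goal to $\IIty(\kappa)(\interpret{\tuple{\tau}\rho}{\IIIty}{\xi}) = \IIty(\kappa)(\interpret{\tuple{\tau}}{\IIIty}{\xi'})$, which follows by applying the induction hypothesis componentwise to the tuple $\tuple{\tau}$.

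For the term equation $\interpret{t\rho}{\III}{\xi} = \interpret{t}{\III}{\xi'}$ I would induct on the structure of $t$, freely using the type part just proved. For a variable $x$, both sides equal $\interpret{x\rho}{\III}{\xi}$ by the definition of $\xi'$. For a head $\cst{f}\typeargs{\tuple{\tau}}$, we have $(\cst{f}\typeargs{\tuple{\tau}})\rho = \cst{f}\typeargs{\tuple{\tau}\rho}$, so both denotations reduce to $\II(\cst{f},\cdot)$ evaluated at $\interpret{\tuple{\tau}\rho}{\IIIty}{\xi}$ and $\interpret{\tuple{\tau}}{\IIIty}{\xi'}$ respectively, which coincide by the type part. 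For an application $s\;u$ with $s \oftype \tau \fun \upsilon$ and $u \oftype \tau$, I would unfold the left-hand denotation to $\EE_{\U_1,\U_2}(\interpret{s\rho}{\III}{\xi})(\interpret{u\rho}{\III}{\xi})$ and the right-hand one to $\EE_{\U_1,\U_2}(\interpret{s}{\III}{\xi'})(\interpret{u}{\III}{\xi'})$, where $\U_1$ and $\U_2$ are the common denotations of $\tau$ and $\upsilon$; the two argument denotations then agree by the induction hypothesis on $s$ and $u$.

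The step that needs care, and the main obstacle, is exactly this application case: the extension function must carry the same index pair $\U_1, \U_2$ on both sides. On the left, $\EE$ is indexed by $\interpret{\tau\rho}{\IIIty}{\xi}$ and $\interpret{\upsilon\rho}{\IIIty}{\xi}$, whereas on the right it is indexed by $\interpret{\tau}{\IIIty}{\xi'}$ and $\interpret{\upsilon}{\IIIty}{\xi'}$. These pairs coincide precisely by the type part applied to $\tau$ and $\upsilon$, so the two occurrences of $\EE$ are literally the same function and the induction hypotheses on the two arguments can be combined. Everything else is a routine unfolding of the denotation definitions.
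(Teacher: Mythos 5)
Your proposal is correct and follows essentially the same route as the paper's proof: the type equation first by induction on $\tau$, then the term equation by induction on $t$, with the application case resolved by noting that the type part makes the index pair $\U_1,\U_2$ of $\EE$ agree on both sides. The additional observation about well-definedness of $\xi'$ is a sound refinement that the paper leaves implicit.
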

  \begin{proof}
    First, we prove that $\interpret{\tau\negvthinspace\rho}{\IIIty}{\xi} =
    \interpret{\tau}{\IIIty}{\xi'}$ by induction on the structure of $\tau$.
    If $\tau = \alpha$ is a type variable,
    \[\interpret{\alpha\rho}{\IIIty}{\xi} = \xi'(\alpha) = \interpret{\alpha}{\IIIty}{\xi'}\]
    If $\tau = \kappa(\tuple{\upsilon})$ for some type constructor $\kappa$ and types $\tuple{\upsilon}$,
    \[\interpret{\kappa(\tuple{\upsilon})\rho}{\IIIty}{\xi}
    = \IIty(\kappa)(\interpret{\tuple{\upsilon}\rho}{\IIIty}{\xi})
    \eqIH
      \IIty(\kappa)(\interpret{\tuple{\upsilon}}{\IIIty}{\xi'})
      = \interpret{\kappa(\tuple{\upsilon})}{\IIIty}{\xi'}\]

    Next, we prove $\interpret{t\rho}{\III}{\xi} = \interpret{t}{\III}{\xi'}$
    by structural induction on $t$.	
    If~$t = y$, then by the definition of the denotation of a variable
    \[\interpret{y\rho}{\III}{\xi} = \xi'(y) = \interpret{y}{\III}{\xi'}\]
    If $t = \cst{f}\typeargs{\tuple{\tau}}$, then by the definition of the term denotation
    \[\interpret{\cst{f}\typeargs{\tuple{\tau}}\rho}{\III}{\xi} =
    \II(\cst{f},\interpret{\tuple{\tau}\rho}{\IIIty}{\xi})
    =
    \II(\cst{f},\interpret{\tuple{\tau}}{\IIIty}{\xi'}) =
    \interpret{\cst{f}\typeargs{\tuple{\tau}}}{\III}{\xi'}\]
    If $t = u\>v$, then by the definition of the term denotation
    \[\interpret{(u\>v)\rho}{\III}{\xi}
    = \EE_{U_1,U_2}(\interpret{u\rho}{\III}{\xi})(\interpret{v\rho}{\III}{\xi})
    \eqIH  \EE_{U_1,U_2}(\interpret{u}{\III}{\xi'})(\interpret{v}{\III}{\xi'})
    =  \interpret{u\>v}{\III}{\xi'}\]
    where $u$ is of type $\tau \fun \upsilon$,
    $U_1 = \interpret{\tau\negvthinspace\rho}{\IIIty}{\xi} =
    \interpret{\tau}{\IIIty}{\xi'}$, and
    $U_2 = \interpret{\upsilon\rho}{\IIIty}{\xi} =
    \interpret{\upsilon}{\IIIty}{\xi'}$.
  \end{proof}

  \begin{lemmax}\label{lem:apply-subst}
  If $\III\models C$ for some interpretation $\III$ and some clause $C$, then $\III\models C\rho$ for all substitutions $\rho$.
  \end{lemmax}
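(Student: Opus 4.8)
The plan is to prove that satisfaction of a clause is preserved under substitution, using the substitution lemma (Lemma~\ref{lem:subst-lemma-general}) as the key ingredient. The statement quantifies over all valuations, so I would unfold the definition of $\III \models C\rho$ and reduce the goal to a pointwise statement about literals under an arbitrary valuation.

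First I would fix an arbitrary valuation $\xi$ and aim to show that $C\rho$ is true in $\III$ for $\xi$. By hypothesis, $\III \models C$, which means $C$ is true in $\III$ for \emph{every} valuation. The idea is to apply this hypothesis not to $\xi$ itself but to the modified valuation $\xi'$ defined exactly as in the substitution lemma, namely $\xi'(\alpha) = \interpret{\alpha\rho}{\IIIty}{\xi}$ for all type variables $\alpha$ and $\xi'(x) = \interpret{x\rho}{\III}{\xi}$ for all term variables $x$. Since $\xi'$ is itself a valuation, the hypothesis guarantees that $C$ is true in $\III$ for $\xi'$; hence some literal $L$ of $C$ is true for $\xi'$.

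Next I would transfer the truth of $L$ for $\xi'$ to the truth of $L\rho$ for $\xi$. Consider the case where $L$ is an equation $s \eq t$; the disequation case is dual. Truth of $s \eq t$ for $\xi'$ means $\interpret{s}{\III}{\xi'} = \interpret{t}{\III}{\xi'}$. By the substitution lemma applied to the terms $s$ and $t$, we have $\interpret{s\rho}{\III}{\xi} = \interpret{s}{\III}{\xi'}$ and $\interpret{t\rho}{\III}{\xi} = \interpret{t}{\III}{\xi'}$, so these two sides are equal, which is precisely the statement that $s\rho \eq t\rho$ is true in $\III$ for $\xi$. Since $L\rho$ is a literal of $C\rho$, the clause $C\rho$ is true for $\xi$. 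As $\xi$ was arbitrary, $\III \models C\rho$.

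The argument is essentially routine bookkeeping, so there is no serious obstacle; the only point requiring minor care is the choice of the auxiliary valuation $\xi'$, whose defining equations are tailored so that the substitution lemma converts denotations under $\rho$ at $\xi$ into plain denotations at $\xi'$. Once that valuation is in place, the substitution lemma does all the work, and the equation and disequation cases follow symmetrically.
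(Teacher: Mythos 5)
Your proof is correct and follows essentially the same route as the paper's: both define the auxiliary valuation $\xi'$ from Lemma~\ref{lem:subst-lemma-general}, apply the hypothesis $\III \models C$ at $\xi'$ to obtain a true literal, and transfer its truth literal-wise to $C\rho$ at $\xi$ via the substitution lemma. Your write-up merely spells out the equation/disequation case in more detail than the paper does.
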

  \begin{proof}
  We need to show that $C\rho$ is true in $\III$ for all valuations $\xi$. Given a valuation $\xi$,
  define $\xi'$ as in Lemma~\ref{lem:subst-lemma-general}.
  Then, by Lemma~\ref{lem:subst-lemma-general}, a literal in $C\rho$ is true in $\III$ for $\xi$ if and
  only if the corresponding literal in $C$ is true in $\III$ for $\xi'$.
  There must be at least one such literal because $\III \models C$ and hence $C$ is in particular true in $\III$ for $\xi'$.
  Therefore, $C\rho$ is true in $\III$ for $\xi$.
  \end{proof}

\begin{theoremx}[Soundness of the intensional calculi]
  The inference rules \infname{Sup}, \infname{ERes},
  \infname{EFact}, and \infname{Arg\-Cong} are sound
  {\upshape(}even without the variable condition and the side conditions on order and eligibility{\upshape)}.
  \label{thm:intensional-soundness}
\end{theoremx}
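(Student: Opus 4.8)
The plan is to prove soundness rule by rule, showing that any interpretation $\III$ that is a model of all premises of an inference is also a model of its conclusion. The key tool throughout is the substitution lemma (Lemma~\ref{lem:subst-lemma-general}), which lets me commute denotation with substitution application, together with Lemma~\ref{lem:apply-subst}, which guarantees that applying a substitution $\sigma$ to premises preserves truth in a model. Since the theorem explicitly excludes the variable and ordering side conditions, I only need the purely equational content of each rule.

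First I would handle the two ``easy'' rules. For \infname{ERes}, suppose $\III \models C' \lor s \noteq s'$ and $\sigma = \mgu(s,s')$. By Lemma~\ref{lem:apply-subst}, $\III \models (C' \lor s \noteq s')\sigma$, i.e.\ $\III \models C'\sigma \lor s\sigma \noteq s'\sigma$. But $s\sigma = s'\sigma$ since $\sigma$ is a unifier, so the literal $s\sigma \noteq s'\sigma$ is false in every valuation; hence $C'\sigma$ must be true, giving $\III \models C'\sigma$. For \infname{EFact}, again apply $\sigma$ to the premise so that $s\sigma = s'\sigma$; fix a valuation $\xi$ and reason by cases on which literal of the premise is true under $\xi$. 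If $s\sigma \eq t\sigma$ holds, the conclusion literal $s\sigma \eq t'\sigma$ or $t\sigma \noteq t'\sigma$ is true by transitivity of equality of denotations; the remaining cases are a short diagram chase on $\interpretaxi{s\sigma} = \interpretaxi{s'\sigma}$ and the truth of $s'\sigma \eq t'\sigma$ or $s\sigma \eq t\sigma$.

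Next I would treat \infname{Sup}. After applying $\sigma$, the relevant equation $t\sigma \eq t'\sigma$ and the clause $C\sigma$ are true in $\III$ for each $\xi$. If one of the side clauses $D'\sigma$ or $C'\sigma$ is true, the conclusion holds trivially. Otherwise the literals $t\sigma \eq t'\sigma$ and $s\lefthosubterm u\righthosubterm\sigma \doteq s'\sigma$ are true, and since $u\sigma = t\sigma$, I can rewrite: replacing the green subterm $t\sigma$ by $t'\sigma$ inside $s$ preserves the denotation because $\interpretaxi{t\sigma} = \interpretaxi{t'\sigma}$ and the denotation of a term is determined compositionally from the denotations of its green subterms. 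This congruence step is where I expect the only genuine subtlety: I must argue that a green context behaves like an ordinary function of its hole, so equal denotations in the hole yield equal denotations of the whole term. This follows by a straightforward induction on the green context using the clause $\interpretaxi{s\;t} = \EE_{\U_1,\U_2}(\interpretaxi{s})(\interpretaxi{t})$ from the definition of the term denotation, and it is worth stating as a small auxiliary congruence fact.

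Finally, for \infname{ArgCong} I apply $\sigma$ to obtain $\III \models C'\sigma \lor s\sigma \eq s'\sigma$, fix $\xi$, and split on whether $C'\sigma$ is true. If not, then $s\sigma \eq s'\sigma$ is true, so $\interpretaxi{s\sigma} = \interpretaxi{s'\sigma}$; applying the same extension-function value to both sides gives $\interpretaxi{s\sigma\;\tuple{x}} = \EE(\interpretaxi{s\sigma})(\interpretaxi{\tuple{x}}) = \EE(\interpretaxi{s'\sigma})(\interpretaxi{\tuple{x}}) = \interpretaxi{s'\sigma\;\tuple{x}}$, using that the type substitution in condition~3 only ensures well-typedness and does not affect the equality argument. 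The main obstacle in the whole proof is thus the green-context congruence used in \infname{Sup}; once that auxiliary fact is isolated, every rule reduces to elementary reasoning about equality of denotations under a model.
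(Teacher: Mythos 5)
Your proposal is correct and follows essentially the same route as the paper's proof: apply Lemma~\ref{lem:apply-subst} to pass to the $\sigma$-instances of the premises, fix a valuation, do case distinctions on which literals are true, and close each case using the fact that $\sigma$ is a unifier together with congruence of the denotation (which the paper notes holds at all subterms by the compositional definition of $\interpretaxi{s\;t}$). You merely spell out per rule what the paper compresses into one paragraph, including isolating the green-context congruence fact that the paper treats as obvious.
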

  \begin{proof}
  We fix an inference and an interpretation $\III$
  that is a model of the premises.
  We need to show that it is also a model of the conclusion.

  From the definition of the denotation of a term, it is obvious that
  congruence holds at all subterms in our logic.
  By Lemma~\ref{lem:apply-subst}, $\III$ is a model of the $\sigma$-instances of the premises as well,
  where $\sigma$ is the substitution used for the inference.
  Fix a valuation $\xi$.
  By making case distinctions on the truth in $\III$ under $\xi$ of the literals
  of the $\sigma$-instances of the premises, using the conditions that $\sigma$ is a unifier, and applying congruence,
  it follows that the conclusion is also true in $\III$ under $\xi$.
\end{proof}

\begin{theoremx}[Soundness of the extensional calculi]
  \label{thm:extensional-soundness}
  The inference rules \infname{Sup}, \infname{ERes},
  \infname{EFact}, \infname{Arg\-Cong}, and \infname{PosExt} are sound \wrt\ extensional interpretations
  {\upshape(}even without the variable condition and the side conditions on order and eligibility{\upshape)}.
\end{theoremx}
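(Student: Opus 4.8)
The plan is to reduce most of the work to the soundness of the intensional calculi (Theorem~\ref{thm:intensional-soundness}). The rules \infname{Sup}, \infname{ERes}, \infname{EFact}, and \infname{ArgCong} are already sound with respect to \emph{all} interpretations, so in particular they remain sound with respect to extensional ones; no new argument is needed for them. Thus the only genuinely new case is the \infname{PosExt} rule, whose soundness relies crucially on extensionality. Note that, among its side conditions, we need only the freshness requirement that $\tuple{x}$ consists of distinct variables not occurring in $C'$, $s$, or $s'$; the strict eligibility condition plays no role in soundness.

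For \infname{PosExt}, I would fix an extensional interpretation $\III$ that models the premise $C' \llor s\;\tuple{x} \eq s'\>\tuple{x}$ and show that it models the conclusion $C' \llor s \eq s'$, i.e.\ that the conclusion is true in $\III$ for every valuation $\xi$. I argue by contradiction: suppose the conclusion is false under some $\xi$. Then every literal of $C'$ is false under $\xi$ and, moreover, $\interpretaxi{s} \neq \interpretaxi{s'}$.

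The heart of the argument is to propagate this disagreement through the applied variables $\tuple{x}$ using extensionality, peeling them off one at a time. Writing $\tuple{x} = (x_1, \dots, x_n)$, the terms $s$ and $s'$ share a function type $\tau_1 \fun \cdots \fun \tau_n \fun \upsilon$. Since $\interpretaxi{s} \neq \interpretaxi{s'}$, injectivity of $\EE_{\U_1,\U_2}$ (i.e.\ extensionality), with $\U_1 = \interpret{\tau_1}{\IIIty}{\xi}$ and $\U_2 = \interpret{\tau_2 \fun \cdots \fun \upsilon}{\IIIty}{\xi}$, yields an element $a_1 \in \U_1$ on which the functions $\EE_{\U_1,\U_2}(\interpretaxi{s})$ and $\EE_{\U_1,\U_2}(\interpretaxi{s'})$ disagree. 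Updating $\xi$ to $\xi[x_1 \mapsto a_1]$ leaves $\interpretaxi{s}$ and $\interpretaxi{s'}$ unchanged, since $x_1$ does not occur in $s$ or $s'$, while forcing $\interpret{s\;x_1}{\III}{\xi[x_1\mapsto a_1]} \neq \interpret{s'\;x_1}{\III}{\xi[x_1\mapsto a_1]}$. Iterating this construction over $x_2, \dots, x_n$ (using at each step that the newly bound variable occurs neither in the terms already built nor in $s$, $s'$) produces a valuation $\xi'$ that differs from $\xi$ only on $\tuple{x}$ and under which $\interpret{s\;\tuple{x}}{\III}{\xi'} \neq \interpret{s'\>\tuple{x}}{\III}{\xi'}$, so the literal $s\;\tuple{x} \eq s'\>\tuple{x}$ is false under $\xi'$.

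Finally, because $\tuple{x}$ does not occur in $C'$, all literals of $C'$ retain their truth value under $\xi'$ and hence remain false. Consequently the entire premise $C' \llor s\;\tuple{x} \eq s'\>\tuple{x}$ is false under $\xi'$, contradicting $\III \models C' \llor s\;\tuple{x} \eq s'\>\tuple{x}$. This contradiction shows that the conclusion holds under every $\xi$, establishing $\III \models C' \llor s \eq s'$. The main obstacle is precisely the inductive peeling-off step: handling a tuple of several applied variables rather than a single one requires applying extensionality once per argument and carefully verifying that each valuation update leaves $s$, $s'$, and $C'$ untouched, which is exactly what the freshness condition on $\tuple{x}$ guarantees.
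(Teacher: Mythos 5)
Your proof is correct and takes essentially the same approach as the paper's: reduce everything to \infname{PosExt} (the other rules being sound for all interpretations), handle the tuple $\tuple{x}$ by peeling off one variable at a time, and combine injectivity of $\EE$ with the freshness of $\tuple{x}$. The only difference is that you argue contrapositively, constructing a valuation that falsifies the premise, whereas the paper argues directly---truth of the premise under every $\xi[x \mapsto a]$ forces $\EE_{\U_1,\U_2}(\interpretaxi{s}) = \EE_{\U_1,\U_2}(\interpretaxi{s'})$ and hence $\interpretaxi{s} = \interpretaxi{s'}$---which is the same argument read in the other direction.
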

\begin{proof}
  We only need to prove \infname{PosExt} sound.
  For the other rules, we can proceed as in Theorem~\ref{thm:intensional-soundness}.
  By induction on the length of $\tuple{x}$,
  it suffices to prove \infname{PosExt} sound for one variable $x$ instead of a tuple $\tuple{x}$.
  We fix an inference and an extensional interpretation $\III$
  that is a model of the premise $C' \llor s\>x \eq s'\>x$.
  We need to show that it is also a model of the conclusion $C' \llor s \eq s'$.

 Let $\xi$ be a valuation.
 If $C'$ is true in $\III$ under $\xi$, the conclusion is clearly true as well.
 Otherwise $C'$ is false in $\III$ under $\xi$,
 and also under $\xi[x\mapsto a]$ for all $a$ because $x$ does not occur in $C'$.
 Since the premise is true in $\III$,
 $s\>x = s'\>x$ must be true in $\III$ under $\xi[x\mapsto a]$ for all $a$.
 Hence, for appropriate universes $\U_1,\U_2$, we have
 $
 \EE_{\U_1,\U_2}(\interpret{s}{\III}{\xi[x\mapsto a]}) (a) =
 \interpret{s\>x}{\III}{\xi[x\mapsto a]} = \interpret{s'\>x}{\III}{\xi[x\mapsto a]}
 = \EE_{\U_1,\U_2}(\interpret{s'}{\III}{\xi[x\mapsto a]}) (a)
 $.
 Since $s$ and $s'$ do not contain $x$, $\interpret{s}{\III}{\xi[x\mapsto a]}$ and $\interpret{s'}{\III}{\xi[x\mapsto a]}$
 do not depend on $a$.
 Thus, $\EE_{\U_1,\U_2}(\interpretaxi{s}) = \EE_{\U_1,\U_2}(\interpretaxi{s'})$.
 Since $\III$ is extensional, $\EE_{\U_1,\U_2}$ is injective and
 hence
 $\interpretaxi{s} = \interpretaxi{s'}$. It follows that $s \eq s'$ is true in $\III$ under $\xi$,
 and so is the entire conclusion of the inference.
\end{proof}

A problem expressed in higher-order logic must be transformed into
clausal normal form before the calculi can be applied. This process works as
in the first-order case, except for skolemization.
The issue is that skolemization, when performed naively, is unsound for
higher-order logic with a Henkin semantics \cite[\Section~6]{miller-1987},
because it introduces new functions that can be used to instantiate variables.

The core of this \paper{} is not affected by this because the
problems are given in clausal form.
For the implementation, we claim soundness
only \wrt\ models that satisfy the axiom of choice,
which is the
semantics mandated by the TPTP THF format \cite{sutcliffe-et-al-2009}.
By contrast, refutational completeness holds \wrt\ arbitrary models
as defined above.
Alternatively, skolemization can be made sound by introducing mandatory arguments
as described by Miller \cite[\Section~6]{miller-1987} and in the conference version of
this \paper{} \cite{bentkamp-et-al-2018}.

This issue also affects axiom (\infname{Ext}) because it contains the Skolem symbol $\diff$.
As a consequence, (\infname{Ext}) does not hold in all extensional interpretations.
The extensional calculi are thus only sound
\wrt\ interpretations in which (\infname{Ext}) holds.
However, we can prove that (\infname{Ext}) is compatible with our logic:

\noindent%
\begin{minipage}{\textwidth}%
\begin{theoremx}
  Axiom {\upshape(}\infname{Ext}{\upshape)} is satisfiable.
\end{theoremx}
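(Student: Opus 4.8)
The plan is to exhibit a single standard interpretation---which is in particular extensional---in which $\diff$ denotes a function that, given two distinct functions, returns a point at which they disagree. Since the axiom merely asserts that such a disagreement point can always be produced, this will make (\infname{Ext}) true.

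First I would fix a standard type interpretation $\IIIty=(\UU,\IIty)$. Choose a nonempty base set for every nullary type constructor and let $\UU$ be the closure of these base sets under the maps $\IIty(\kappa)$, carried out inside a sufficiently large set-theoretic universe so that $\UU$ is genuinely a collection of sets. Crucially, set $\IIty({\fun})(\U_1,\U_2)$ to be the full function space $\U_2^{\U_1}$ and let $\EE_{\U_1,\U_2}$ be the identity; then $\EE$ is bijective, so the interpretation is standard and hence extensional, and every element of a function universe literally is a set-theoretic function.

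Next I would define the interpretation function $\II$. For the symbol of type $\forallty{\alpha}\alpha$ required by the signature, and for every other symbol, pick arbitrary values---possible because all universes are nonempty. For $\diff$ I would use the axiom of choice: for each pair of universes $\U_1,\U_2$ define $\II(\diff,(\U_1,\U_2))$ to be the curried function sending $f,g\in\U_2^{\U_1}$ to some $a_{f,g}\in\U_1$ with $f(a_{f,g})\neq g(a_{f,g})$ when $f\neq g$ (the set of such witnesses is nonempty precisely because $f$ and $g$ differ), and to an arbitrary element of the nonempty set $\U_1$ when $f=g$. Because the model is standard, the universe $\interpret{(\alpha\fun\beta)\fun(\alpha\fun\beta)\fun\alpha}{\IIIty}{\xi}$, with $\alpha,\beta$ valued at $\U_1,\U_2$, is exactly this curried function space, so $\II(\diff,(\U_1,\U_2))$ is a legitimate element of it.

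Finally I would verify (\infname{Ext}). Fix any valuation $\xi$, let $\U_1=\interpret{\alpha}{\IIIty}{\xi}$ and $\U_2=\interpret{\beta}{\IIIty}{\xi}$, and put $f=\xi(x)$ and $g=\xi(y)$ in $\U_2^{\U_1}$. Since $\EE$ is the identity, $\interpretaxi{\diff\typeargs{\alpha,\beta}\;x\;y}=a_{f,g}$ and hence $\interpretaxi{x\;(\diff\typeargs{\alpha,\beta}\;x\;y)}=f(a_{f,g})$, and likewise $g(a_{f,g})$ for the $y$ side. If $f=g$, the second literal $x\eq y$ is true; if $f\neq g$, then $f(a_{f,g})\neq g(a_{f,g})$ by the choice, so the first literal is true. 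Either way (\infname{Ext}) holds under $\xi$. The main obstacle is the one hidden in the construction: in an arbitrary Henkin model the function universes may be too small to contain a difference-witnessing element, so the argument genuinely needs a standard semantics (full function spaces) together with the axiom of choice to furnish $\diff$; with those in hand, the verification is routine.
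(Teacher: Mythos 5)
Your proof is correct, but it takes a genuinely different route from the paper's. You build a \emph{standard} model---full set-theoretic function spaces, with $\EE$ the identity---and then invoke the axiom of choice to interpret $\diff$ as an actual ``disagreement finder,'' so that the first literal of (\infname{Ext}) is true whenever $x$ and $y$ denote distinct functions. The paper instead builds a Herbrand (term) model: each universe is the set of ground terms of a given type, $\II$ and $\EE$ are the syntactic term-building operations, and $\EE$ is injective (hence the model is extensional) but far from surjective. In that model the verification is purely syntactic: if $\xi(x)$ and $\xi(y)$ are distinct ground terms, the two sides of the disequation are distinct terms---they differ at the head---and therefore denote distinct elements, so $\diff$ never has to ``compute'' anything. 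The comparison cuts both ways. Your construction needs the axiom of choice plus a closure argument to make the full function spaces into a legitimate type interpretation (and you leave non-nullary type constructors other than $\fun$ uninterpreted, a small gap that is easy to patch by interpreting them arbitrarily); in exchange, it establishes the stronger fact that (\infname{Ext}) holds in a standard model, the intended semantics. The paper's term model is choice-free and essentially trivial to verify, but it exhibits only an extensional, non-standard model. One refinement to your closing remark: what genuinely requires standard semantics is your particular argument (forcing $\diff$ to denote a witness-picking function), not the satisfiability of (\infname{Ext}) itself---the Herbrand model shows that small function universes are no obstacle, precisely because in a term model distinct ``functions'' applied to the same argument already yield distinct terms.
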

\begin{proof}
  For a given signature, let $(\UU,\IIty,\II,\EE)$ be an Herbrand interpretation. That is,
  we define $\UU$ to contain the set $\UU_\tau$ of all terms of type $\tau$ for each ground type $\tau$,
  we define $\IIty$ by $\IIty(\kappa)(\tuple{\tau})= \kappa(\tuple{\tau})$,
  we define $\II$ by $\II(\cst{f},\UU_{\tuple{\tau}})=\cst{f}\typeargs{\tuple{\tau}}$, and
  we define $\EE$ by $\EE_{U_\tau,U_\upsilon}(\cst{f})(\cst{a}) = \cst{f}\>\cst{a}$.
  Then $\EE_{U_\tau,U_\upsilon}$ is clearly injective and hence $\III$ is extensional.
  To show that $\III\models(\infname{Ext})$, we need to show that (\infname{Ext}) is true
  under all valuations. Let $\xi$ be a valuation.
  If $x \eq y$ is true under $\xi$, (\infname{Ext}) is also true. Otherwise $x \eq y$ is false under $\xi$,
  and hence $\xi(x) \not= \xi(y)$.
  Then we have
$\interpretaxi{x\;(\diff\typeargs{\alpha,\beta}\;x\;y )}
= (\xi(x))\;(\diff\typeargs{\alpha,\beta}\;(\xi(x))\;(\xi(y)) )
\not=
(\xi(y))\;(\diff\typeargs{\alpha,\beta}\;(\xi(x))\;(\xi(y)) )
= \interpretaxi{y\>(\diff\typeargs{\alpha,\beta}\;x\;y)}$.
  Therefore,
  $x\;(\diff\typeargs{\alpha,\beta}\;x\;y )
  \noteq y\>(\diff\typeargs{\alpha,\beta}\;x\;y)$ is true in $\III$ under $\xi$ and so is (\infname{Ext}).
\end{proof}
\end{minipage}

\subsection{The Redundancy Criterion}
\label{ssec:the-redundancy-criterion}

For our calculi, a redundant clause cannot simply be defined as
a clause whose ground instances are entailed by smaller (\negvvthinspace$\prec$) ground
instances of
existing
clauses, because this would make all \infname{ArgCong} inferences redundant.
Our solution is to base the redundancy criterion on a weaker ground logic---%
ground monomorphic first-order logic---in
which argument congruence does not hold. This logic also plays a central role
in our refutational completeness proof.

We employ an encoding $\flooronly$ to translate
ground \hbox{$\lambda$-free} higher-order terms into
ground first-order terms. It indexes each symbol occurrence
with its type arguments and its term argument count. Thus,
$\floor{\cst{f}}=\cst{f}_0$,
$\floor{\cst{f}\>\cst{a}}=\cst{f}_1(\cst{a}_0)$,
and
$\floor{\cst{g}\typeargs{\kappa}}=\cst{g}_0^\kappa$.
This is enough to disable argument congruence; for example,
$\{\cst{f} \eq \cst{h}{,}\; \cst{f}\>\cst{a} \noteq \cst{h}\> \cst{a}\}$ is
unsatisfiable, whereas
its encoding $\{\cst{f}_0 \eq \cst{h}_0{,}\allowbreak\; \cst{f}_1(\cst{a}_0) \noteq \cst{h}_1(\cst{a}_0)\}$
is satisfiable.
For clauses built from fully applied ground terms, the two logics are
isomorphic, as we would expect from a graceful generalization.

Given a $\lambda$-free higher-order signature $(\Sigmaty,\Sigma)$,
we define a first-order signature
$(\Sigmaty,\Sigma_\GF)$ as follows.
The type constructors $\Sigmaty$ are the same in both signatures, but $\fun$ is uninterpreted
in first-order logic.
For each symbol
$\cst{f}\oftypedecl
\forallty{\tuple{\alpha}_m}\tau_1\fun\cdots\fun\tau_n\fun\tau$ in $\Sigma$, where $\tau$ is not functional,
we introduce symbols $\cst{f}_{l}^{\tuple{\upsilon}_m}\in\Sigma_\GF$
with argument types
$\tuple{\tau}_l\sigma$
and return type $(\tau_{l+1}\fun\cdots\fun\tau_n\fun\tau)\sigma$,
where $\sigma = \{\tuple{\alpha}_m \mapsto \tuple{\upsilon}_m\}$,
for
each tuple of ground types $\tuple{\upsilon}_m$ and each $l \in \{0,\dots,n\}$.

  For example, let
  $\Sigma = \{\cst{a}\oftypedecl \kappa{,}\; \cst{g}\oftypedecl\kappa\fun\kappa\fun\kappa\}$.
  The corresponding first-order signature is
  $\Sigma_\GF=\{
  \cst{a}_0\oftypedecl{\kappa}{,}\;
  \cst{g}_0\oftypedecl{\kappa\fun\kappa\fun\kappa}{,}\;
  \cst{g}_1\oftypedecl{\kappa\fofun\kappa\fun\kappa}{,}\;
  \cst{g}_2\oftypedecl{\kappa^2\fofun\kappa}
  \}$
   where
   $\cst{f}\oftypedecl{\tuple{\tau}\fofun\upsilon}$ denotes a first-order function symbol $\cst{f}$
   with argument types $\tuple{\tau}$ and return type $\upsilon$,
   and $\fun$ is an uninterpreted binary type constructor.
  The term $\floor{\cst{g}\;\cst{a}\;\cst{a}}
  = \cst{g}_2(\cst{a}_0, \cst{a}_0)$ has type ${\kappa}$,
  and $\floor{\cst{g}\;\cst{a}} =  \cst{g}_1(\cst{a}_0)$
  has type ${\kappa\fun\kappa}$.

Thus, we consider three levels of logics:\
the $\lambda$-free higher-order level $\HH$
over a given signature $(\Sigmaty,\allowbreak\Sigma)$,
the ground $\lambda$-free higher-order level $\GH$,
corresponding to $\HH$'s ground fragment,
and the ground monomorphic first-order level $\GF$
over the signature $(\Sigmaty,\allowbreak\Sigma_\GF)$ defined above.
We use $\THH$, $\TGH$, and $\TGF$ to denote the respective sets of terms,
$\TyHH$, $\TyGH$, and $\TyGF$ to denote the respective sets of types,
and $\CHH$, $\CGH$, and $\CGF$ to denote the respective sets of clauses.
In the purifying calculi, we exceptionally let $\CHH$ denote the set of
purified clauses.
Each of the three levels has an entailment relation $\models$.
A clause set~$N_1$ entails a clause set~$N_2$, denoted $N_1 \models N_2$,
if any model of~$N_1$ is also a model of~$N_2$. On $\HH$ and $\GH$,
we use $\lambda$-free higher-order models for the intensional calculi
and extensional $\lambda$-free higher-order models for the extensional calculi;
on $\GF$, we use first-order models.
This machinery may seem excessive, but it is essential to define redundancy of
clauses and inferences properly, and it will play an important role in the
refutational completeness proof (Section~\ref{sec:refutational-completeness}).

The three levels are connected by two functions,
$\gnd$ and $\flooronly$:
\begin{definitionx}[Grounding function ${\gnd}$ on terms and clauses]
  The grounding function $\gnd$ maps
  terms $t \in \THH$ to the set of their ground instances---i.e.,
  the set of all $t\theta \in \TGH$ where $\theta$ is a substitution.
  It also maps clauses $C\in\CHH$ to the set of their ground instances---i.e.,
  the set of all $C\theta \in \CGH$ where $\theta$ is a substitution.
\end{definitionx}
\begin{definitionx}[Encoding ${\flooronly}$ on terms and clauses]
  \,The encoding $\flooronly : \TGH \rightarrow \TGF$ is recursively defined
  as $\floor{\cst{f}\typeargs{\tuple\upsilon_m}\;\tuple{u}_l}
  = \cst{f}_{l}^{\tuple{\upsilon}_m} ({\floor{\tuple{u}_{l}}})$.
  The encoding $\flooronly$ is
  extended to map from $\CGH$ to $\CGF$ 
  by mapping each literal and each side of a literal individually.
\end{definitionx}
The encoding $\flooronly$ is bijective with inverse $\ceilonly$.
Using $\ceilonly$, the clause order $\succ$ on $\TGH$
can be transferred to $\TGF$ by defining
$t \succ s$ as equivalent to $\ceil{t} \succ \ceil{s}$.
The property that $\succC$ on clauses is the multiset extension of $\succL$ on literals,
which in turn is the multiset extension of $\succ$ on terms, is maintained because
$\ceilonly$ maps the multiset representations elementwise.

Schematically, the three levels are connected as follows:
\[
\begin{tikzpicture}[level distance=13em]
  \node[align=center]{$\HH$\\higher-order}[grow'=right]
    child {
      node[align=center]{$\GH$\\ground higher-order}
        child {
          node[align=center]{$\GF$\\ground first-order}
          edge from parent[->]
            node[above] {$\flooronly$}
        }
      edge from parent[->]
        node[above] {$\gnd$}
    };
\end{tikzpicture}
\]

Crucially, green subterms in $\TGH$ correspond to
subterms in $\TGF$ (Lemma~\ref{lem:subterm-correspondence1}),
whereas nongreen subterms
in $\TGH$ are not subterms at all in $\TGF$.
For example, $\cst{a}$ is a green subterm of $\cst{f}\>\cst{a}$,
and correspondingly $\floor{\cst{a}} = \cst{a}_0$ 
is a subterm of $\floor{\cst{f}\>\cst{a}} = \cst{f}_1(\cst{a}_0)$.
On the other hand, $\cst{f}$ is not a green subterm of $\cst{f}\>\cst{a}$,
and correspondingly $\floor{\cst{f}} = \cst{f}_0$ 
is not a subterm of $\floor{\cst{f}\>\cst{a}} = \cst{f}_1(\cst{a}_0)$.

To state the correspondence between green subterms in $\TGH$
and subterms in $\TGF$ explicitly, we define positions of green subterms as follows.
A term $s\in\THH$ is a green subterm at position $\epsilon$ of $s$.
If a term $s\in\THH$ is a green subterm at position $p$ of $u_i$ for some $1 \leq i \leq n$,
then $s$ is a green subterm at position $i.p$ of
$\cst{f}\typeargs{\tuple\tau\vthinspace}\; \tuple{u}_n$
and of
$x\;{\bar u}_n$.
For $\TGF$, positions are defined as usual in first-order logic.

\begin{lemmax}\label{lem:subterm-correspondence1}
  Let $s,t\in\TGH$.
  We have $\floor{\fosubterm{t}{s}_p} = \subterm{\floor{t}}{\floor{s}}_p$.
  In other words, $s$ is a green subterm of $t$ at position $p$ if and only if $\floor{s}$ is
  a subterm of $\floor{t}$ at position $p$.
\end{lemmax}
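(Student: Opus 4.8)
The plan is to prove this by structural induction on $t$, establishing simultaneously that $\floor{\fosubterm{t}{s}_p} = \subterm{\floor{t}}{\floor{s}}_p$ and that the position $p$ is valid in $\floor{t}$ exactly when it is a green-subterm position in $t$. The definition of green-subterm positions is itself inductive—position $\epsilon$ for the whole term, and position $i.p'$ descending into the $i$-th argument $u_i$ of a head-applied term $\cst{f}\typeargs{\tuple\tau}\;\tuple{u}_n$ or $x\;\tuple{u}_n$—so matching this recursion against the recursive definition of $\flooronly$ is the natural strategy.

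\medskip
\noindent\textbf{Base case.} If $p = \epsilon$, then $\fosubterm{t}{s}_\epsilon = s$ and $\subterm{\floor{t}}{\floor{s}}_\epsilon = \floor{s}$, so both sides equal $\floor{s}$ trivially. Here $t$ is itself the green subterm, and the claim reduces to $\floor{s} = \floor{s}$.

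\medskip
\noindent\textbf{Inductive step.} Suppose $p = i.p'$. Since a nonempty green position exists, $t$ cannot be a bare head: it must have the form $\cst{f}\typeargs{\tuple\upsilon_m}\;\tuple{u}_n$ (or $x\;\tuple{u}_n$, but ground terms rule out the variable head, so I would note that we work in $\TGH$ and only the symbol-head case arises). By the definition of green positions, $1 \le i \le n$ and $s$ is a green subterm of $u_i$ at position $p'$. Now I would unfold the encoding: by definition $\floor{\cst{f}\typeargs{\tuple\upsilon_m}\;\tuple{u}_n} = \cst{f}_n^{\tuple\upsilon_m}(\floor{u_1}, \dots, \floor{u_n})$, so the first-order subterm of $\floor{t}$ at position $i.p'$ is exactly the subterm of $\floor{u_i}$ at position $p'$. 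Replacing the green subterm $s$ at $i.p'$ inside $t$ replaces $s$ at $p'$ inside $u_i$, and since $\flooronly$ maps the argument list elementwise, $\floor{\fosubterm{t}{s}_{i.p'}} = \cst{f}_n^{\tuple\upsilon_m}(\floor{u_1}, \dots, \floor{\fosubterm{u_i}{s}_{p'}}, \dots, \floor{u_n})$. Applying the induction hypothesis to $u_i$ at position $p'$ rewrites the $i$-th component to $\subterm{\floor{u_i}}{\floor{s}}_{p'}$, which is precisely the $i$-th argument of $\subterm{\floor{t}}{\floor{s}}_{i.p'}$. This closes the induction.

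\medskip
\noindent\textbf{Main obstacle.} The delicate point is bookkeeping the \emph{index} on the encoded symbol rather than the mathematics: I must check that $\flooronly$ annotates the head with argument count $n$ consistently on both the original term $t$ and the rewritten term $\fosubterm{t}{s}_p$, so that the head symbol $\cst{f}_n^{\tuple\upsilon_m}$ is identical on both sides and the position $i.p'$ addresses the same argument slot. Because a green rewrite only changes a subterm strictly inside some argument $u_i$ and never alters the head or the number of top-level arguments, the index is preserved—but this is exactly the feature that the encoding was designed to capture (indices ``disable argument congruence''), so I would state explicitly that green rewriting leaves the head's argument count invariant, which is what makes the first-order positions line up. The converse direction (``$\floor{s}$ a subterm of $\floor{t}$ at $p$ implies $s$ a green subterm at $p$'') follows from the same induction read backwards, since every first-order subterm position of $\cst{f}_n^{\tuple\upsilon_m}(\dots)$ is either $\epsilon$ or descends into exactly one argument, matching the green-position recursion step for step.
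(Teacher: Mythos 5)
Your proof is correct and follows essentially the same route as the paper's: the paper inducts on the position $p$ rather than on the structure of $t$, but since your inductive step applies the hypothesis to the argument $u_i$ at the shorter position $p'$, the two inductions unfold identically—base case $p=\epsilon$, then decomposing $t = \cst{f}\typeargs{\tuple\upsilon_m}\,\tuple{u}_n$ and pushing $\flooronly$ through the arguments. Your explicit remarks that groundness excludes variable heads and that green rewriting preserves the head's argument count (hence the index $n$ on $\cst{f}_n^{\tuple\upsilon_m}$) are left implicit in the paper but are accurate and welcome.
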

\begin{proof}
\begin{sloppypar}
  By induction on $p$.
  If $p=\varepsilon$, then $s=\hosubterm{t}{s}_p$.
  Hence $\floor{\hosubterm{t}{s}_p} = \floor{s} = \fosubterm{\floor{t}}{\floor{s}}_p $.
  If $p=i.p'$, then $\hosubterm{t}{s}_p = \cst{f}{\typeargs{\tuple{\tau}}}\;\tuple{u}_n$
  with $u_i = \hosubterm{u_i}{s}_{p'}$.
  Applying $\flooronly$, we obtain by the induction hypothesis
  that $\floor{u_i} = \fosubterm{\floor{u_i}}{\floor{s}}_{p'}$. Therefore,
  $\floor{\hosubterm{t}{s}_p} = \cst{f}_{n}^{\tuple{\tau}}(\floor{u_{1}},\dots,\allowbreak\floor{u_{i-1}},
  \allowbreak\fosubterm{\floor{u_i}}{\floor{s}}_{p'},\floor{u_{i+1}},\dots,\floor{u_n})$.
  It follows that $\floor{\hosubterm{t}{s}_p} = \fosubterm{\floor{t}}{\floor{s}}_p$. \qedhere
\end{sloppypar}
\end{proof}%
\begin{corollaryx}\label{cor:subterm-correspondence2}
  Given $s,t\in\TGF$, we have $\ceil{\subterm{t}{s}_p} = \fosubterm{\ceil{t}}{\ceil{s}}_p$.
\end{corollaryx}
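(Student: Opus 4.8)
The plan is to derive this corollary directly from Lemma~\ref{lem:subterm-correspondence1}, exploiting the fact (stated just before the lemma) that $\flooronly$ is a bijection with inverse $\ceilonly$. So the encoding result for higher-order terms transfers to a decoding result for first-order terms essentially for free, and no fresh induction is needed.

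Concretely, given $s,t\in\TGF$ I would set $s' = \ceil{s}$ and $t' = \ceil{t}$, so that $s',t'\in\TGH$ and $\floor{s'} = s$, $\floor{t'} = t$. Instantiating Lemma~\ref{lem:subterm-correspondence1} at $s'$ and $t'$ gives $\floor{\fosubterm{t'}{s'}_p} = \subterm{\floor{t'}}{\floor{s'}}_p = \subterm{t}{s}_p$. Applying $\ceilonly$ to both ends and using $\ceil{\floor{r}} = r$ then yields $\fosubterm{t'}{s'}_p = \ceil{\subterm{t}{s}_p}$, that is, $\fosubterm{\ceil{t}}{\ceil{s}}_p = \ceil{\subterm{t}{s}_p}$, which is exactly the claim.

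The only point that needs care is the well-posedness of the context-insertion notation: the left-hand side $\subterm{t}{s}_p$ presupposes that $p$ is an ordinary first-order subterm position of $t$, whereas the right-hand side $\fosubterm{\ceil{t}}{\ceil{s}}_p$ presupposes that $p$ is a green-subterm position of $\ceil{t}$. These are precisely the two conditions shown to be equivalent by the ``in other words'' formulation of Lemma~\ref{lem:subterm-correspondence1}, so the equality is meaningful whenever either side is defined, and the two sides are defined together. Beyond this bookkeeping the argument is a one-line consequence of bijectivity, so I do not anticipate any genuine obstacle; if anything, the subtlety is purely notational rather than mathematical.
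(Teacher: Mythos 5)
Your proof is correct and matches the paper's (implicit) approach: the paper states this result as a corollary with no proof at all, treating it as an immediate consequence of Lemma~\ref{lem:subterm-correspondence1} together with the bijectivity of $\flooronly$ and its inverse $\ceilonly$, which is precisely the one-line argument you spelled out. Your remark about the well-posedness of the position $p$ on both sides is sound bookkeeping that the paper passes over silently.
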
%
  \begin{lemmax}\label{lem:order-prop-transfer}
    Well-foundedness,
    totality,
    compatibility with
    contexts, and
    the subterm property
    hold for $\succ$ on $\TGF$.
  \end{lemmax}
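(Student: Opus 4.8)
The plan is to exploit the fact that, by the very definition of $\succ$ on $\TGF$, the bijection $\flooronly \colon \TGH \to \TGF$ (with inverse $\ceilonly$) is an \emph{order isomorphism}: we have $t \succ s$ on $\TGF$ if and only if $\ceil{t} \succ \ceil{s}$ on $\TGH$. Each of the four claimed properties therefore reduces to the corresponding property of $\succ$ on $\TGH$, which is among the assumed requirements on the term order. Well-foundedness and totality transfer along the isomorphism with no further ingredients. For well-foundedness, an infinite descending chain $t_0 \succ t_1 \succ \cdots$ in $\TGF$ would map under $\ceilonly$ to an infinite descending chain $\ceil{t_0} \succ \ceil{t_1} \succ \cdots$ in $\TGH$, contradicting well-foundedness there. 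For totality, two distinct terms $s, t \in \TGF$ have distinct images $\ceil{s} \neq \ceil{t}$ (as $\ceilonly$ is a bijection), which are comparable by totality on $\TGH$; transferring back gives $s \succ t$ or $t \succ s$.

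The subterm property and compatibility with contexts additionally rely on the subterm correspondence of Lemma~\ref{lem:subterm-correspondence1} and Corollary~\ref{cor:subterm-correspondence2}, namely that \emph{first-order} subterms in $\TGF$ correspond exactly to \emph{green} subterms in $\TGH$. For the subterm property, let $s$ be a proper subterm of $t \in \TGF$ at position $p$. By Corollary~\ref{cor:subterm-correspondence2}, $\ceil{t} = \fosubterm{\ceil{t}}{\ceil{s}}_p$, exhibiting $\ceil{s}$ as a proper green subterm of $\ceil{t}$; the subterm property on $\TGH$ then yields $\ceil{t} \succ \ceil{s}$, i.e., $t \succ s$. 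For compatibility with contexts, suppose $s \succ t$ on $\TGF$ and let $u, u' \in \TGF$ be terms that are identical except at a position $p$, where $u$ has subterm $s$ and $u'$ has subterm $t$; I would show $u \succ u'$. By Corollary~\ref{cor:subterm-correspondence2}, $\ceil{u} = \fosubterm{\ceil{u}}{\ceil{s}}_p$ and $\ceil{u'} = \fosubterm{\ceil{u'}}{\ceil{t}}_p$, where the green contexts $\fosubterm{\ceil{u}}{\phantom{.}}_p$ and $\fosubterm{\ceil{u'}}{\phantom{.}}_p$ coincide because $u$ and $u'$ agree outside $p$. Since $s \succ t$ gives $\ceil{s} \succ \ceil{t}$, compatibility with green contexts on $\TGH$ delivers $\ceil{u} \succ \ceil{u'}$, hence $u \succ u'$ on $\TGF$.

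The only delicate point, and the main obstacle, is justifying that the decodings $\ceil{u}$ and $\ceil{u'}$ genuinely sit inside one and the same green context, so that green-context compatibility can be applied. This is exactly the compositionality captured by Lemma~\ref{lem:subterm-correspondence1} and Corollary~\ref{cor:subterm-correspondence2}: since $u$ and $u'$ coincide everywhere outside $p$ and $\ceilonly$ is defined by recursion on term structure, their decodings coincide outside the green position $p$ as well. Once this contextual step is made precise, everything else is a one-line transfer along the order isomorphism, so I would dispatch well-foundedness and totality quickly and devote the bulk of the argument to the context correspondence.
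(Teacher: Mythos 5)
Your proposal is correct and follows essentially the same route as the paper's proof: each property is transferred along the bijection $\ceilonly$, with well-foundedness and totality being immediate and the subterm property and context compatibility relying on the green-subterm correspondence of Lemma~\ref{lem:subterm-correspondence1} and Corollary~\ref{cor:subterm-correspondence2}. The ``delicate point'' you isolate---that the decoded terms sit in one and the same green context---is exactly what the paper handles implicitly by writing both sides as $\fosubterm{\ceil{t}}{\cdot}_p$, so making it explicit is a presentational refinement rather than a different argument.
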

  \begin{proof}
    \textsc{Compatibility with contexts:}\enskip
    We must show that $s \succ s'$ implies $\hosubterm{t}{s}_p \succ
    \hosubterm{t}{s'}_p$ for terms $t,s,s'\in\TGF$. Assuming $s \succ
    s'$, we have $\ceil{s} \succ \ceil{s'}$. By compatibility with green
    contexts on $\TGH$, 
    it follows that 
    $\fosubterm{\ceil{t}}{\ceil{s}}_p
    \succ \fosubterm{\ceil{t}}{\ceil{s'}}_p$. By
    Corollary~\ref{cor:subterm-correspondence2}, we have $\hosubterm{t}{s}_p \succ
    \hosubterm{t}{s'}_p$.

    \medskip

    \noindent
    \textsc{Well-foundedness:}\enskip Assume that there exists an infinite
    descending chain $t_1\succ t_2\succ \cdots$ in $\TGF$. By applying
    $\ceilonly$, we then obtain the chain
     $\ceil{t_1}\succ \ceil{t_2}\succ \cdots$ in $\TGH$, contradicting
    well-foundedness on $\TGH$.

    \medskip

    \noindent
    \textsc{Totality:}\enskip Let $s, t\in\TGF$. Then $\ceil{t}$ and $\ceil{s}$ must be comparable by totality
    in $\TGH$. Hence, $t$ and $s$ are comparable.

    \medskip

    \noindent
    \textsc{Subterm property:}\enskip By Corollary~\ref{cor:subterm-correspondence2} and
    the subterm property on $\TGH$, we have $\ceil{\hosubterm{t}{s}_p} =
    \fosubterm{\ceil{t}}{\ceil{s}}_p \succ \ceil{s}$. Hence, $\hosubterm{t}{s}_p
    \succ s$.
  \end{proof}

In standard superposition, redundancy relies on the entailment relation
$\models$ on ground clauses. We will define redundancy on $\GH$ and $\HH$ in the
same way, but using $\GF$'s entailment relation.
This notion of redundancy gracefully generalizes the first-order notion,
without making all \infname{ArgCong} inferences redundant.

The standard redundancy criterion for standard superposition cannot
justify subsumption deletion. Following Waldmann et
al.~\cite{waldmann-et-al-2020-saturation}, we 
incorporate subsumption into the redundancy criterion.
A clause $C$ \emph{subsumes}~$D$ if there exists a substitution
$\sigma$ such that $C\sigma \subseteq D$. A clause $C$ \emph{strictly subsumes}~$D$ if
$C$ subsumes $D$ but $D$ does not subsume $C$.
Let~$\sqsupset$ stand for ``is strictly subsumed by''.
Using the applicative encoding, it is easy to show that~$\sqsupset$
is well founded because strict subsumption is well founded in first-order logic.

We define the sets of redundant clauses \wrt\ a given clause set as follows:
\begin{itemize}
\item Given $C\in\CGF$ and $N\subseteq\CGF$, let $C\in\GFRedC(N)$ if
$\{D \in N \mid D \prec C\}\models C$.
\item Given $C\in\CGH$ and $N\subseteq\CGH$, let $C\in\GHRedC(N)$ if
$\floor{C} \in \GFRedC(\floor{N})$.
\item
Given $C\in\CHH$ and $N\subseteq\CHH$, let $C\in\HRedC(N)$ if
for every $D \in \gnd(C)$,
we have $D \in \GHRedC(\gnd(N))$ or
there exists $C' \in N$ such that $C \sqsupset C'$ and $D \in \gnd(C')$.
\end{itemize}

Along with the three levels of logics, we consider three inference systems:\
$\HInf$, $\GHInf$ and $\GFInf$. $\HInf$ is one of the four variants of
the inference system described in
\Section~\ref{ssec:the-inference-rules}. For uniformity, we regard
axiom~(\infname{Ext}) as a premise-free inference rule~\infname{Ext} whose
conclusion is (\infname{Ext}).
In the purifying calculi, the conclusion of $\infname{Ext}$ must be purified.
$\GHInf$
consists of all \infname{Sup}, \infname{ERes}, and \infname{EFact} inferences
from $\HInf$ whose premises and conclusion are ground,
a premise-free rule \infname{GExt}
whose infinitely many conclusions are the ground instances of (\infname{Ext}),
and the following ground variant of \infname{ArgCong}:
\[\namedinference{GArgCong}
{C' \llor s \eq s'}
{C' \llor s\>\tuple{u}_n \eq s'\>\tuple{u}_n}\]
where $s \eq s'$ is strictly eligible in the premise %
and $\tuple{u}_n$ is a nonempty tuple of ground terms.
$\GFInf$ contains all \infname{Sup}, \infname{ERes}, and \infname{EFact} inferences
from $\GHInf$ translated by $\flooronly$. It coincides exactly with standard first-order
superposition. Given a \infname{Sup}, \infname{ERes}, or \infname{EFact}
inference $\iota \in \GHInf$, let $\floor{\iota}$ denote the corresponding
inference in $\GFInf$.

Each of the three inference systems is parameterized by a selection function.
For $\HInf$, we globally fix a selection function $\HSel$.
For $\GHInf$ and $\GFInf$, we need to consider different selection functions $\GHSel$ and $\GFSel$.
We write $\GHInf^\GHSel$ for $\GHInf$ and $\GFInf^\GFSel$ for $\GFInf$
to make the dependency on the respective selection functions
explicit.
Let $\gnd(\HSel)$ denote the set of all selection functions on~$\CGH$
such that for each clause
in $C\in\CGH$, there exists a clause $D\in\CHH$ with $C\in\gnd(D)$ and corresponding selected literals.
For each selection function $\GHSel$ on $\CGH$, via the bijection $\flooronly$, we obtain a corresponding selection function
on $\CGF$, which we denote by $\floor{\GHSel}$.

\begin{notationx}
  Given an inference $\iota$, we write $\prem(\iota)$ for the tuple of premises,
  $\mprem(\iota)$ for the main (i.e., rightmost) premise, $\preconcl(\iota)$
  for the conclusion before purification, and $\concl(\iota)$ for
  the conclusion after purification. For the nonpurifying calculi,
  $\preconcl(\iota) = \concl(\iota)$ simply denotes the conclusion.
\end{notationx}

\begin{definitionx} [Encoding $\flooronly$ on inferences]\,
Given a \infname{Sup}, \infname{ERes}, or \infname{EFact}
inference $\iota \in \GHInf$, let $\floor{\iota}\in\GFInf$ denote 
the inference defined by $\prem(\floor{\iota}) = \floor{\prem(\iota)}$
and $\concl(\floor{\iota}) = \floor{\concl(\iota)}$.
\end{definitionx}

\begin{definitionx} [Grounding function $\gnd$ on inferences]\,
Given a selection function $\GHSel\in\gnd(\HSel)$,
and a non-\infname{PosExt} inference $\iota\in\HInf$,
we define the set $\gnd^\GHSel(\iota)$ of ground instances of $\iota$
to be all inferences $\iota'\in\GHInf^\GHSel$ such that $\prem(\iota') = \prem(\iota)\theta$
and $\concl(\iota') = \preconcl(\iota)\theta$ for some grounding substitution $\theta$.
For \infname{PosExt} inferences $\iota$, which cannot be grounded, we let
$\gnd^\GHSel(\iota) = \Undef$.
\end{definitionx}
This will map \infname{Sup} to \infname{Sup},
\infname{EFact} to \infname{EFact},
\infname{ERes} to \infname{ERes},
\infname{Ext} to \infname{GExt},
and \infname{ArgCong} to \infname{GArgCong} inferences, but it
is also possible that $\gnd^\GHSel(\iota)$ is the empty set for some inferences $\iota$.

We define the sets of redundant inferences \wrt\ a given clause
set as follows:
\begin{itemize}
\item Given $\iota\in\GFInf^\GFSel$ and $N\subseteq\CGF$, let $\iota\in\GFRedI^\GFSel(N)$
  if 
  $\prem(\iota) \ccap \GFRedC(N) \not= \varnothing$ or
  $\{D \in N \mid D \prec \mprem(\iota)\} \models \concl(\iota)$.
\item Given $\iota\in\GHInf^\GHSel$ and $N\subseteq\CGH$, let $\iota\in\GHRedI^\GHSel(N)$ if
\begin{itemize}
\item $\iota$ is not a \infname{GArgCong} or \infname{GExt} inference
  and $\floor{\iota}\in\smash{\GFRedI^{\floor{\GHSel}}}(\floor{N})$; or
\item the calculus is nonpurifying
  and $\iota$ is a \infname{GArgCong} or \infname{GExt} inference
  and $\concl(\iota)\in N\ccup\GHRedC(N)$; or
\item the calculus is purifying
  and $\iota$ is a \infname{GArgCong} or \infname{GExt} inference
  and $\floor{N} \models \floor{\concl(\iota)}$.
\end{itemize}
\item Given $\iota\in\HInf$ and $N\subseteq\CHH$, let $\iota\in\HRedI(N)$ if
\begin{itemize}
  \item $\iota$ is not a \infname{PosExt} inference
    and $\gnd^\GHSel(\iota)\subseteq\GHRedI(\gnd(N))$ for all $\GHSel\in\gnd(\HSel)$; or
  \item $\iota$ is a \infname{PosExt} inference
    and $\gnd(\concl(\iota))\subseteq \gnd(N)\ccup\GHRedC(\gnd(N))$.
  \end{itemize}

\end{itemize}
Occasionally, we omit the selection function in the notation when it is irrelevant.
A clause set $N$ is \emph{saturated} \wrt\ an inference system and a
redundancy criterion $(\RedI,\RedC)$ if every inference from clauses
in $N$ is in~$\RedI(N).$

\subsection{Simplification Rules}
\label{lf:ssec:simplification-rules}
The redundancy criterion $(\HRedI, \HRedC)$ is strong enough to support
most of the simplification rules implemented in Schulz's first-order prover E
\cite[Sections 2.3.1~and~2.3.2]{schulz-2002-brainiac},
some only with minor adaptions.
Deletion of duplicated literals, 
deletion of resolved literals, 
syntactic tautology deletion,  
negative simplify-reflect, and
clause subsumption
adhere to our redundancy criterion.

Positive simplify-reflect and equality subsumption are supported
by our criterion if they are applied in green contexts.
Semantic tautology deletion can be applied as well, but we must use the
entailment relation of the GF level---i.e., only rewriting in green contexts can
be used to establish the entailment.
Similarly, rewriting of positive and negative literals (demodulation) 
can only be applied in green contexts.
Moreover, for positive literals, the rewriting clause must be smaller than the rewritten
clause---a condition that is also necessary with the standard first-order redundancy criterion
but not always fulfilled by Schulz's rule.
As for destructive equality resolution, even in first-order logic the rule cannot be justified with the standard redundancy criterion,
and it is unclear whether it preserves refutational completeness. %

\section{Refutational Completeness}
\label{sec:refutational-completeness}

Besides soundness, the most important property of the four calculi introduced in
\Section~\ref{ssec:the-inference-rules}
is refutational completeness.
We will prove the static and dynamic refutational completeness of $\HInf$ \wrt\ $(\HRedI, \HRedC)$, which is defined as follows:
\begin{definitionx}[Static refutational completeness]
  Let $\Inf$ be an inference system and let $(\RedI, \RedC)$ be a redundancy criterion.
  The inference system $\Inf$ is called \emph{statically refutationally complete} \wrt\ $(\RedI, \RedC)$ if
  we have $N \models \bot$ if and only if $\bot \in N$
  for every clause set $N$ that is saturated \wrt\ $\Inf$ and $\RedI$.
\end{definitionx}
\begin{definitionx}[Dynamic refutational completeness] \label{def:dyn-complete}
Let $\Inf$ be an inference system and let $(\RedI, \RedC)$ be a redundancy criterion.
Let $(N_i)_i$ be a finite or infinite sequence over sets of clauses. 
Such a sequence is
called a \emph{derivation} if
$N_i \setminus N_{i+1} \subseteq \RedC(N_{i+1})$ for all $i$.
It is called \emph{fair} if all $\Inf$-inferences from clauses in
the limit inferior $\bigcup_i \bigcap_{\!j \geq i} N_{\!j}$ are contained in
$\bigcup_i \RedI(N_i)$.
The inference system $\Inf$ is called \emph{dynamically refutationally complete} \wrt\ $(\RedC, \RedI)$ if
for every fair derivation $(N_i)_i$ such that $N_0 \models \bot$,
we have $\bot \in N_i$ for some $i$.
\end{definitionx}
To circumvent the term order's potential nonmonotonicity, our \infname{Sup}
inference rule only considers green subterms. This is reflected in our proof by the
reliance on the $\GF$ level introduced in \Section~\ref{ssec:the-redundancy-criterion}.
The equation $\cst{g}_0 \eq \cst{f}_0 \in \CGF$,
which corresponds to the equation $\cst{g} \eq \cst{f}\in\CGH$,
cannot be used directly to rewrite
the clause $\cst{g}_1(\cst{a}_0) \noteq \cst{f}_1(\cst{a}_0) \in \CGF$,
which corresponds to $\cst{g}\;\cst{a} \noteq \cst{f}\;\cst{a}\in\CGH$.
Instead, we
first need to apply \infname{ArgCong} to derive
$\cst{g}\>x \eq \cst{f}\>x$,
which after grounding and transfer to $\GF$ yields
$\cst{g}_1(\cst{a}_0) \eq \cst{f}_1(\cst{a}_0)$.
The $\GF$ level is a device that enables us to
reuse the refutational completeness result for standard (first-order) superposition.

The proof proceeds in three steps, corresponding to the three levels
$\GF$, $\GH$, and $\HH$ introduced in \Section~\ref{ssec:the-redundancy-criterion}:
\begin{enumerate}
\item We use Bachmair and Ganzinger's work on the refutational completeness of standard
superposition~\cite{bachmair-ganzinger-1994}
to prove the static refutational completeness of $\GFInf$.
\item From the first-order model constructed in Bachmair and Ganzinger's proof,
we derive a $\lambda$-free higher-order model to prove the static refutational completeness of $\GHInf$.
\item We use the saturation framework of Waldmann et al.~\cite{waldmann-et-al-2020-saturation} to lift the static refutational completeness of $\GHInf$
to static and dynamic refutational completeness of $\HInf$.
\end{enumerate}

In the first step,
since the inference system $\GFInf$ is standard ground superposition,
we only need to work around minor differences between Bachmair and Ganzinger's
definitions and ours. %
Given a saturated clause set $N\subseteq\CGF$ with $\bot\not\in N$,
Bachmair and Ganzinger prove refutational completeness by constructing
a term rewriting system $R_N$ and showing that it can be
viewed as an interpretation that is a model of $N$.
This step is exclusively concerned with ground first-order clauses.

In the second step, we derive refutational completeness of $\GHInf$.
Given a saturated clause set $N\subseteq\CGH$ with $\bot\not\in N$,
we use the first-order model $R_{\floor{N}}$ of $\floor{N}$ constructed in step (1)
to derive a clausal higher-order interpretation that is a model of $N$.
Thanks to saturation \wrt\ \infname{GArgCong},
the higher-order interpretation can conflate the interpretations of the members
$\cst{f}_0^{\tuple{\upsilon}},\dots,\cst{f}_n^{\tuple{\upsilon}}$ of a same symbol family. %
In the extensional calculi, saturation \wrt\ \infname{GExt} can be used to show that the constructed interpretation is extensional.

In the third step,
we employ the saturation framework by Waldmann et al.~\cite{waldmann-et-al-2020-saturation},
which is largely based on Bachmair and Ganzinger's~\cite{bachmair-ganzinger-2001-resolution},
to prove $\HInf$ refutationally complete.
Like Bachmair and Ganzinger's, 
the saturation framework allows us to prove the static and dynamic refutational completeness of our calculus on the nonground level.
On top of that, it allows us to use the redundancy criterion defined in Section~\ref{ssec:the-redundancy-criterion},
which supports deletion of subsumed formulas.
Moreover, the saturation framework provides completeness theorems for prover architectures, such as the DISCOUNT loop.
The main proof obligation the saturation framework leaves to us is that there
exist %
inferences in $\HInf$ corresponding to all nonredundant inferences in $\GHInf$.
For monotone term orders, we can avoid \infname{Sup} inferences into
variables~$x$ by exploiting the clause order's compatibility with contexts:
If $t' \prec t$, we have $C\{x \mapsto\nobreak t'\} \prec C\{x \mapsto t\}$,
which allows us to show that \infname{Sup} inferences into
variables are redundant. This technique fails for variables~$x$
that occur applied in~$C$, because the order lacks compatibility
with arguments. This is why the calculi must either purify clauses to make this
line of reasoning work again or perform some \infname{Sup} inferences into
variables.

\subsection{The Ground First-Order Level}
\label{ssec:the-ground-first-order-level}

We use Bachmair and Ganzinger's results on standard superposition~\cite{bachmair-ganzinger-1994}
to prove $\GF$ refutationally complete. In the subsequent steps, we will also make use of specific properties
of Bachmair and Ganzinger's model.

Bachmair and Ganzinger's logic and inference system differ in some details from~$\GF$:
\begin{itemize}

\item
Bachmair and Ganzinger use untyped first-order logic, whereas $\GF$'s logic is
typed. Bachmair and Ganzinger's proof works verbatim for monomorphic first-order
logic as well, but we need to require that the order $\succ$ has the subterm
property to show that there exist no critical pairs in the term rewriting
system, as observed by Wand \cite[Section~3.2.1]{wand-2017}.

\item
In their redundancy criterion for clauses, Bachmair and Ganzinger require that a finite subset of $\{D \in N \mid D \prec C\}$
entails $C$, whereas we require that $\{D \in N \mid D \prec C\}$ entails~$C$.
By compactness of first-order logic, the two criteria are equivalent.

\end{itemize}
Bachmair and Ganzinger prove refutational completeness for nonground clause sets, but
we only require the ground result here.

The basis of Bachmair and Ganzinger's proof is that
a term rewriting system $R$ defines an interpretation $\TGF/R$
such that for every ground equation $s
\eq t$, we have $\TGF/R \models s \eq t$ if and only if $s
\leftrightrewrite_R^* t$.
Formally, $\TGF/R$ denotes the
monomorphic first-order interpretation
whose universes $\U_\tau$ consist of the $R$-equivalence
classes over $\TGF$ containing terms of type $\tau$.
The interpretation $\TGF/R$ is term-generated---that is,
for every element $a$ of the universe of this interpretation and for any
valuation $\xi$, there exists
a ground term~$t$ such that $\interpret{t}{\TGF/R}{\xi} = a$.
To lighten notation, we will write $R$ to refer to both the term rewriting
system $R$ and the interpretation $\TGF/R$.

The term rewriting system is constructed as follows. Let $N\subseteq\CGF$.
We first define sets of rewrite rules $E_N^C$ and $R_N^C$ for all $C\in N$ by induction on the clause order.
Assume that $E_N^D$ has already been defined for all $D \in N$
such that $D \prec C.$ Then $R_N^C = \bigcup_{D \prec C} E_N^D.$
Let $E_N^C=\{s \rewrite t\}$ if the following conditions are met:\
\begin{enumerate}[(a)]
	\item $C = C' \lor s \eq t$; \label{cond:C-eq-C'-st}
	\item $s \eq t$ is strictly maximal in $C$; \label{cond:st-strictly-max}
	\item $s \succ t$; \label{cond:s-gt-t}
	\item $C'$ is false in $R_N^C$; \label{cond:C'-false}
	\item $s$ is irreducible \wrt\ $R_N^C.$ \label{cond:s-irred}
\end{enumerate}
Then $C$ is said to \emph{produce} $s \rewrite t$%
.
Otherwise, $E_N^C = \emptyset$.
Finally, $R_N = \bigcup_{D} E_N^D.$

We call an inference $\iota\in\GFInf$ \emph{B\&G-redundant} if
some $C\in\prem(\iota)$ is true in $R_N^C$ or $\concl(\iota)$ is true in $\smash{R_N^{\mprem(\iota)}}$.
We call a set $N\subseteq\CGF$ \emph{B\&G-saturated} if
all inferences from $N$ are B\&G-redundant.

\begin{lemmax} \label{lem:redundancy-implies-bg-redundancy}
  If $\bot\not\in N$ and $N\subseteq\CGF$ is saturated \wrt\ $\GFInf$ and $\GFRedI$,
  then $N$ is B\&G-saturated.
  \end{lemmax}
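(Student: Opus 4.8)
Lemma~\ref{lem:redundancy-implies-bg-redundancy} asserts that if $\bot\notin N$ and $N\subseteq\CGF$ is saturated with respect to $\GFInf$ and $\GFRedI$, then $N$ is B\&G-saturated. Unpacking the definitions, saturation means every inference $\iota\in\GFInf$ from premises in $N$ lies in $\GFRedI(N)$, i.e.\ either some premise of $\iota$ is in $\GFRedC(N)$ or $\{D\in N\mid D\prec\mprem(\iota)\}\models\concl(\iota)$. B\&G-saturation means every such $\iota$ is B\&G-redundant, i.e.\ some $C\in\prem(\iota)$ is true in $R_N^C$ or $\concl(\iota)$ is true in $R_N^{\mprem(\iota)}$. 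So the task reduces to a pointwise comparison: for an arbitrary inference $\iota$ from $N$, show that membership in $\GFRedI(N)$ implies B\&G-redundancy.

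\textbf{The plan.} I would fix an inference $\iota\in\GFInf$ from premises in $N$ and split on the two disjuncts in the definition of $\GFRedI(N)$. In the first case, suppose some premise $C\in\prem(\iota)$ lies in $\GFRedC(N)$, i.e.\ $\{D\in N\mid D\prec C\}\models C$. I want to conclude that $C$ is true in $R_N^C$. The key observation is that $R_N^C=\bigcup_{D\prec C}E_N^D$ depends only on the productive clauses strictly below $C$, and a standard lemma of Bachmair and Ganzinger states that every clause in $N$ that is strictly smaller than $C$ is true in $R_N^C$ (productive clauses are true by construction via their produced rule, and nonproductive smaller clauses are true in the final model $R_N$ and, by a monotonicity/persistence argument, already true in $R_N^C$). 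Hence every member of $\{D\in N\mid D\prec C\}$ is true in $R_N^C$, so by the entailment $\{D\in N\mid D\prec C\}\models C$ the clause $C$ is itself true in $R_N^C$, giving B\&G-redundancy. In the second case, suppose $\{D\in N\mid D\prec\mprem(\iota)\}\models\concl(\iota)$. By the same Bachmair--Ganzinger fact, every $D\in N$ with $D\prec\mprem(\iota)$ is true in $R_N^{\mprem(\iota)}$, so the entailment forces $\concl(\iota)$ to be true in $R_N^{\mprem(\iota)}$, which is exactly the second disjunct of B\&G-redundancy.

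\textbf{The main obstacle.} The delicate point is justifying that smaller clauses of $N$ are true in the \emph{partial} model $R_N^C$ (respectively $R_N^{\mprem(\iota)}$), rather than merely in the final model $R_N$. This requires the monotonicity property of Bachmair and Ganzinger's construction: once a ground clause $D\prec C$ is true in $R_N^D\cup E_N^D\subseteq R_N^C$, adding further rules from clauses between $D$ and $C$ cannot falsify it, because those larger rules rewrite only terms that are larger than the maximal term of $D$ and the irreducibility side conditions on produced rules prevent interference. I would cite this as the standard persistence lemma from \cite{bachmair-ganzinger-1994} and apply it to transfer truth in $R_N$ (or in the local model at $D$) down to truth in $R_N^C$. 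A secondary subtlety is that the two redundancy formulations differ by the compactness remark noted in the excerpt (finite vs.\ arbitrary entailing subset), but since we only \emph{use} the entailment $\{D\in N\mid D\prec C\}\models C$ and all relevant clauses are simultaneously true in the single interpretation $R_N^C$, the entailment applies directly and compactness is not needed here. The remainder is routine bookkeeping over the shape of $\GFInf$ inferences.
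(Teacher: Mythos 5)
Your opening reduction — case split on the two disjuncts of $\GFRedI(N)$, then transfer each into the corresponding disjunct of B\&G-redundancy — matches the definitions, but the step you lean on, namely that every clause of $N$ strictly smaller than $C$ is true in $R_N^C$, is not a standalone property of Bachmair and Ganzinger's construction, and your justification of it fails in two ways. First, it is circular: the truth of nonproductive clauses in the final model $R_N$ is the conclusion of Bachmair and Ganzinger's model-construction theorem, which holds only for sets saturated in \emph{their} sense (B\&G-saturated) — the very property this lemma is trying to establish. Second, your ``persistence'' argument runs in the wrong direction. In the construction, truth persists \emph{upward}: a clause true in $R_N^D \cup E_N^D$ stays true in $R_N^C$ for $C \succ D$ and in $R_N$; truth in $R_N$ does \emph{not} transfer down to $R_N^C$. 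Concretely, take constants $\cst{a} \succ \cst{b} \succ \cst{c}$ and $N = \{G, D, E\}$ with $G = \cst{b} \eq \cst{c}$, $D = \cst{a} \eq \cst{c} \lor \cst{a} \eq \cst{c}$, and $E = \cst{a} \eq \cst{b}$, so that $G \prec D \prec E$. Then $G$ produces $\cst{b} \rewrite \cst{c}$, $D$ produces nothing (its maximal literal is not strictly maximal), and $E$ produces $\cst{a} \rewrite \cst{b}$; hence $D$ is true in $R_N = \{\cst{b} \rewrite \cst{c},\, \cst{a} \rewrite \cst{b}\}$ but false in $R_N^E = \{\cst{b} \rewrite \cst{c}\}$, even though $D \prec E$. (This $N$ is of course not saturated, but that is exactly the point: your appeal to persistence has to be a property of the bare construction, before saturation has been exploited.)

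The missing ingredient is the induction that the paper performs: well-founded induction on $\mprem(\iota)$ with the induction hypothesis that all inferences whose conclusions are smaller than $\mprem(\iota)$ are already B\&G-redundant; Bachmair and Ganzinger's Lemma~5.5 is precisely the inductive step, converting entailment by clauses below $\mprem(\iota)$ into truth in the partial model at that level. The fact you cite as a ``standard lemma'' does hold for saturated $N$, but its proof \emph{is} this induction (equivalently, it is B\&G's result that standard redundancy implies compositeness-based saturation, which is the statement being proven here); it cannot be invoked as a black box without assuming the conclusion. So your case analysis is the right skeleton, but it must be wrapped in the well-founded induction and discharged via Lemma~5.5, as the paper does.
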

  \begin{proof}
  Let $N\subseteq\CGF$ be saturated \wrt\ $\GFInf$ and $\GFRedI$.
  To show that $N$ is B\&G-saturated,
  let $\iota$ be an inference from $N$.
  We need to show that $\iota$ is B\&G-redundant \wrt\ $N$.
  We proceed by well-founded induction on $\mprem(\iota)$ \wrt\ $\succ$.
  By the induction hypothesis, for all inferences~$\iota'$ with $\concl(\iota') \prec \mprem(\iota)$,
  $\iota'$ is B\&G-redundant \wrt\ $N$.
  By Lemma~5.5 of Bachmair and Ganzinger, $\iota$ is B\&G-redundant \wrt\ $N$.
\end{proof}

\begin{lemmax} \label{lem:productive-clauses}
  Let $\bot\not\in N$ and $N\subseteq\CGF$ be saturated \wrt\ $\GFInf$
  and $\GFRedI$. If $C = C' \lor s \eq t \in N$ produces $s \rewrite t$,
  then $s \eq t$ is strictly eligible in $C$ and $C'$ is false in $R_N$.
\end{lemmax}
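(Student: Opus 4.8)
The plan is to run a well-founded induction on the main premise with respect to $\succ$ on $\TGF$ (well founded by Lemma~\ref{lem:order-prop-transfer}), proving a slightly strengthened statement: for every producing clause $C = C' \lor s \eq t$, the literal $s \eq t$ is strictly eligible in $C$, and $C'$ is false not only in $R_N^C$ but in $R_N^E$ for every $E \succeq C$ (hence in $R_N$). First I would record the basic fact that $C$ itself is false in $R_N^C$: condition~(d) gives $C'$ false, and since $s$ is irreducible \wrt\ $R_N^C$ (condition~(e)) with $s \succ t$ (condition~(c)), the term $s$ is its own normal form and cannot be joinable with the strictly smaller $t$, so $s \eq t$ is false too. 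I would also use Lemma~\ref{lem:redundancy-implies-bg-redundancy} to replace saturation \wrt\ $\GFInf$ by B\&G-saturation, so that every $\GFInf$-inference from $N$ has a premise true in its own rewrite system or a conclusion true in $R_N^{\mprem}$.

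For strict eligibility, since $s \eq t$ is positive it cannot be selected, and it is strictly maximal by condition~(b); it therefore suffices to show that no literal is selected in $C$. Suppose some negative literal $u \noteq v$ is selected. By condition~(d) it is false in $R_N^C$, i.e.\ $u \leftrightrewrite^*_{R_N^C} v$. If $u = v$, an \infname{ERes} inference on this eligible literal has a conclusion that is a submultiset of the false clause $C$ and hence false in $R_N^C$; as $C$ is itself false in $R_N^C$, the inference is not B\&G-redundant, contradicting saturation. If $u \neq v$, one side is reducible by a rule $l \rewrite r \in R_N^C$ coming from a producing clause $D \prec C$; the resulting \infname{Sup} inference from $D$ into $u \noteq v$ has a conclusion whose literals are all false in $R_N^C$---the inherited literals of $C$ are false, the rewritten literal is false because $u$ with $l$ replaced by $r$ is still joinable with $v$, and the side literals $D'$ are false by the induction hypothesis (which gives $D'$ false in $R_N^C$ since $C \succ D$)---while both premises are false in their own rewrite systems, again contradicting B\&G-saturation.

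For the persistence claim I would argue literal by literal that each literal of $C'$ stays false as rules are added. Every rule in $R_N^E \setminus R_N^C$ is produced by a clause $\succ C$ and so has left-hand side $\succeq s$. Negative literals of $C'$ are joinable in $R_N^C$ and remain joinable in any superset, hence stay false. A positive literal of $C'$ both of whose sides are $\prec s$ is untouched by the new rules and stays non-joinable. The only delicate case is a positive literal $s \eq b \in C'$ with $b \prec t$, because the rule $s \rewrite t$ added by $C$ itself makes $s$ reducible. Here I would invoke the \infname{EFact} inference on the eligible maximal literal $s \eq t$ together with $s \eq b$: its conclusion contains $t \noteq b$, and B\&G-saturation forces this conclusion to be true in $R_N^C$, which (as the other conclusion literals are false) is possible only if $t \not\leftrightrewrite^*_{R_N^C} b$. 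Since no added rule can rewrite the small terms $t$ and $b$, this non-joinability persists to $R_N^E$, so $s \eq b$ remains false there as well.

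I expect the main obstacle to be the bookkeeping that couples the two conclusions: the \infname{Sup} step in the eligibility argument needs the persistence part for the strictly smaller clause $D$, and the persistence part for $C$ needs eligibility of $s \eq t$ to license the \infname{EFact} inference. Organizing the induction so that, within one step, eligibility is established first (using only the induction hypothesis for $D \prec C$) and persistence second (using eligibility for $C$) avoids circularity; the remaining care is in justifying, via the nested multiset structure of the literal and clause orders, that the left-hand sides of all later rules dominate $s$ and therefore cannot disturb the relevant smaller terms.
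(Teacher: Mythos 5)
Your proposal is sound, but it takes a different route from the paper: you reconstruct the whole argument from first principles, whereas the paper outsources its core. The paper's proof has three steps: apply Lemma~\ref{lem:redundancy-implies-bg-redundancy} to pass from saturation \wrt\ $\GFInf$ and $\GFRedI$ to B\&G-saturation (you do the same); observe that $C$ itself is false in $R_N^C$ by conditions (c)--(e) (your first paragraph); and then cite parts (ii) and (iv) of Lemma~4.11 of Bachmair and Ganzinger, which yield exactly the two claims, namely strict eligibility of $s \eq t$ in $C$ and falsity of $C'$ in $R_N$. The well-founded induction with a strengthened hypothesis, the \infname{ERes}/\infname{Sup} contradictions showing that no literal of $C$ may be selected, the literal-by-literal persistence analysis, and the \infname{EFact} treatment of the delicate literal $s \eq b$ with $b \prec t$ are essentially the ingredients of Bachmair and Ganzinger's own proof of that lemma; your ordering of the two goals inside one induction step (eligibility first, from the hypothesis for smaller productive clauses; persistence second, using eligibility of $C$) is also how the circularity is classically broken. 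The paper's citation buys brevity and reuse of a thoroughly checked result; your version buys a self-contained proof that makes explicit exactly which inferences and which instances of saturation are needed.

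If you write the proof out, two glossed points need care. First, you systematically identify truth of an equation with joinability, but the semantics of $R$ is convertibility: $s \eq t$ is true iff $s \leftrightrewrite_R^* t$. The identification is legitimate only because the constructed systems $R_N^C$, $R_N^E$, and $R_N$ are terminating and confluent (they are left-reduced, with no critical pairs thanks to the subterm property, as the paper remarks); this is used silently already when you argue that the irreducible $s$ ``cannot be joinable'' with $t$, and throughout the persistence argument, where a conversion could a priori pass through terms larger than $t$. Second, in the \infname{Sup} case you must superpose into the maximal side of the selected literal $u \noteq v$---that side is always reducible when $u \neq v$ and $u \leftrightrewrite_{R_N^C}^* v$---since otherwise condition 3 of \infname{Sup} may fail; and $R_N^E \setminus R_N^C$ contains the rule $s \rewrite t$ produced by $C$ itself, not only rules of clauses $\succ C$, so the blanket claim should read ``produced by clauses $\succeq C$, hence with left-hand side $\succeq s$'' (your later, explicit handling of $s \rewrite t$ shows you are aware of this).
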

\begin{proof}
By Lemma~\ref{lem:redundancy-implies-bg-redundancy},
$N$ is also B\&G-saturated.
By condition \ref{cond:C'-false}, $C'$ is false in $R_N^C$. Since $s\succ t$ by condition \ref{cond:s-gt-t} and
$s$ is irreducible \wrt\ $R_N^C$ by condition \ref{cond:s-irred},
$s \eq t$ is also false in $R_N^C$. Hence, $C$ is false in $R_N^C$.
Using this and conditions \ref{cond:C-eq-C'-st}, \ref{cond:st-strictly-max}, \ref{cond:s-gt-t},
and \ref{cond:s-irred}, we can apply
Lemma~4.11 of Bachmair and Ganzinger
using $N$ for $N$ and for $N'$, $C$ for $C$ and for $D$, $s$ for $s$, and $t$ for $t$.
Part~(ii) of that lemma shows that $s \eq t$ is strictly eligible in $C$, and
part~(iv) shows that $C'$ is false in $R_N$.
\end{proof}

\begin{theoremx}[Ground first-order static refutational completeness]
  \label{thm:GF-refutational-completeness}
  The inference system $\GFInf$ is statically refutationally complete \wrt\ $(\GFRedI, \GFRedC)$.
  More precisely, if $N\subseteq\CGF$ is a clause set saturated \wrt\ $\GFInf$
  and $\GFRedI$ such that $\bot\not\in N$,
  then $R_N$ is a model of $N$.
\end{theoremx}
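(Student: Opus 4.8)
The plan is to reduce the theorem to Bachmair and Ganzinger's classical completeness result for standard ground superposition, exploiting the fact that $\GFInf$ coincides exactly with standard first-order superposition (as noted in the excerpt) and that $R_N$ is precisely the Bachmair--Ganzinger model construction. Static refutational completeness requires showing that whenever $N \subseteq \CGF$ is saturated \wrt\ $\GFInf$ and $\GFRedI$ with $\bot \notin N$, we have $N \not\models \bot$; the stronger claim is that $R_N$ itself is a model of $N$. Thus it suffices to prove that every clause $C \in N$ is true in $R_N$.

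First I would invoke Lemma~\ref{lem:redundancy-implies-bg-redundancy} to pass from saturation \wrt\ the abstract criterion $\GFRedI$ to B\&G-saturation, since this is exactly the hypothesis Bachmair and Ganzinger require. The order properties needed for their construction---well-foundedness, totality, compatibility with contexts, and the subterm property on $\TGF$---are supplied by Lemma~\ref{lem:order-prop-transfer}, so the rewrite system $R_N$ and the partial interpretations $R_N^C$, $E_N^C$ are well defined by induction on $\succ$. The subterm property is the one genuinely non-automatic ingredient: as the excerpt observes (following Wand), it is what guarantees the absence of critical pairs, so that $R_N$ is a convergent rewrite system and $\TGF/R_N$ is a well-behaved interpretation.

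The core of the argument is the standard induction on the clause order establishing that each $C \in N$ is true in $R_N$. I would argue by contradiction: suppose some clause is false in $R_N$ and let $C$ be a $\prec$-minimal such clause. One then performs the usual case analysis on the maximal literal of $C$. If $C$ produces a rule $s \rewrite t$, then Lemma~\ref{lem:productive-clauses} shows $s \eq t$ is strictly eligible and $C'$ is false in $R_N$, which together force $C$ to be true in $R_N$ via the produced rule---contradicting falsity. If $C$ does not produce, the maximal literal must be a negative equation $s \noteq s'$ with $s \leftrightrewrite^*_{R_N} s'$ (handled by an \infname{ERes} inference), a positive equation that is reducible (handled by a \infname{Sup} inference into the reducible side), or a duplicated maximal positive literal (handled by \infname{EFact}); in each case the corresponding $\GFInf$-inference, being B\&G-redundant by saturation, yields a strictly smaller clause that is false in $R_N$, contradicting minimality of $C$.

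The main obstacle is not conceptual novelty but rather the careful bookkeeping needed to certify that Bachmair and Ganzinger's proof transfers verbatim despite the two discrepancies flagged in the excerpt: the move from untyped to monomorphic typed first-order logic, and the difference in the redundancy criterion (a finite subset entailing $C$ versus the full smaller-clause set entailing $C$). For the former, the only place typing intervenes is in the verification that $R_N$ has no critical pairs, which is why the subterm property of Lemma~\ref{lem:order-prop-transfer} must be explicitly invoked rather than taken for granted; for the latter, compactness of first-order logic reconciles the two formulations, so the induction step involving entailment by smaller clauses goes through unchanged. Having verified these two points, I would simply cite the relevant Bachmair--Ganzinger lemmas (as already done in Lemmas~\ref{lem:redundancy-implies-bg-redundancy} and~\ref{lem:productive-clauses}) to conclude that $R_N \models N$, and hence $N \not\models \bot$, which is precisely static refutational completeness for $\GFInf$.
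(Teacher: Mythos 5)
Your proposal is correct and takes essentially the same route as the paper: the paper's proof likewise applies Lemma~\ref{lem:redundancy-implies-bg-redundancy} to obtain B\&G-saturation and then cites Bachmair and Ganzinger's Theorem~4.14 to conclude that $R_N$ is a model of~$N$. The additional material you include---the minimal-counterexample induction and the two discrepancy checks (typing via the subterm property, compactness for the redundancy criterion)---is exactly what the paper delegates to the cited B\&G proof and to the preamble of its Section~4.1, so it is a faithful unfolding rather than a different argument.
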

\begin{proof}
  By Lemma~\ref{lem:redundancy-implies-bg-redundancy},
  $N$ is also B\&G-saturated.
  It follows that $R_N$ is a model of $N$, as shown in the
  proof of Theorem 4.14 of Bachmair and Ganzinger.
\end{proof}

\subsection{The Ground Higher-Order Level}

In this subsection, let $\GHSel$ be a selection function on $\CGH$, and
let $N\subseteq\CGH$ with $\bot\not\in N$ be a clause set saturated \wrt\
$\GHInf^\GHSel$ and $\GHRedI^\GHSel$. Clearly, $\floor{N}$ is then saturated \wrt\
$\smash{\GFInf^{\floor{\GHSel}}}$ and $\smash{\GFRedI^{\floor{\GHSel}}}$.

We abbreviate $R_{\floor{N}}$ as $\RfN$.
From $\RfN$, we construct a model
$\IIIho$ of $N$. The key properties enabling us to perform this construction
are that $\RfN$ is
term-generated and that the interpretations of the members
$\cst{f}_0^{\tuple{\upsilon}},\dots,\cst{f}_n^{\tuple{\upsilon}}$ of a same symbol family behave in the same way
thanks to the $\infname{ArgCong}$ rule.

\begin{lemmax}[Argument congruence]\label{lemma:arg-cong}
For terms
$s,t,u\in\TGH$,
if $\interpretfog{\floor{s}}\allowbreak = \interpretfog{\floor{t}}$,
then $\interpretfog{\floor{s\>u}} = \interpretfog{\floor{t\>u}}$.
\end{lemmax}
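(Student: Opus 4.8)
The plan is to turn the semantic hypothesis into a rewriting statement and then argue by well-founded induction, with the only genuine work happening at root rewrite steps. Since $\flooronly$ is a bijection onto $\TGF$ and $\RfN$ is convergent, $\interpretfog{\floor{s}} = \interpretfog{\floor{t}}$ is equivalent to $\floor{s} \leftrightrewrite_{\RfN}^{*} \floor{t}$, so $\floor s$ and $\floor t$ have a common normal form. I would induct on the multiset $\{\floor{s},\floor{t}\}$ \wrt\ the multiset extension of $\succ$. If both $\floor s$ and $\floor t$ are $\RfN$-irreducible, confluence forces $\floor s = \floor t$, hence $s = t$ and the claim is immediate. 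Otherwise, by symmetry assume $\floor s \rewrite \floor{s'}$ for some GH term $s'$; this preserves the value, so $\interpretfog{\floor{s'}} = \interpretfog{\floor{s}} = \interpretfog{\floor{t}}$, and $\{\floor{s'},\floor{t}\} \prec \{\floor{s},\floor{t}\}$, whence the induction hypothesis yields $\interpretfog{\floor{s'\>u}} = \interpretfog{\floor{t\>u}}$. It thus suffices to show that a single step lifts, namely $\interpretfog{\floor{s\>u}} = \interpretfog{\floor{s'\>u}}$.

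Next I would split the step by its position. If $\floor{s} \rewrite \floor{s'}$ occurs below the root, then by Lemma~\ref{lem:subterm-correspondence1} it takes place inside a green argument of $s$; writing $s = \cst{f}\typeargs{\tuple{\upsilon}}\>\tuple{w}_l$, that argument still occurs at the corresponding position in $\floor{s\>u} = \cst{f}_{l+1}^{\tuple{\upsilon}}(\floor{\tuple{w}_l},\floor{u})$, so the same rule rewrites $\floor{s\>u}$ to $\floor{s'\>u}$ and the denotations agree at once. The interesting case is a root step, in which $\floor{s}$ is itself the left-hand side of a rule $\floor{s} \rewrite \floor{s'}$. Here the root symbol of $\floor{s\>u}$ is $\cst{f}_{l+1}$ rather than $\cst{f}_l$, so $\floor{s}$ is no longer a subterm and no rewrite of $\floor{s\>u}$ is available; this is exactly where I must invoke \infname{GArgCong}.

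For the root case I would exploit saturation. The rule $\floor{s} \rewrite \floor{s'}$ is produced by some clause whose $\ceilonly$-preimage has the form $C' \llor s \eq s'$ with $s \eq s'$ positive and strictly eligible, and by Lemma~\ref{lem:productive-clauses} with $\floor{C'}$ false in $\RfN$. Strict eligibility transfers to the GH level through the selection-function correspondence given by $\floor{\GHSel}$, so \infname{GArgCong} applies with the one-element ground tuple $(u)$ (well typed since $s\>u$ is), producing an inference $\iota$ with conclusion $C' \llor s\>u \eq s'\>u$. Because $N$ is saturated, $\iota$ is redundant, and in every calculus variant this yields $\RfN \models \floor{\concl(\iota)}$: here I use that $\RfN$ is a model of $\floor{N}$ by Theorem~\ref{thm:GF-refutational-completeness}, that a redundant ground clause is entailed by the $\prec$-smaller clauses of $\floor{N}$, and that ground \infname{GArgCong} conclusions are unaffected by purification (there are no variables to purify). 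Since $\floor{C'}$ is false in $\RfN$, the surviving literal forces $\interpretfog{\floor{s\>u}} = \interpretfog{\floor{s'\>u}}$, completing the step and hence the induction.

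The main obstacle I expect is precisely this root case, since it is the crux of the whole lemma: argument congruence is deliberately \emph{not} valid in the first-order logic of $\GF$, so the bridge $\floor{s\>u} \leftrightrewrite^{*} \floor{s'\>u}$ cannot be obtained by rewriting and must instead be recovered from saturation \wrt\ \infname{GArgCong} combined with the productivity properties of $\RfN$. I would also be careful to phrase the redundancy argument uniformly across the purifying and nonpurifying variants and to confirm that strict eligibility and the clause order really do transfer cleanly through $\flooronly$, so that the \infname{GArgCong} instance is genuinely available.
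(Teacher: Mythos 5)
Your proposal is correct and follows essentially the same route as the paper's proof: reduce the semantic claim to convertibility in $\RfN$, lift single rewrite steps (trivial below the root), and handle root steps by combining Lemma~\ref{lem:productive-clauses}, the \infname{GArgCong} inference, and saturation/redundancy together with Theorem~\ref{thm:GF-refutational-completeness}. The only difference is the induction scaffolding: the paper inducts on the number of rewrite steps in the conversion (using symmetry), whereas you use well-founded multiset induction plus confluence of $\RfN$ --- a property that does hold for the Bachmair--Ganzinger construction but that the paper's argument never needs to invoke.
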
%
\begin{proof}
  What we want to show is equivalent to
  \[\floor{s}   \leftrightrewrite_{\RfN}^* \floor{t}
  \text{ \,implies\, }
  \floor{s\>u} \leftrightrewrite_{\RfN}^* \floor{t\>u} \]
  By induction on the number of rewrite steps and due to symmetry, it suffices to show that
  \[\floor{s}  \rewrite_{\RfN} \floor{t}
  \text{ \,implies\, }
  \floor{s\>u} \leftrightrewrite_{\RfN}^* \floor{t\>u} \]
  If the rewrite step from $\floor{s}$ is below the top level, this is obvious because there is a corresponding rewrite step from $\floor{s\>u}$.
  If it is at the top level, $\floor{s} \rewrite \floor{t}$
  must be rule of $\RfN$.
  This rule must originate from a productive clause of the form
  $\floor{C} = \floor{C' \llor s \eq t}$.
  By Lemma~\ref{lem:productive-clauses}, $\floor{s \eq t}$ is
  strictly eligible in $\floor{C}$ \wrt\ $\floor{\GHSel}$,
  and hence $s \eq t$ is
  strictly eligible in $C$ \wrt\ $\GHSel$.
  Moreover, $s$ and $t$ have functional type.
  Thus, the following \infname{GArgCong} inference $\iota$ is applicable:
	\[
	\namedinference{GArgCong}
	{C' \llor s \eq t}
	{C' \llor s\>u \eq t\>u}
  \]
  By saturation,
  $\iota$ is redundant \wrt\ $N$---i.e., we have $\concl(\iota)\in N \ccup \GHRedC(N)$ (for the nonpurifying calculi)
  or $\floor{N}\models\concl(\iota)$ (for the purifying calculi). In both cases,
  by Theorem \ref{thm:GF-refutational-completeness}, $\floor{\concl(\iota)}$ is then true in $\RfN$.
  By Lemma~\ref{lem:productive-clauses}, $\floor{C'}$ is false in $\RfN$. Therefore,
  $\floor{s\>u \eq t\>u}$ must be true in $\RfN$. \qedhere
\end{proof}

\begin{lemmax}\label{lemma:arg-fun-cong}
  For terms
  $s,t,u,v\in\TGH$,
  if $\interpretfog{\floor{s}}\allowbreak = \interpretfog{\floor{t}}$ and $\interpretfog{\floor{u}}\allowbreak = \interpretfog{\floor{v}}$,
  then $\interpretfog{\floor{s\>u}} = \interpretfog{\floor{t\>v}}$.
\end{lemmax}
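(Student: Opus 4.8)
The plan is to derive this two-sided congruence by combining the one-sided argument congruence just established (Lemma~\ref{lemma:arg-cong}) with ordinary first-order congruence inside the interpretation $\RfN$. Concretely, I would transform $\floor{s\>u}$ into $\floor{t\>v}$ in two independent steps, routing through the intermediate term $s\>v$: first change the argument from $u$ to $v$, then change the function from $s$ to $t$.

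First I would handle the argument. Since $s,t,u,v\in\TGH$ are ground, $s$ has a symbol head, say $s = \cst{f}\typeargs{\tuple\upsilon}\;\tuple{w}_k$, so that by definition of $\flooronly$ we have $\floor{s\>u} = \cst{f}_{k+1}^{\tuple\upsilon}(\floor{\tuple{w}_k},\floor{u})$ and $\floor{s\>v} = \cst{f}_{k+1}^{\tuple\upsilon}(\floor{\tuple{w}_k},\floor{v})$; equivalently, $\floor{u}$ is a subterm of $\floor{s\>u}$ at the position of $u$ and $\floor{s\>v}$ is obtained by replacing it with $\floor{v}$ (an instance of Lemma~\ref{lem:subterm-correspondence1}). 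As $\interpretfog{\floor{u}} = \interpretfog{\floor{v}}$, first-order congruence in $\RfN$---i.e., the fact that the denotation of a first-order term depends only on the denotations of its immediate arguments---gives $\interpretfog{\floor{s\>u}} = \interpretfog{\floor{s\>v}}$. Second, the hypothesis $\interpretfog{\floor{s}} = \interpretfog{\floor{t}}$ lets me invoke Lemma~\ref{lemma:arg-cong} with the term $v$, yielding $\interpretfog{\floor{s\>v}} = \interpretfog{\floor{t\>v}}$. Chaining the two equalities proves the claim.

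I expect no serious obstacle: the nontrivial content is entirely absorbed by Lemma~\ref{lemma:arg-cong}, whose proof relied on saturation under \infname{GArgCong}, whereas the argument step is the routine congruence valid in every first-order interpretation and in particular in the term-generated model $\RfN$. The only point deserving a word of care is the well-typedness of the intermediate term $s\>v$: since $\RfN$ is typed and the universes for distinct types are disjoint, the equality $\interpretfog{\floor{u}} = \interpretfog{\floor{v}}$ forces $u$ and $v$ to have the same type, so $s\>v$ is indeed well typed. I would also keep the whole argument phrased in terms of equality of denotations in $\TGF/R_{\floor{N}}$, rather than in terms of rewrite normal forms, so that the congruence step is immediate.
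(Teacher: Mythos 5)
Your proof is correct and takes essentially the same approach as the paper's: both combine Lemma~\ref{lemma:arg-cong} for the function-change step with ordinary first-order congruence in $\RfN$ for the argument-change step. The only (inessential) difference is the intermediate term---you route through $s\>v$, whereas the paper routes through $t\>u$ by applying Lemma~\ref{lemma:arg-cong} first and congruence second.
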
%
\begin{proof}
By Lemma~\ref{lemma:arg-cong}, we have $\interpretfog{\floor{s\>u}} = \interpretfog{\floor{t\>u}}$.
It remains to show that $\interpretfog{\floor{t\>u}} = \interpretfog{\floor{t\>v}}$.
Since $t$ is ground, it must be of the form $\cst{f}\typeargs{\tuple\upsilon_m}\>\tuple{t}_n$.
The interpretation $\RfN$ defined above is an interpretation $(\U, \ifo)$ in monomorphic
first-order logic.
Then
\[\interpretfog{\floor{t\>u}} = \ifo(\cst{f}_{n+1}^{\tuple\upsilon_m})(\interpretfog{\floor{\tuple{t}_n}},\interpretfog{\floor{u}})
= \ifo(\cst{f}_{n+1}^{\tuple\upsilon_m})(\interpretfog{\floor{\tuple{t}_n}},\interpretfog{\floor{v}})
= \interpretfog{\floor{t\>v}} \tag*{\qedhere}\]
\end{proof}

\begin{definitionx} %
Define a higher-order interpretation $\IIIho = (\uho,\IIty^\GH,\iho,\eho)$ as follows.
The interpretation $\RfN$ defined above is an
interpretation $(\U, \ifo)$ in monomorphic first-order logic,
where $\U_\tau$ is its universe for type
$\tau$, and $\II$ is its interpretation function.
Let
$\uho \defeq \{ \U_{\tau} \mid \tau\text{ is a ground type}\}$.
Let $\IIty^\GH(\kappa)(\U_{\tuple\tau})= \U_{\kappa(\tuple\tau)}$
for all type constructors $\kappa$ and type tuples $\tuple\tau$.
Let $\iho(\cst{f},\U_{\tuple\tau}) \defeq \ifo(\cst{f}_0^{\tuple\tau})$.

Since $\RfN$ is term-generated, for every $a\in\U_{{\tau\fun\upsilon}}$ and $b\in\U_{\tau}$, there exist
ground terms $s : \tau\fun\upsilon$ and $u : \tau$ such that $\interpretfog{\floor{s}} = a$ and $\interpretfog{\floor{u}} = b$.
We define $\eho$ by
$\smash{
\eho_{\U_{\tau},\U_{\upsilon}}(a)(b) = \interpretfog{\floor{s\>u}}
}\negvthinspace$ for any term $u$.
By Lemma~\ref{lemma:arg-fun-cong}, this definition is independent of the choice of $s$ and $u$.
\end{definitionx}

\begin{lemmax}[Model transfer to $\GH$]\label{lem:model-transfer}
   $\IIIho$ is a model of $N$. In the extensional calculi, $\IIIho$ is an
   extensional model of $N$.
\end{lemmax}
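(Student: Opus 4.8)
The plan is to reduce everything to the already-constructed first-order model $\RfN$ by means of a single denotation-correspondence claim, and then to treat extensionality separately through saturation under \infname{GExt}. First I would record the type-level fact that the denotation of any ground type $\tau$ under $\IIIho$ is exactly $\U_\tau$; this follows by induction on the structure of $\tau$ from the defining equation $\IIty^\GH(\kappa)(\U_{\tuple\tau}) = \U_{\kappa(\tuple\tau)}$. The heart of the proof is then the claim
\[\interprethog{t} = \interpretfog{\floor{t}} \quad\text{for every ground } t \in \TGH,\]
proved by structural induction on $t$. For a head $t = \cst{f}\typeargs{\tuple\upsilon_m}$, the definitions of $\iho$ and of $\flooronly$ give $\interprethog{\cst{f}\typeargs{\tuple\upsilon_m}} = \iho(\cst{f}, \U_{\tuple\upsilon_m}) = \ifo(\cst{f}_0^{\tuple\upsilon_m}) = \interpretfog{\floor{\cst{f}\typeargs{\tuple\upsilon_m}}}$. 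For an application $t = s\;u$ with $s : \tau \fun \upsilon$ and $u : \tau$, I would use the induction hypothesis to set $a = \interprethog{s} = \interpretfog{\floor{s}}$ and $b = \interprethog{u} = \interpretfog{\floor{u}}$; then $s$ and $u$ themselves are admissible witnesses in the definition of $\eho$, so $\interprethog{s\;u} = \eho_{\U_\tau,\U_\upsilon}(a)(b) = \interpretfog{\floor{s\;u}}$, which is exactly $\interpretfog{\floor{t}}$. The independence of $\eho$ from the chosen representatives (Lemma~\ref{lemma:arg-fun-cong}) is what legitimizes this step.

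With the claim in hand, $\IIIho \models N$ is immediate. Since $N$ is ground, each $C \in N$ is ground and its truth is valuation-independent. A literal $s \eq t$ of $C$ is true in $\IIIho$ iff $\interprethog{s} = \interprethog{t}$, which by the claim holds iff $\interpretfog{\floor{s}} = \interpretfog{\floor{t}}$, i.e. iff $\floor{s} \eq \floor{t}$ is true in $\RfN$, and similarly for disequations. By Theorem~\ref{thm:GF-refutational-completeness}, $\RfN$ is a model of $\floor{N}$, so it satisfies $\floor{C}$; hence some literal of $\floor{C}$ is true in $\RfN$, and the corresponding literal of $C$ is true in $\IIIho$. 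Thus $\IIIho \models C$.

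For the extensional calculi I would show that each $\eho_{\U_\tau,\U_\upsilon}$ is injective. Suppose $\eho_{\U_\tau,\U_\upsilon}(a) = \eho_{\U_\tau,\U_\upsilon}(a')$, and pick ground witnesses $s, s' : \tau \fun \upsilon$ with $\interpretfog{\floor{s}} = a$ and $\interpretfog{\floor{s'}} = a'$. Evaluating both functions at $b = \interpretfog{\floor{u}}$ and using the definition of $\eho$ yields $\interpretfog{\floor{s\;u}} = \interpretfog{\floor{s'\;u}}$ for every ground $u : \tau$ (term-generatedness of $\RfN$ supplies all such $b$). Now consider the \infname{GExt} inference whose conclusion is the ground instance $s\;(\diff\typeargs{\tau,\upsilon}\;s\;s') \noteq s'\;(\diff\typeargs{\tau,\upsilon}\;s\;s') \llor s \eq s'$ of (\infname{Ext}). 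By saturation this inference is redundant, so, exactly as in the proof of Lemma~\ref{lemma:arg-cong} (unfolding $\concl(\iota) \in N \ccup \GHRedC(N)$ or $\floor{N} \models \floor{\concl(\iota)}$ and invoking Theorem~\ref{thm:GF-refutational-completeness}), its floor is true in $\RfN$. Taking $u = \diff\typeargs{\tau,\upsilon}\;s\;s'$ in the preceding identity makes the first disjunct false in $\RfN$, so the second must hold: $\interpretfog{\floor{s}} = \interpretfog{\floor{s'}}$, i.e. $a = a'$. Hence $\eho_{\U_\tau,\U_\upsilon}$ is injective and $\IIIho$ is extensional.

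I expect the main obstacle to be the application case of the correspondence claim: this is precisely where the delicate definition of $\eho$ and its representative-independence must be deployed, and where the universes $\U_\tau, \U_\upsilon$ have to be matched against the type-denotation computation of the first paragraph. The extensionality argument is then routine, once the correct ground instance of (\infname{Ext}) is singled out and the \infname{GExt}-redundancy bookkeeping is reused verbatim from Lemma~\ref{lemma:arg-cong}.
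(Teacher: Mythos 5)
Your proposal is correct and follows essentially the same route as the paper's proof: the same induction establishing $\interprethog{t} = \interpretfog{\floor{t}}$, the same transfer of clause truth via Theorem~\ref{thm:GF-refutational-completeness}, and the same use of the redundancy of the relevant \infname{GExt} inference to obtain extensionality. The only differences are cosmetic — you make the type-denotation bookkeeping explicit, and you prove injectivity of $\eho$ directly on the $\GF$ side rather than via the contrapositive on the $\GH$ side as the paper does.
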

\begin{proof}
  We first prove by induction on terms $t\in\TGH$ that
  $\interprethog{t} = \interpretfog{\floor{t}}$.
  Let $t\in\TGH$, and assume as the induction hypothesis that
  $\interprethog{u} = \interpretfog{\floor{u}}$
  for all subterms $u$ of~$t$. If $t$ is of the form
  $\cst{f}\typeargs{\tuple\upsilon}$, then
  \begin{align*}
  \interprethog{t} &= \iho(\cst{f}, \U_{\tuple\upsilon})%
  =\ifo(\cst{f}_0^{\tuple\upsilon})%
  =\interpretfog{\cst{f}_0^{\tuple\upsilon}}%
  =\interpretfog{\floor{\cst{f}\typeargs{\tuple\upsilon}}}=\interpretfog{\floor{t}}
  \end{align*}
  If $t$ is an application $t = t_1\;t_2$, where $t_1$ is of type $\tau\fun\upsilon$,
  then, using the definition of the term denotation and of $\eho$, we have
  \begin{align*}
  \interprethog{t_1\;t_2}
  = \eho_{\U_\tau,\U_\upsilon}(\interprethog{t_1}) (\interprethog{t_2}) %
  \overset{\!\text{IH}\!}{=}
  \eho_{\U_\tau,\U_\upsilon}(\interpretfog{\floor{t_1}}) (\interpretfog{\floor{t_2}})%
  =
  \interpretfog{\floor{t_1\;t_2}}
  \end{align*}

  So we have shown that $\interprethog{t} = \interpretfog{\floor{t}}$ for all
  terms $t$. It follows that a ground equation $s \eq t$ is true in
  $\IIIho$ if and only if $\floor{s \eq t}$ is true in $\RfN$.
  Hence a ground clause $C$ is true in $\IIIho$ if and only if $\floor{C}$
  is true in $\RfN$.
  By Theorem~\ref{thm:GF-refutational-completeness}, $\RfN$ is a model of $\floor{N}$.
  Thus, $\IIIho$ is a model of $N$.

  For the extensional calculi, it remains to show that $\IIIho$ is extensional---i.e.,
  we have to show that for all $\tau$ and $\upsilon$ and all $a,b\in \U_{\tau\fun\upsilon}$,
  if $a \not= b$, then $\eho(a) \not= \eho(b)$.
  Since $\RfN$ is term-generated, there are terms $s,t\in\TGF$
  such that $\interpretfo{s}{} = a$ and $\interpretfo{t}{} = b$.
  By what we have shown above, it follows that
  $\interpretho{s'}{} = a$ and $\interpretho{t'}{} = b$ for $s'=\ceil{s}$ and $t'=\ceil{t}$.

  Since $N$ is saturated, the \infname{GExt} inference that generates
  the clause \[C ~=~ s'\;(\diff\typeargs{\tau,\upsilon}\>s'\>t')
  \noteq t'\>(\diff\typeargs{\tau,\upsilon}\>s'\>t') \llor s' \eq t'\] is
  redundant---i.e., $C \in N \ccup \GHRedC(N)$ (in the nonpurifying calculi) or $\floor{N}\models \floor{C}$ (in the purifying calculi).
  In both cases, it follows that
  $\RfN\models\floor{C}$ by Theorem \ref{thm:GF-refutational-completeness} and
  thus $\IIIho\models C$ by what we have shown above.
  The second literal of $C$ is false in $\IIIho$ because $\interpretho{s'}{} = a \not= b = \interpretho{t'}{}$.
  So the first literal of $C$ must be true in $\IIIho$ and thus
    \begin{align*}
      \eho(a)(\interpretho{\diff\typeargs{\tau,\upsilon}\>s'\>t' }{})
    &= \interpretho{s'\;(\diff\typeargs{\tau,\upsilon}\>s'\>t' )}{} \\
    &\not= \interpretho{t'\>(\diff\typeargs{\tau,\upsilon}\>s'\>t' )}{}
    = \eho(b)(\interpretho{\diff\typeargs{\tau,\upsilon}\>s'\>t' }{})
    \end{align*}
  It follows that $\eho(a) \not= \eho(b)$.
  \end{proof}

  We summarize the results of this subsection in the following theorem:

  \begin{theoremx}[Ground static refutational completeness]
    \label{thm:GH-refutational-completeness}
    Let $\GHSel$ be a selection function on $\CGH$.
    Then the inference system $\GHInf^\GHSel$ is statically refutationally complete
    \wrt\ $(\GHRedI, \GHRedC)$.
    That means, if $N \subseteq \CGH$ is saturated \wrt\ $\GHInf^\GHSel$ and $\GHRedI^\GHSel$,
    then $N \models \bot$ if and only if $\bot \in N$.
  \end{theoremx}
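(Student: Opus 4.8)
The plan is to obtain the theorem as a short corollary of the model-transfer construction (Lemma~\ref{lem:model-transfer}), with the ground first-order completeness result (Theorem~\ref{thm:GF-refutational-completeness}) supplying the underlying first-order model. The biconditional splits into two directions, and essentially all of the substantive work has already been discharged in Lemma~\ref{lem:model-transfer}. The easy direction I would dispatch immediately: if $\bot \in N$, then no interpretation satisfies $N$, so $N \models \bot$ holds vacuously, and this requires no saturation hypothesis.

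For the converse I would argue by contraposition: assuming $\bot \notin N$ and that $N$ is saturated \wrt\ $\GHInf^\GHSel$ and $\GHRedI^\GHSel$, I would exhibit a model of $N$, thereby showing $N \not\models \bot$. The first step is to push saturation down to the first-order level. Since $\flooronly$ is a bijection that maps \infname{Sup}, \infname{ERes}, and \infname{EFact} inferences and the redundancy criterion onto the corresponding first-order notions (as already noted at the start of this subsection), $\floor{N}$ is saturated \wrt\ $\smash{\GFInf^{\floor{\GHSel}}}$ and $\smash{\GFRedI^{\floor{\GHSel}}}$, and $\bot = \floor{\bot} \notin \floor{N}$. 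Theorem~\ref{thm:GF-refutational-completeness} then yields that $\RfN$ is a first-order model of $\floor{N}$.

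With $\RfN$ in hand, Lemma~\ref{lem:model-transfer} provides exactly the interpretation needed: the higher-order interpretation $\IIIho$ built from $\RfN$ is a model of $N$, and in the extensional calculi it is moreover an extensional model. This matches the entailment relation used on the $\GH$ level, which is defined via $\lambda$-free higher-order models for the intensional calculi and via \emph{extensional} $\lambda$-free higher-order models for the extensional calculi. Hence $N$ is satisfiable in the relevant class of interpretations, so $N \not\models \bot$, completing the contrapositive and thus the theorem.

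The only point demanding care---rather than a genuine obstacle---is the bookkeeping about which class of models the entailment $\models$ ranges over: the extensionality clause of Lemma~\ref{lem:model-transfer} is precisely what guarantees that $\IIIho$ is an admissible witness when the calculus is extensional, whereas for the intensional calculi the plain model statement suffices. No new argument is required beyond faithfully invoking the two cited results and observing that $\flooronly$ preserves both the inference system and the redundancy criterion between the $\GH$ and $\GF$ levels.
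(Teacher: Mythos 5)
Your proposal is correct and follows essentially the same route as the paper: the paper presents this theorem explicitly as a summary of the subsection's results, namely that saturation of $N$ transfers via the bijection $\flooronly$ to saturation of $\floor{N}$, that Theorem~\ref{thm:GF-refutational-completeness} then makes $\RfN$ a model of $\floor{N}$, and that Lemma~\ref{lem:model-transfer} lifts this to a (extensional, where required) model $\IIIho$ of $N$. Your handling of the trivial direction and of the intensional/extensional distinction in the entailment relation matches the paper's intent exactly.
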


The construction of $\IIIho$ relies on the specific properties of $\RfN$. It
would not work with an arbitrary first-order interpretation. Transforming a
$\lambda$-free higher-order interpretation into a first-order interpretation is
easier:

  \begin{lemmax} \label{lem:gf-interpretation-from-gh}
    Given an interpretation $\III$
    on $\GH$,
    there exists an interpretation $\III^\GF$ on $\GF$
    such that for any clause $C\in\CGH$ the truth values of
    $C$ in $\III$ and of $\floor{C}$ in $\III^\GF$ coincide.
    \end{lemmax}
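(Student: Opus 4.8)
The plan is to read the higher-order meaning off of $\III = (\UU, \IIty, \II, \EE)$ and repackage it as a first-order interpretation, exploiting the fact that the indexing in $\flooronly$ records exactly the type arguments and argument count needed to recover an application. Since the type constructors $\Sigmaty$ are shared between the two signatures and the $\GF$ level is monomorphic, I would equip $\III^\GF$ with the very same type interpretation $\IIIty$, so that the first-order universe of a ground type $\tau$ is $\interpret{\tau}{\IIIty}{}$; the constructor $\fun$ is now merely an uninterpreted binary type constructor, whose universe $\interpret{\tau\fun\upsilon}{\IIIty}{}$ happens to coincide with the higher-order function universe.

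It then remains to interpret each symbol $\cst{f}_l^{\tuple{\upsilon}_m} \in \Sigma_\GF$, which the encoding equips with argument types $\tuple{\tau}_l\sigma$ and return type $(\tau_{l+1}\fun\cdots\fun\tau_n\fun\tau)\sigma$, where $\sigma = \{\tuple{\alpha}_m \mapsto \tuple{\upsilon}_m\}$. I would define its interpretation to send arguments $a_1, \dots, a_l$ drawn from $\interpret{\tuple{\tau}_l\sigma}{\IIIty}{}$ to the $l$-fold application of the head denotation through $\EE$, namely $\EE(\cdots \EE(\II(\cst{f}, \interpret{\tuple{\upsilon}_m}{\IIIty}{}))(a_1) \cdots)(a_l)$ (omitting the universe subscripts on $\EE$). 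This simply retraces how the higher-order denotation of $\cst{f}\typeargs{\tuple{\upsilon}_m}\;\tuple{u}_l$ is assembled, and because $\EE$ and $\II$ are total and deterministic, the definition involves no choice and is unconditionally well defined.

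With $\III^\GF$ fixed, I would establish by induction on the structure of a ground term $t \in \TGH$, decomposed uniquely as $\cst{f}\typeargs{\tuple{\upsilon}_m}\;\tuple{u}_l$, that $\interpret{\floor{t}}{\III^\GF}{} = \interpreta{t}$. The case $l = 0$ is immediate, since $\floor{\cst{f}\typeargs{\tuple{\upsilon}_m}} = \cst{f}_0^{\tuple{\upsilon}_m}$ is interpreted as $\II(\cst{f}, \interpret{\tuple{\upsilon}_m}{\IIIty}{}) = \interpreta{\cst{f}\typeargs{\tuple{\upsilon}_m}}$; for $l \ge 1$, unfolding the first-order denotation of $\floor{t} = \cst{f}_l^{\tuple{\upsilon}_m}(\floor{u_1}, \dots, \floor{u_l})$, rewriting each $\interpret{\floor{u_i}}{\III^\GF}{}$ to $\interpreta{u_i}$ by the induction hypothesis, and feeding these into the defining $\EE$-cascade reproduces exactly $\interpreta{t}$. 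Once denotations coincide on all ground terms, every ground equation, hence every literal, hence every clause $C$, has the same truth value in $\III$ as $\floor{C}$ does in $\III^\GF$, which is the assertion.

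This construction is essentially bookkeeping, and in contrast to the model transfer of Lemma~\ref{lem:model-transfer} it needs no saturation hypothesis, no well-definedness-up-to-representatives argument, and no extensionality reasoning. The one point demanding care---the main, if modest, obstacle---is the well-typedness check when interpreting $\cst{f}_l^{\tuple{\upsilon}_m}$: I must verify that iterating $\EE$ over the declared argument universes lands inside the universe that the $\GF$ signature assigns to the return type $(\tau_{l+1}\fun\cdots\fun\tau_n\fun\tau)\sigma$. This is precisely what the indexing of symbols by type arguments and argument count in $\flooronly$ guarantees, so the interpretation function is total and type-correct.
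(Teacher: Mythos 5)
Your proposal is correct and takes essentially the same approach as the paper: the paper likewise reuses the higher-order type universes ($\UU^\GF_\tau = \interpret{\tau}{\IIIty}{}$) and interprets each $\cst{f}_{l}^{\tuple{\upsilon}_m}$ by $\II^\GF(\cst{f}_{l}^{\tuple{\upsilon}_m})(\tuple{a}_l) = \interpret{\cst{f}\typeargs{\tuple\upsilon_m}\>\tuple{x}_l}{\III}{\{\tuple{x}_l \mapsto \tuple{a}\}}$, which unfolds to exactly your $\EE$-cascade, followed by the same induction on ground terms and the same lifting to equations, literals, and clauses. The only cosmetic difference is that the paper's valuation-based phrasing makes the induction step invoke the substitution lemma (Lemma~\ref{lem:subst-lemma-general}), whereas your explicitly unfolded definition lets that step go through by direct computation.
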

    \begin{proof}
    Let $\III = (\UU,\IIty,\II,\EE)$ be a $\lambda$-free higher-order interpretation.
    Let $\UU^\GF_\tau = \interpret{\tau}{\IIIty}{}$ be the first-order type universe for the ground type $\tau$.
    For a symbol $\smash{\cst{f}_{l}^{\tuple{\upsilon}_m}} \in
    \Sigma_\GF$ and universe elements $\tuple{a}_l$, let $\II^\GF (\smash{\cst{f}_{l}^{\tuple{\upsilon}_m}})(\tuple{a}_l) =
    \interpret{\cst{f}\typeargs{\tuple\upsilon_m}\>\tuple{x}_l}{\III}{\{\tuple{x}_l \mapsto \tuple{a}\}}$.
    This defines an interpretation $\III^\GF = (\UU^\GF,\II^\GF)$ on $\GF$.
  
    We need to show that for any $C\in\CGH$, $\III \models C$ if and only if $\III^\GF \models \floor{C}$.
    It suffices to show that $\interpret{t}{\III}{} = \interpret{\floor{t}}{\III^\GF}{}$
    for all terms $t\in\TGH$.
    We prove this by induction on $t$.
    Since $t$ is ground, it must be of the form
    $\cst{f}\typeargs{\tuple\upsilon_m}\>\tuple{s}_l$. Then
    $\floor{t} = \smash{\cst{f}_{l}^{\tuple{\upsilon}_m}(\floor{\tuple{s}_l})}$
    and hence
    \[\interpret{\floor{t}}{\III^\GF}{}
    = \II^\GF (\smash{\cst{f}_{l}^{\tuple{\upsilon}_m}})(\interpret{\floor{\tuple{s}_l}}{\III^\GF}{})
    \overset{\smash{\text{IH}}}{=}
    \II^\GF (\smash{\cst{f}_{l}^{\tuple{\upsilon}_m}})(\interpret{\tuple{s}_l}{\III}{})
    = \interpret{\cst{f}\typeargs{\tuple\upsilon_m}\>\tuple{s}_l}{\III}{}
    = \interpret{t}{\III}{}\]
    using the definition of $\II^\GF$ and Lemma~\ref{lem:subst-lemma-general} for the third step.
    \end{proof}

\subsection{The Nonground Higher-Order Level}

To lift the result to the nonground level, we employ the saturation framework of
Waldmann et al.~\cite{waldmann-et-al-2020-saturation}.
It is easy to see that the entailment relation $\models$ on $\GH$ is a consequence relation in the sense of the framework.
It remains to show that our redundancy criterion on $\GH$ is a redundancy criterion in the sense of the framework and that
$\gnd$ is a grounding function in the sense of the framework:

  \begin{lemmax} \label{lem:redundancy-criterion}
    The redundancy criterion for $\GH$ is a redundancy criterion in the sense of the framework.
  \end{lemmax}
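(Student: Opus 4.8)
The plan is to verify, for the pair $(\GHRedI, \GHRedC)$, the four defining conditions of a redundancy criterion in the framework of Waldmann et al.: reduction of $\bot$ (if $N \models \bot$ then $N \setminus \GHRedC(N) \models \bot$), monotonicity, invariance under deletion of redundant clauses, and the requirement that an inference whose conclusion lies in $N$ is redundant. Throughout I would exploit that $(\GFRedI, \GFRedC)$ is already the standard ground first-order superposition redundancy criterion and hence satisfies these conditions on $\GF$, and that $\flooronly$ is an order-preserving bijection. The latter gives the identities $\floor{N \setminus \GHRedC(N)} = \floor{N} \setminus \GFRedC(\floor{N})$ and $\floor{N \setminus N'} = \floor{N} \setminus \floor{N'}$, which are used repeatedly to move between the two levels.

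Monotonicity and the conclusion condition are routine and I would dispatch them first. For monotonicity, $C \in \GHRedC(N)$ is defined as $\floor{C} \in \GFRedC(\floor{N})$, so it transfers directly from monotonicity of $\GFRedC$ along $\flooronly$; the non-\infname{GArgCong}/\infname{GExt} inference case transfers in the same way through $\floor{\iota}$, while the \infname{GArgCong}/\infname{GExt} cases follow because $N \ccup \GHRedC(N)$ grows monotonically and the entailment $\floor{N} \models \floor{\concl(\iota)}$ is preserved under enlarging $N$. For the conclusion condition, $\concl(\iota) \in N$ immediately yields $\floor{\concl(\iota)} \in \floor{N}$ and hence each of the three defining alternatives: $\concl(\iota) \in N \ccup \GHRedC(N)$ in the nonpurifying case, $\floor{N} \models \floor{\concl(\iota)}$ in the purifying case, and $\floor{\iota} \in \GFRedI(\floor{N})$ by the corresponding $\GF$ property otherwise.

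The deletion condition is more delicate. Assuming $N' \subseteq \GHRedC(N)$, i.e.\ $\floor{N'} \subseteq \GFRedC(\floor{N})$, the clause part and the non-\infname{GArgCong}/\infname{GExt} inference part transfer from the corresponding $\GF$ condition via $\flooronly$. The nonpurifying \infname{GArgCong}/\infname{GExt} case is a short set-theoretic argument: if $\concl(\iota) \in N \ccup \GHRedC(N)$, then either $\concl(\iota) \in \GHRedC(N) \subseteq \GHRedC(N \setminus N')$ by the clause part, or $\concl(\iota) \in N$, in which case it lies in $N \setminus N'$ unless it lies in $N'$; but $N' \subseteq \GHRedC(N)$ reduces this last situation to the first case. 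The purifying case requires showing that $\floor{N} \models \floor{\concl(\iota)}$ implies $\floor{N \setminus N'} \models \floor{\concl(\iota)}$, for which I would prove the stronger statement that deleting $\GFRedC(\floor{N})$-redundant clauses leaves the models of $\floor{N}$ unchanged. This goes by well-founded induction on $\succ$, which is well founded on $\TGF$ by Lemma~\ref{lem:order-prop-transfer}: a $\prec$-minimal clause of $\floor{N}$ falsified by a model of $\floor{N} \setminus \floor{N'}$ would be entailed by strictly smaller, hence satisfied, clauses of $\floor{N}$, a contradiction.

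The reduction of $\bot$ is the main obstacle, because the $\GH$ consequence relation is (extensional) higher-order entailment whereas redundancy is defined through first-order entailment on $\GF$. I would bridge the two levels using Lemma~\ref{lem:gf-interpretation-from-gh}: given any higher-order model $\III$ of $N \setminus \GHRedC(N)$ — extensional when the extensional calculi are considered — its first-order image $\III^\GF$ is a model of $\floor{N \setminus \GHRedC(N)} = \floor{N} \setminus \GFRedC(\floor{N})$; by the same well-founded argument as above it in fact models all of $\floor{N}$, and transferring truth values back through Lemma~\ref{lem:gf-interpretation-from-gh} shows that $\III$ itself models $N$. Contrapositively, $N \models \bot$ forces $N \setminus \GHRedC(N) \models \bot$. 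Since the model recovered at the end is the original $\III$, it remains extensional when required, so the argument covers the extensional calculi as well. The crux is thus the interaction of this three-level entailment bridging with the standard first-order redundant-clause-removal argument; once these are in place, the remaining conditions are bookkeeping.
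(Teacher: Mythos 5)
Your proposal is correct and follows essentially the same route as the paper's proof: the same case split into conditions (R1)--(R4), the same transfer between $\GH$ and $\GF$ via the redundancy-preserving bijection $\flooronly$, the same special handling of \infname{GArgCong}/\infname{GExt} in the nonpurifying versus purifying calculi, and the same use of Lemma~\ref{lem:gf-interpretation-from-gh} (in both directions, preserving extensionality) for the entailment condition. The only difference is cosmetic: where the paper cites Bachmair and Ganzinger's Lemmas~5.2 and~5.6 for the first-order-level facts (models of the nonredundant clauses are models of all of $\floor{N}$, and monotonicity/deletion-stability of first-order redundancy), you re-derive them by the standard well-founded induction on $\succ$.
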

  \begin{proof} 
  We must prove the conditions (R1) to (R4)
  defined by Waldmann et al., which, 
  adapted to our context, state the following
  for all clause sets $N,N' \subseteq \CGH$:
  \begin{enumerate}[leftmargin=3em]
    \item[(R1)] if $N \models \bot$, then $N \setminus \GHRedC(N) \models \bot$;
    \item[(R2)] if $N \subseteq N'$, then
        $\GHRedC(N) \subseteq \GHRedC(N')$ 
        and $\GHRedI(N) \subseteq \GHRedI(N')$;
    \item[(R3)] if $N' \subseteq \GHRedC(N)$, then
        $\GHRedC(N) \subseteq \GHRedC(N \setminus N')$
        and $\GHRedI(N) \subseteq \GHRedI(N \setminus N')$;
    \item[(R4)] if $\iota \in \GHInf$ and $\concl(\iota) \in N$, then $\iota \in \GHRedI(N)$.
  \end{enumerate}

  \medskip\noindent
  (R1)\enskip
  It suffices to show that $N\setminus\GHRedC(N) \models N$ for $N\subseteq\CGH$.
  Let $\III$ be a model of $N\setminus\GHRedC(N)$.
  In the extensional calculi, let $\III$ be extensional.
  Then by Lemma~\ref{lem:gf-interpretation-from-gh},
  there exists a model $\III^\GF$ of $\floor{N\setminus\GHRedC(N)} = \floor{N}\setminus\GFRedC(\floor{N})$.
  By Lemma~5.2 of Bachmair and Ganzinger,
  this is also a model of $\floor{N}$. By Lemma~\ref{lem:gf-interpretation-from-gh},
  it follows that $\III \models N$.

  \medskip\noindent
  (R2)\enskip
  By Lemma~5.6(i) of Bachmair and Ganzinger, this holds on $\GF$.
  For clauses and all inferences except \infname{GArgCong} and \infname{GExt},  this implies
  that it holds on $\GH$ as well because $\flooronly$ is a redundancy-preserving
  bijection between $\CGH$ and $\CGF$ and between these inferences.
  For \infname{GArgCong} and \infname{GExt} inferences,
  it holds because it holds on clauses.

  \medskip\noindent
  (R3)\enskip
  The proof is analogous to (R2), with Lemma~5.6(ii) of Bachmair and Ganzinger instead
  of Lemma~5.6(i).

  \medskip\noindent
  (R4)\enskip
  We must show that for all inferences with
  $\concl(\iota) \in N$, we have $\iota \in \GHRedI(N)$.
  Since $\concl(\iota) \prec \mprem(\iota)$ for all $\iota\in\GFInf$, this holds on $\GF$.
  For all inferences except \infname{GArgCong} and \infname{GExt}, since $\flooronly$ is a bijection preserving redundancy,
  it follows that it also holds also on $\GH$.
  For \infname{GArgCong} and \infname{GExt} inferences, it holds by definition.
  \end{proof}

\begin{lemmax} \label{lem:grounding-function}
  The grounding functions $\gnd^\GHSel$ for $\GHSel\in\gnd(\HSel)$
  are grounding functions in the sense of the framework.
\end{lemmax}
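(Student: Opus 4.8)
The plan is to verify the three defining conditions of a grounding function from Waldmann et al.~\cite{waldmann-et-al-2020-saturation}---call them (G1)--(G3)---for each function $\gnd^\GHSel$. Adapted to our notation they read: (G1) $\varnothing \neq \gnd(\bot) \subseteq \{\bot\}$; (G2) if $\bot \in \gnd(C)$ then $C = \bot$; and (G3) for every $\iota \in \HInf$ with $\gnd^\GHSel(\iota) \neq \Undef$, we have $\gnd^\GHSel(\iota) \subseteq \GHRedI^\GHSel(\gnd(\concl(\iota)))$. Conditions (G1) and (G2) are immediate: the empty clause has only itself as a ground instance, and grounding preserves the number of literals, so $C\theta = \bot$ forces $C = \bot$. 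All the work is in (G3).

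I would establish (G3) by a case analysis on the rule underlying $\iota$. Since $\gnd^\GHSel(\iota) = \Undef$ exactly for \infname{PosExt}, those inferences are excluded and (G3) is vacuous for them. For the remaining $\iota$, fix $\iota' \in \gnd^\GHSel(\iota)$, say with $\prem(\iota') = \prem(\iota)\theta$ and $\concl(\iota') = \preconcl(\iota)\theta$; by definition $\iota' \in \GHInf^\GHSel$, the matching of selected literals being precisely the content of $\GHSel \in \gnd(\HSel)$. I must show $\iota' \in \GHRedI^\GHSel(\gnd(\concl(\iota)))$. For the two nonpurifying calculi $\preconcl(\iota) = \concl(\iota)$, so $\concl(\iota') = \concl(\iota)\theta \in \gnd(\concl(\iota))$, and condition (R4) from Lemma~\ref{lem:redundancy-criterion} yields $\iota' \in \GHRedI^\GHSel(\gnd(\concl(\iota)))$ directly. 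The same argument covers \infname{Sup}, \infname{ERes}, and \infname{EFact} in the \emph{intensional} purifying calculus, where these rules preserve purity and hence $\preconcl(\iota)=\concl(\iota)$.

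The remaining cases are those in which the conclusion is genuinely purified: \infname{ArgCong} and \infname{Ext} in both purifying calculi, plus \infname{Sup}, \infname{ERes}, \infname{EFact} in the extensional purifying calculus. The device I would use is to let $\theta'$ extend $\theta$ by mapping each fresh purification variable $x'$ to $x\theta$; then $\concl(\iota)\theta' \in \gnd(\concl(\iota))$ coincides with $\concl(\iota') = \preconcl(\iota)\theta$ except for extra purification literals of the form $x\theta \noteq x\theta$. As these are trivially false, $\floor{\concl(\iota)\theta'}$ and $\floor{\concl(\iota')}$ are logically equivalent, so in particular $\floor{\concl(\iota)\theta'} \models \floor{\concl(\iota')}$. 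When $\iota'$ is a \infname{GArgCong} or \infname{GExt} instance (the ground images of \infname{ArgCong} and \infname{Ext}), this equivalence is all that is required, since for such inferences the purifying redundancy criterion only demands $\floor{\gnd(\concl(\iota))} \models \floor{\concl(\iota')}$ with \emph{no} ordering side condition.

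The hard part will be the ground \infname{Sup}/\infname{ERes}/\infname{EFact} instances of the extensional purifying calculus, where $\GHRedI$ unfolds to the first-order criterion $\floor{\iota'} \in \GFRedI^{\floor{\GHSel}}(\floor{\gnd(\concl(\iota))})$ and therefore carries the constraint ``$\{D \prec \mprem(\floor{\iota'})\} \models \concl(\floor{\iota'})$''. Here the natural entailing clause $\floor{\concl(\iota)\theta'}$ exceeds $\floor{\concl(\iota')}$ by the trivially false literal $\floor{x\theta} \noteq \floor{x\theta}$, and because $\succ$ is compatible with green contexts but \emph{not} with arguments, such a literal is not automatically bounded by $\mprem(\iota')$. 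I would discharge this by showing that the purified variable $x$ already occurs in the main premise $C = \mprem(\iota)$, so that $\floor{x\theta} \noteq \floor{x\theta}$ is dominated by the eligible literal of $\mprem(\floor{\iota'})$; combined with $\concl(\floor{\iota'}) \prec \mprem(\floor{\iota'})$ this gives $\floor{\concl(\iota)\theta'} \prec \mprem(\floor{\iota'})$, and the equivalence above then delivers $\floor{\iota'} \in \GFRedI^{\floor{\GHSel}}(\floor{\gnd(\concl(\iota))})$. Tracking exactly where the purification variables originate and arguing that their groundings remain below the main premise is the delicate step, and it is the one I expect to demand the most care; everything else is bookkeeping against Lemma~\ref{lem:redundancy-criterion}.
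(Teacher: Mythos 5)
Your overall architecture matches the paper's: (G1) and (G2) are immediate, \infname{PosExt} is excluded because its grounding is $\Undef$, the nonpurifying cases follow from $\concl(\iota') = \concl(\iota)\theta \in \gnd(\concl(\iota))$ (your appeal to condition (R4) of Lemma~\ref{lem:redundancy-criterion} is in fact a slightly cleaner route than the paper's, which redoes the ordering bound directly), and for the purified \infname{GArgCong}/\infname{GExt} instances your substitution $\theta'$ with $x'\theta' = x\theta$, together with the observation that the extra literals $x\theta \noteq x\theta$ are trivially false, is exactly the paper's device. Your reliance on purity preservation to dispose of \infname{Sup}/\infname{ERes}/\infname{EFact} in the intensional purifying calculus leans on a claim the paper states but whose proof of this lemma does not actually use; the paper instead treats both purifying calculi uniformly.

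The genuine gap is in the case you yourself flag as hard: ground \infname{Sup}/\infname{ERes}/\infname{EFact} instances with purified conclusions, where the ordering constraint $\floor{\concl(\iota)\theta'} \prec \mprem(\floor{\iota'})$ must be established. Your plan is to show that each purified variable ``already occurs in the main premise $C$'' so that its grounded purification literal is dominated by the eligible literal of $\mprem(\floor{\iota'})$. That claim is false in general: in the paper's analysis of a \infname{Sup} inference into a \emph{selected} negative literal, purification is triggered precisely by variables stemming from the \emph{side} premise $D$ (the selection restrictions force $\sigma = \mgu(t,u)$ to be the identity on functional variables of the main premise, but not on those of $D$), and the needed bound is obtained through $D$'s maximal term: $x$ or its partner occurs applied in $D$, hence $x\theta' \prec t\theta$, which is a subterm of the rewritten literal. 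Moreover, you never invoke the selection restrictions (``a literal must not be selected if it contains a variable of functional type''), yet they are indispensable. For an \infname{ERes} inference on a selected literal, the selected literal need not be maximal in the clause, so an applied occurrence of a purified variable elsewhere in the conclusion gives no bound against it and the domination you want can simply fail; the actual resolution is that the selection restrictions force $\sigma$ to be the identity on all functional variables (its range is built from the selected literal, which contains none), so that no purification occurs at all and $\concl(\iota)\theta' = \preconcl(\iota)\theta'$. In short, the proof needs a case distinction you omit---whether any literal is selected in $\mprem(\floor{\iota'})$: when nothing is selected, the eligible literal is maximal and the applied occurrence of the purified variable \emph{in the conclusion} suffices for the bound; when something is selected, only \infname{ERes} and \infname{Sup} into negative literals are possible and each requires the separate, selection-restriction-based argument above.
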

\begin{proof}
We must prove the conditions (G1) to (G3)
defined by Waldmann et al., which, 
adapted to our context, state the following:
\begin{enumerate}[leftmargin=3em]
\item[(G1)] $\gnd(\bot) = \{ \bot \}$;
\item[(G2)] for every $C \in \CGH$, if $\bot \in \gnd(C)$, then $C = \bot$;
\item[(G3)] for every $\iota \in \HInf$, if $\gnd^\GHSel(\iota) \not= \Undef$, 
   then $\gnd^\GHSel(\iota) \subseteq \GHRedI^\GHSel(\gnd(\concl(\iota)))$.
\end{enumerate}

Clearly, $C = \bot$ if and only if $\bot \in \gnd(C)$ if and only if $\gnd(C) =
\{\bot\}$, proving (G1) and (G2).
For (G3), we have to show for all non-\infname{PosExt} inferences $\iota\in\HInf$ that $\gnd^\GHSel(\iota)\subseteq\GHRedI^\GHSel(\gnd(\concl(\iota)))$.
Let $\iota\in\HInf$ and $\iota'\in\gnd^\GHSel(\iota)$. By the definition of $\gnd^\GHSel$,
there exists a grounding substitution~$\theta$
such that $\prem(\iota') = \prem(\iota)\theta$
and $\concl(\iota') = \preconcl(\iota)\theta$. We want to show that $\iota'\in \GHRedI^\GHSel(\gnd(\concl(\iota)))$.

If $\iota'$ is not an \infname{GArgCong} or \infname{GExt} inference, by the definition of inference redundancy,
it suffices to show that
$\{D\in\floor{\gnd(\concl(\iota))} \mid D \prec \mprem(\floor{\iota'})\} \models \concl(\floor{\iota'})$.
We define a substitution $\theta'$ that extends $\theta$ to all variables in $\concl(\iota)$.
Due to purification, the clause $\concl(\iota)$ may contain variables not present in $\preconcl(\iota)$.
For each such variable $x'$, let $x$ be the variable in $\preconcl(\iota)$ that $x'$ stems from and
define $x'\theta' = x\theta$. Then the clause $\floor{\concl(\iota)\theta'}$ differs from
the clause $\floor{\concl(\iota')} = \floor{\preconcl(\iota)\theta'}$ only
in some additional grounded purification literals,
which all have the form $t \noteq t$ and are thus trivially false in any interpretation.
Hence, $\floor{\concl(\iota)\theta'} \models \floor{\concl(\iota')}$.
Since one of the variables of a purification literal must appear applied in the clause,
for each grounded purification literal $t \noteq t$ the term $t$ must be smaller than the maximal
term of the clause $\floor{\concl(\iota')}$.

If no literals are selected in $\mprem(\floor{\iota'})$,
inspection of the rules in
$\GFInf$ shows that $\floor{\concl(\iota)\theta'} \prec \mprem(\floor{\iota'})$.
Otherwise,
$\iota'$ can only be an \infname{ERes} inference or a \infname{Sup} inference into a negative literal.
If it is an \infname{ERes} inference, due to the selection restrictions,
the substitution $\sigma$ used in $\iota$ is the identity
for all variables of functional type. Therefore, applying $\sigma$ cannot trigger any purification
and hence $\floor{\concl(\iota)\theta'} = \floor{\preconcl(\iota)\theta'} \prec \mprem(\floor{\iota'})$.
If $\iota'$ is a \infname{Sup} inference into a negative literal, due to the selection restrictions,
the substitution $\sigma = \mgu(t,u)$ used in $\iota$ is the identity
for all variables of functional type that stem from the main premise.
Therefore the variables from the main premise $C$ need not be purified.
The variables from the side premise $D$ might need to be purified, yielding purification literals
of the form $x \noteq y$ where $x\theta' = y\theta'$. Then $x$ or $y$ must appear applied in $D$
and hence $x\theta'$ is smaller than $t\theta'$. Again, it follows that
$\floor{\concl(\iota)\theta'} \prec \mprem(\floor{\iota'})$.

This proves
$\{D\in\floor{\gnd(\concl(\iota))} \mid D \prec \mprem(\floor{\iota'})\} \models \concl(\floor{\iota'})$.

In the nonpurifying calculi, if $\iota'$ is an \infname{GArgCong} or \infname{GExt} inference,
it suffices to show that $\concl(\iota')\in \gnd(\concl(\iota))$.
This holds because $\concl(\iota') = \preconcl(\iota)\theta = \concl(\iota)\theta$.
In the purifying calculi, if $\iota'$ is an \infname{GArgCong} or \infname{GExt} inference,
we must show that $\floor{\gnd(\concl(\iota))} \models \floor{\concl(\iota')}$.
Defining $\theta'$ as above, we have $\floor{\concl(\iota)\theta'} \models \floor{\concl(\iota')}$,
as desired.
\end{proof}

To lift the completeness result of the previous section to
the nonground calculus $\HInf$,
we employ Theorem~14 of Waldmann et al., which, 
adapted to
our context, is stated as follows.
The theorem uses the notation $\Inf(N)$ 
to denote the set of $\Inf$-inferences
whose premises are in $N$,
for an inference system $\Inf$ and a clause set $N$.
Moreover, it uses the 
Herbrand entailment $\Gmodels$ on $\CHH$, which is
defined as $N_1 \Gmodels N_2$ if $\gnd(N_1) \models \gnd(N_2)$.

\begin{theoremx}[Lifting theorem]
\label{thm:lifting-theorem}
If $\GHInf^\GHSel$ is statically refutationally complete
\wrt\ $(\GHRedI^\GHSel, \GHRedC)$
for every $\GHSel \in \gnd(\HSel)$,
and if for every $N\subseteq \CHH$ that is
saturated \wrt\ $\HInf$ and $\HRedI$
there exists a $\GHSel \in \gnd(\HSel)$
such that
$\GHInf^\GHSel(\gnd(N)) 
\subseteq \gnd^\GHSel(\HInf(N)) \cup \GHRedI^\GHSel(\gnd(N))$,
then
$\HInf$ is statically refutationally complete \wrt\ $(\HRedI, \HRedC)$
and $\Gmodels$.
\end{theoremx}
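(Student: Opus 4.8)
The plan is to recognize this statement as a direct instantiation of the abstract saturation framework of Waldmann et al.~\cite{waldmann-et-al-2020-saturation}---specifically, their lifting theorem (Theorem~14)---applied to the grounding that connects the $\HH$ and $\GH$ levels. The framework is engineered precisely so that, once one has supplied a consequence relation, a redundancy criterion, a grounding function, ground static completeness, and a saturation-lifting inclusion, nonground static completeness follows mechanically. Accordingly, the proof is essentially a verification-and-invocation argument: first confirm that the framework's structural prerequisites hold in our setting, then observe that the two explicit hypotheses of the theorem are exactly the two remaining premises the framework requires, and finally appeal to the framework theorem.

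Concretely, I would first check the three prerequisites. The entailment $\models$ on $\GH$ must be a consequence relation in the sense of the framework; this is immediate from the definition of $\models$ and is already noted as easy to see. The pair $(\GHRedI^\GHSel, \GHRedC)$ must be a redundancy criterion in the framework's sense, which has been established in Lemma~\ref{lem:redundancy-criterion} by verifying conditions (R1)--(R4). Finally, the family of grounding functions $\gnd^\GHSel$ for $\GHSel \in \gnd(\HSel)$ must be grounding functions in the framework's sense, which has been established in Lemma~\ref{lem:grounding-function} by verifying (G1)--(G3). With these three facts in hand, the ambient machinery of the framework is fully instantiated for our three-level construction.

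Given this setup, the two displayed hypotheses of the theorem align one-to-one with the two premises of the framework's lifting theorem. The first hypothesis---that $\GHInf^\GHSel$ is statically refutationally complete \wrt\ $(\GHRedI^\GHSel, \GHRedC)$ for every $\GHSel \in \gnd(\HSel)$---is the ground completeness input, supplied by Theorem~\ref{thm:GH-refutational-completeness}. The second hypothesis---that for every saturated $N \subseteq \CHH$ there is a selection function $\GHSel$ with $\GHInf^\GHSel(\gnd(N)) \subseteq \gnd^\GHSel(\HInf(N)) \cup \GHRedI^\GHSel(\gnd(N))$---is exactly the saturation-lifting condition, asserting that every ground inference from ground instances of $N$ either lifts to a nonground inference of $\HInf$ or is already ground-redundant. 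I would then simply invoke the framework theorem to conclude that $\HInf$ is statically refutationally complete \wrt\ $(\HRedI, \HRedC)$ and the Herbrand entailment $\Gmodels$.

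The main obstacle is not located in this theorem, which is a clean instantiation, but in making certain that the abstract entailment used at the nonground level in the framework coincides with our Herbrand entailment $\Gmodels$ under the grounding function $\gnd$. Because no single $\lambda$-free higher-order entailment is imposed directly on $\CHH$, everything at the top level is routed through $\gnd$ into $\GH$, and the framework's notion of nonground entailment is defined exactly as $N_1 \Gmodels N_2 \iff \gnd(N_1) \models \gnd(N_2)$; I would verify that this matches the framework's abstract lifting of $\models$ along $\gnd$, so that the conclusion is meaningful and the unsatisfiability statement $N \Gmodels \bot$ reduces to $\gnd(N) \models \bot$ as intended. The genuine mathematical difficulty has been deliberately factored out: ground completeness lives in Theorem~\ref{thm:GH-refutational-completeness}, and the saturation-lifting inclusion---by far the hardest ingredient, since it is where the variable conditions and purification must be shown to cover every nonredundant ground inference---is discharged separately.
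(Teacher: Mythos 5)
Your proposal follows the same route as the paper: instantiate Theorem~14 of Waldmann et al.\ with $\HH$ as the nonground level, $\GH$ as the ground level, and $\gnd(\HSel)$ as the family of parameters; note that $\models$ on $\GH$ is a consequence relation; cite Lemma~\ref{lem:redundancy-criterion} and Lemma~\ref{lem:grounding-function} for the framework's prerequisites; and feed the theorem's two displayed hypotheses directly into the framework as its ground-completeness and saturation-lifting premises. Up to that point your argument and the paper's proof coincide.

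There is, however, one genuine gap: you never verify that the pair $(\HRedI,\HRedC)$ appearing in the conclusion is actually the redundancy criterion that the framework's theorem delivers, and this is not automatic, because $\HRedC$ contains a subsumption component. Recall that $C \in \HRedC(N)$ also holds when every ground instance $D\in\gnd(C)$ is a ground instance of some $C'\in N$ with $C \sqsupset C'$, where $\sqsupset$ is strict subsumption. The paper identifies $(\HRedI,\HRedC)$ with the intersected lifted criterion $\mathit{Red}^{\ccap\gnd,\sqsupset}$ of Waldmann et al., but their Theorem~14 is stated only for ${\sqsupset} = \varnothing$; invoking it ``simply,'' as you propose, yields completeness \wrt\ the lifted criterion \emph{without} the subsumption disjunct, which is not the statement being proved. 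The paper closes this gap by additionally appealing to Lemma~16 of Waldmann et al., which transfers static refutational completeness to the criterion with nonempty $\sqsupset$ (this works because the inference component of the criterion is unaffected by $\sqsupset$, so saturation \wrt\ the two criteria is the same notion). Your concern about $\Gmodels$, by contrast, is unproblematic: the framework's nonground entailment is by definition the Herbrand lifting of $\models$ along $\gnd$, exactly as you describe, and the reconciliation with Tarski entailment is deliberately deferred to Lemma~\ref{lem:herbrand-tarski}, outside this theorem.
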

\begin{proof}
  This is essentially Theorem~14 %
  of Waldmann et al.
  We take $\HH$ for $\mathbf{F}$, $\GH$ for $\mathbf{G}$, and $\gnd(\HSel)$ for $Q$.
  It is easy to see that the entailment relation $\models$ on $\GH$ is a consequence relation in the sense
  of the framework. By Lemma~\ref{lem:redundancy-criterion} and~\ref{lem:grounding-function},
  $(\GHRedI^\GHSel,\GHRedC)$ is a redundancy criterion
  in the sense of the framework,
  and $\gnd^\GHSel$ are grounding functions
  in the sense of the framework,
  for all $\GHSel\in\gnd(\HSel)$.
  The redundancy criterion $(\HRedI,\HRedC)$
  matches exactly the intersected lifted redundancy criterion
   $\mathit{Red}^{\ccap\gnd,\sqsupset}$
  of Waldmann et al.
  Theorem~14 %
  of Waldmann et al.\ states the theorem only for ${\sqsupset} = \varnothing$.
  By Lemma~16 %
  of Waldmann et al.,
  it also holds if ${\sqsupset} \not= \varnothing$.
\end{proof}

Let $N\subseteq\CHH$ be a clause set saturated \wrt\ $\HInf$ and $\HRedI$.
We assume that $\HSel$ fulfills the selection restrictions introduced in Section~\ref{ssec:the-inference-rules}.
For the above theorem to apply,
we need to show that there exists a selection
function $\GHSel\in\gnd(\HSel)$ such that
all inferences $\iota\in\GHInf^\GHSel$ with $\prem(\iota)\in\gnd(N)$
are liftable or redundant.
Here, by \emph{liftable}, we mean that $\iota$ is a $\smash{\gnd^\GHSel}$-ground instance
of a $\smash{\HInf}$-inference from $N$;
by \emph{redundant}, we mean that $\iota\in\smash{\GHRedI^\GHSel(\gnd(N))}$.

To choose the right selection function $\GHSel\in\gnd(\HSel)$, we observe that each ground clause
$C\in\gnd(N)$ must have at least one corresponding clause $D\in N$ such that $C$ is a ground instance of $D$.
We choose one of them for each $C\in\gnd(N)$, which we denote by $\gnd^{-1}(C)$. Then let $\GHSel$ select those literals in $C$ that correspond to
the literals selected by $\HSel$ in $\gnd^{-1}(C)$. Given this selection function
$\GHSel$, we can show that all inferences from $\gnd(N)$ are liftable or redundant.

All non-\infname{Sup} inferences in $\GHInf$ are liftable (Lemma~\ref{lem:lifting1}).
For \infname{Sup}, some inferences
are liftable (Lemma~\ref{lem:lifting2}) and some are redundant (Lemma~\ref{lem:nonliftable-sup-redundant}).
As in standard superposition,
\infname{Sup} inferences into positions below variables are redundant.
The variable condition of each of the four calculi is
designed to cover the nonredundant
\infname{Sup} inferences into positions of variable-headed terms,
which makes these inferences liftable.

  \begin{lemmax}\label{lem:sigma-theta-eq-theta}
    Let $\sigma$ be the most general unifier of $s$ and $s'$. Let
    $\theta$ be an arbitrary unifier of $s$ and $s'$. Then $\sigma\theta =
    \theta$.
  \end{lemmax}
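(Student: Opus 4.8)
The plan is to derive the identity $\sigma\theta = \theta$ directly from the universal property of the most general unifier together with the idempotence of~$\sigma$. Recall from the definition of $\mgu$ that, since $\theta$ is itself a unifier of $s$ and $s'$, there exists a substitution $\rho$ with $\alpha\theta = \alpha\sigma\rho$ and $x\theta = x\sigma\rho$ for every type variable $\alpha \in \Vty$ and every term variable $x \in \VV$; in other words, $\theta = \sigma\rho$ as substitutions. This factorization is the only consequence of mgu-ness that I would use.

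First I would record that the most general unifier under consideration is idempotent, i.e., $v\sigma\sigma = v\sigma$ for every type or term variable $v$. This holds because unification in the $\lambda$-free fragment coincides with ordinary first-order unification (via the applicative reading of terms), and the standard unification algorithm returns a substitution whose range variables are disjoint from its domain; such a substitution is idempotent.

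Then the result follows by a one-line composition argument. For an arbitrary type or term variable $v$ I compute, using $\theta = \sigma\rho$ in the first and last step, associativity of substitution composition to regroup, and idempotence in the middle,
\[
v\sigma\theta = (v\sigma)(\sigma\rho) = (v\sigma\sigma)\rho = (v\sigma)\rho = v\theta.
\]
Since $v$ was arbitrary, $\sigma\theta = \theta$.

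The step I expect to be the main obstacle is the reliance on idempotence: the abstract universal property alone does not force $\sigma\theta = \theta$, because a renaming variant of an mgu—for instance one that additionally swaps two variables foreign to $s$ and $s'$—is still a most general unifier in the sense of the definition, yet breaks the identity. Thus the key point to get right is that the particular $\sigma = \mgu(s,s')$ at hand is the canonical, idempotent unifier; once this is established, the computation above is routine.
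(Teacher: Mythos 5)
Your proof is correct and takes essentially the same route as the paper: the paper likewise factors $\theta = \sigma\rho$ via the mgu property, assumes $\sigma$ idempotent (citing Fitting's Corollary~7.2.11 rather than arguing from the unification algorithm), and concludes $\sigma\theta = \sigma\sigma\rho = \sigma\rho = \theta$. Your closing caveat about non-idempotent renaming variants is exactly the point the paper papers over with its ``without loss of generality'' clause.
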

  \begin{proof}
    Like in first-order logic, we can assume that $\sigma$ is idempotent without
    loss of generality \cite[Corollary~7.2.11]{fitting-1996}.
    Since $\sigma$ is most general, there exists a substitution $\rho$ such
    that $\sigma\rho = \theta$. Therefore, by idempotence, $\sigma\theta =
    \sigma\sigma\rho = \sigma\rho = \theta$.
  \end{proof}

\begin{lemmax}[Lifting of \infname{ERes}, \infname{EFact}, \infname{GArgCong}, and \infname{GExt}]
  \label{lem:lifting1}
  All \infname{ERes}, \infname{EFact}, \infname{GArgCong}, and \infname{GExt} inferences are liftable.
\end{lemmax}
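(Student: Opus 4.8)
The plan is to work with the selection function $\GHSel \in \gnd(\HSel)$ fixed just before the lemma, which selects in each ground clause $C \in \gnd(N)$ exactly the literals corresponding to those $\HSel$ selects in the distinguished preimage $\gnd^{-1}(C) \in N$. Given a ground inference $\iota'$ of one of the four kinds whose premises lie in $\gnd(N)$, I would exhibit an $\HInf$-inference $\iota$ from $N$ together with a grounding substitution $\theta$ satisfying $\prem(\iota') = \prem(\iota)\theta$ and $\concl(\iota') = \preconcl(\iota)\theta$; this is precisely what liftability demands. Each premise of $\iota'$ has the form $D\theta$ with $D = \gnd^{-1}(D\theta) \in N$, and by the choice of $\GHSel$ the ground literals participating in $\iota'$ correspond to literals of $D$. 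I would let $\iota$ act on exactly those literals of $D$.

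For \infname{ERes} and \infname{EFact}, the two ground terms that the rule unifies coincide, say $s = s'$. Lifting them to subterms $\tilde s, \tilde s'$ of $D$ with $\tilde s\theta = s$ and $\tilde s'\theta = s'$ shows that $\theta$ unifies $\tilde s$ and $\tilde s'$, so $\sigma = \mgu(\tilde s, \tilde s')$ exists and, by Lemma~\ref{lem:sigma-theta-eq-theta}, satisfies $\sigma\theta = \theta$. The corresponding inference is then the \infname{ERes} (resp.\ \infname{EFact}) inference on $D$ with unifier $\sigma$, and $\sigma\theta = \theta$ yields $\preconcl(\iota)\theta = \concl(\iota')$. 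The eligibility conditions transfer because selection was chosen to match via $\gnd^{-1}$; the ordering condition of \infname{EFact}, namely $\tilde s\sigma \not\preceq \tilde t\sigma$, follows from the ground condition $s \not\preceq t$ together with stability of $\succ$ under grounding substitutions, since any nonground violation would ground, through $\theta$, to a violation at the ground level.

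For \infname{GArgCong}, the ground inference appends a ground argument tuple $\tuple{u}_n$ to an eligible literal $s \eq s'$, which lifts to $\tilde s \eq \tilde s'$ in $D$; write $\tau$ for the type of $\tilde s$. I would pick the matching \infname{ArgCong} conclusion according to $\tau$: if $\tau$ already has at least $n$ leading arrows, the identity type substitution with a tuple of $n$ fresh term variables suffices; if $\tau$ has only $k < n$ leading arrows, then its result type must be a type variable $\beta$ with $\beta\theta$ functional, and I would take the \infname{ArgCong} conclusion that instantiates $\beta$ by $\tuple{\alpha}_{n-k} \fun \beta'$ and uses $n$ fresh term variables. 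In either case I extend $\theta$ to send the fresh type variables to the appropriate components of $\beta\theta$ and the fresh term variables to $\tuple{u}_n$; the associated type substitution $\sigma$ again satisfies $\sigma\theta = \theta$, so the conclusion grounds correctly, and strict eligibility lifts as above. The case \infname{GExt} is immediate: it is premise-free (hence vacuously an inference from $N$), and its conclusions are exactly the ground instances of (\infname{Ext}), so each is a $\gnd^\GHSel$-ground instance of the premise-free \infname{Ext} inference, whose $\preconcl$ is (\infname{Ext}) before purification.

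I expect the main obstacle to be the \infname{GArgCong} case, specifically the bookkeeping when the head's result type is a type variable that $\theta$ turns into a function type: one must check that the type-instantiating family of \infname{ArgCong} conclusions supplies exactly the tuple length $n$ demanded by the ground inference and that the extended $\theta$ still witnesses $\sigma\theta = \theta$. A secondary, routine-but-pervasive point is the lifting of the eligibility and maximality side conditions, which throughout relies on the correspondence between $\GHSel$ and $\HSel$ and on stability of $\succ$ under grounding.
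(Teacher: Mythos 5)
Your proposal is correct and follows essentially the same route as the paper's proof: fix the matching selection function, take the $\gnd^{-1}$ preimages as premises, obtain the nonground unifier (or most general type substitution) and use Lemma~\ref{lem:sigma-theta-eq-theta} to conclude $\sigma\theta = \theta$, so that the ground inference is the $\theta$-instance of the lifted one, with \infname{GExt} handled trivially as a premise-free rule. Your treatment of \infname{GArgCong} is in fact somewhat more detailed than the paper's, which simply takes ``the most general substitution such that $s\sigma$ and $s'\sigma$ take $n$ arguments'' without spelling out the case split on whether the result type is a type variable.
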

\begin{proof}
  \infname{ERes}: Let $\iota\in\GHInf^\GHSel$ be an \infname{ERes} inference with $\prem(\iota)\in\gnd(N)$.
  Then $\iota$ is of the form
  \[\namedinference{ERes}{C\theta~=~C'\theta \llor s\theta \noteq s'\theta}{C'\theta}\]
  where $\gnd^{-1}(C\theta) = C = C' \llor s \noteq s'$
  and the literal $s\theta \noteq s'\theta$
  is eligible \wrt\ $\GHSel$.
  Let $\sigma = \mgu(s,s')$.
  It follows that
  $s \noteq s'$ is eligible in $C$ \wrt\ $\sigma$ and $\HSel$.
  Moreover, $s\theta$ and $s'\theta$ are unifiable and ground,
  and therefore $s\theta = s'\theta$.
  Thus, the following inference $\iota'\in\HInf$ is applicable:
  \[\namedinference{ERes}{C' \mathrel\lor s \not\eq s'}{\pure(C'\sigma)}\]
  (where $\pure$ is the identity in the nonpurifying calculi).
  By Lemma~\ref{lem:sigma-theta-eq-theta}, we have $C'\sigma\theta = C'\theta$.
  Therefore, $\iota$ is the $\theta$-ground instance of $\iota'$
  and is therefore liftable.

  \medskip

  \noindent
  \infname{EFact}:\enskip Analogously, if $\iota\in\GHInf^\GHSel$
  is an \infname{EFact} inference with $\prem(\iota)\in\gnd(N)$,
  then $\iota$ is of the form
  \[\namedinference{EFact}{C\theta~=~C'\theta \llor s'\theta \eq t'\theta \llor s\theta \eq t\theta}
  {C'\theta \llor t\theta \noteq t'\theta \llor s\theta \eq t'\theta}\]
  where $\gnd^{-1}(C\theta) = C =  C' \llor s' \eq t' \llor s \eq t$,
  the literal $s\theta \eq t\theta$ is eligible in $C$ \wrt\ $\GHSel$,
  and $s\theta\not\prec t\theta$.
  Let $\sigma = \mgu(s,s')$.
  Hence, $s\eq t$ is eligible in $C$ \wrt\ $\sigma$ and $\HSel$. We have $s\not\prec t$.
  Moreover, $s\theta$ and $s'\theta$ are unifiable and ground. Hence,
  $s\theta = s'\theta$.
  Thus, the following inference $\iota'\in\HInf$ is applicable:
  \[\namedinference{EFact}{C' \llor s' \eq t' \llor s \eq t}
  {\pure((C' \llor t \noteq t' \llor s \eq t')\sigma)} \]
  By Lemma~\ref{lem:sigma-theta-eq-theta}, we have $\preconcl(\iota')\theta = \concl(\iota)$.
  Hence, $\iota$ is the $\theta$-ground instance of $\iota'$
  and is therefore liftable.

  \medskip

  \noindent
  \infname{GArgCong}:
  Let $\iota\in\GHInf^\GHSel$ be a \infname{GArgCong} inference with $\prem(\iota)\in\gnd(N)$.
  Then $\iota$ is of the form
  \[\namedinference{GArgCong}{C\theta~=~C'\theta \llor s\theta \eq s'\theta}
  {C'\theta \llor s\theta\>\tuple{u}_n \eq s'\theta\>\tuple{u}_n}\]
  where $\gnd^{-1}(C\theta) = C = C' \llor s \eq s'$,
  the literal $s\theta \eq s'\theta$
  is strictly eligible \wrt\ $\GHSel$, and
  $s\theta$~and~$s'\theta$ are of functional type.
  It follows that $s$~and~$s'$ have either a functional
  or a polymorphic type.
  Let $\sigma$ be the most general substitution
  such that $s\sigma$ and $s'\sigma$ take $n$ arguments.
  Then $s \noteq s'$ is eligible in $C$ \wrt\ $\sigma$ and $\HSel$.
  Hence the following inference $\iota'\in\HInf$ is applicable:
  \[\namedinference{ArgCong}{C' \llor s \eq s'}
  {\pure(C'\sigma \llor s\sigma\>\tuple{x}_n \eq s'\sigma\>\tuple{x}_n)}\]
  Then $\iota$ is a ground instance of $\iota'$
  and is therefore liftable.

\medskip

\noindent
\infname{GExt}: The conclusion of a \infname{GExt} inference in $\GHInf$ is by definition a
ground instance of the conclusion of the \infname{Ext} inference in $\HInf$ before purification.
Hence, the \infname{GExt} inference is a ground instance of the \infname{Ext} inference.
Therefore it is liftable.
\end{proof}

\begin{lemmax}[Lifting of \infname{Sup}]
  \label{lem:lifting2}
  Let $\iota\in\GHInf^\GHSel$ be a \infname{Sup} inference
  \[\namedinference{Sup}
  {D'\theta \mathrel\lor t\theta \eq t'\theta \hypsep C'\theta \mathrel\lor \fosubterm{s\theta}{t\theta}_p \doteq s'\theta}
  {D'\theta \mathrel\lor C'\theta \mathrel\lor \fosubterm{s\theta}{ t'\theta}_p \doteq s'\theta}
  \]
  where $\gnd^{-1}(D\theta) = D = D' \llor t \eq t'$ and
  $\gnd^{-1}(C\theta) = C = C' \llor s \doteq s'$.
  Suppose that the position $p$ exists as a green subterm in $s$.
  Let $u$ be the green subterm of $s$ at that position and $\sigma = \mgu(t,u)$
  {\upshape(}which exists since $\theta$ is a unifier{\upshape)}.
  If the variable condition holds for $C$, $t$, $t'$, $u$, and $\sigma$,
  then $\iota$ is liftable.
\end{lemmax}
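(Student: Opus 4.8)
The plan is to exhibit a single nonground \infname{Sup} inference $\iota'\in\HInf$ from the premises $D$ and $C$ and to show that the given ground inference $\iota$ is one of its $\gnd^\GHSel$-ground instances. The candidate is
\[\namedinference{Sup}
{D' \mathrel\lor t \eq t' \hypsep C' \mathrel\lor \fosubterm{s}{u}_p \doteq s'}
{\pure\bigl((D' \mathrel\lor C' \mathrel\lor \fosubterm{s}{t'}_p \doteq s')\sigma\bigr)}\]
where $\pure$ is the identity in the nonpurifying calculi. Its side premise is $D$ and its main premise is $C$, and both lie in $N$ since $\gnd^{-1}(D\theta)=D$ and $\gnd^{-1}(C\theta)=C$.

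First I would verify that $\iota$ really is the $\theta$-ground instance of $\iota'$. Because $p$ is a green position of $s$ and green contexts commute with substitution, $s\theta=\fosubterm{s\theta}{u\theta}_p$; comparing this with the main premise $C'\theta \mathrel\lor \fosubterm{s\theta}{t\theta}_p \doteq s'\theta$ of $\iota$ shows $u\theta=t\theta$, so $\theta$ unifies $t$ and $u$ and $\sigma=\mgu(t,u)$ exists. By Lemma~\ref{lem:sigma-theta-eq-theta}, $\sigma\theta=\theta$. Hence $\prem(\iota')\theta=(D\theta,C\theta)=\prem(\iota)$, and $\preconcl(\iota')\theta=(D'\mathrel\lor C'\mathrel\lor\fosubterm{s}{t'}_p\doteq s')\sigma\theta=(D'\mathrel\lor C'\mathrel\lor\fosubterm{s}{t'}_p\doteq s')\theta=\concl(\iota)$. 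Note that $\gnd^\GHSel$ is defined through $\preconcl$, so purification plays no role in this matching.

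Next I would discharge the seven side conditions of \infname{Sup} for $\iota'$. Condition~1 holds by construction of $\sigma$, and condition~7 (the variable condition) holds by hypothesis. The order conditions~2,~3, and~5 I would lift from their ground counterparts $t\theta\not\preceq t'\theta$, $\fosubterm{s\theta}{t\theta}_p\not\preceq s'\theta$, and $C\theta\not\preceqC D\theta$ coming from $\iota$: using $\sigma\theta=\theta$ together with stability under grounding substitutions (extended to the literal and clause orders), any putative nonground violation such as $t\sigma\preceq t'\sigma$ would ground to $t\theta\preceq t'\theta$, contradicting the ground condition. The eligibility conditions~4 and~6 I would lift using the chosen selection function: since $\gnd^{-1}(D\theta)=D$ and $\gnd^{-1}(C\theta)=C$, a literal is selected in the ground premise exactly when its origin is selected in $D$ or $C$, so the ``selected'' case of eligibility transfers directly; in the ``no selected literal'' case, strict maximality of $(t\eq t')\sigma$ in $D\sigma$ (and eligibility of the main literal in $C\sigma$) follows by the same grounding argument applied literal-by-literal.

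The main obstacle is precisely this last step. Because the term order is neither compatible with arguments nor even transitive on nonground terms, the maximality and eligibility conditions cannot be lifted by the usual monotonicity reasoning used in first-order superposition. The argument must instead route every comparison through the concrete grounding $\theta$: one assumes a nonground counterexample to maximality, applies $\theta$ via $\sigma\theta=\theta$ and stability under grounding, and contradicts the ground strict maximality already guaranteed in $\GHInf^\GHSel$, never appealing to transitivity of $\succ$ on nonground entities.
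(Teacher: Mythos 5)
Your proposal is correct and follows essentially the same route as the paper's proof: exhibit the nonground \infname{Sup} inference with conclusion $\pure((D' \lor C' \lor \fosubterm{s}{t'}_p \doteq s')\sigma)$, lift the eligibility conditions through the correspondence between $\GHSel$ and $\HSel$ and the order conditions through stability under grounding (contrapositive), invoke the hypothesis for the variable condition, and conclude via Lemma~\ref{lem:sigma-theta-eq-theta} that $\preconcl(\iota')\theta = \concl(\iota)$. The paper's proof is terser—it states the condition liftings as bare implications—but your added detail (the role of $\preconcl$ versus purification, and routing all comparisons through $\theta$ rather than relying on transitivity of $\succ$ on nonground terms) is faithful to what the paper's argument implicitly relies on.
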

\begin{proof}
  The inference conditions of $\iota$ can be lifted to $D$ and $C$.
  That $t\theta\eq t'\theta$ is strictly eligible in $D\theta$ \wrt\ $\GHSel$
  implies that $t\eq t'$ is strictly eligible in $D$ \wrt\ $\sigma$ and $\HSel$.
  If $s\theta\doteq s'\theta$ is (strictly) eligible in $C\theta$ \wrt\ $\GHSel$,
  then $s\doteq s'$ is (strictly) eligible in $C$ \wrt\ $\sigma$ and $\HSel$.
  Moreover,
  $D\theta \not\succeqC C\theta$ implies $D \not\succeqC C$,
  $t\theta \not\prec t'\theta$ implies $t \not\prec t'$, and
  $s\theta \not\prec s'\theta$ implies $s \not\prec s'$.

  By assumption, $p$ is a position of $s$ and the variable condition holds.
  Thus, the following inference $\iota'\in\HInf$ is applicable:
  \[\namedinference{Sup}
  {D' \mathrel\lor t \eq t' \hypsep C' \mathrel\lor \fosubterm{s}{u}_p \doteq s'}
  {\pure((D' \mathrel\lor C' \mathrel\lor \fosubterm{s}{ t'}_p \doteq s')\sigma)}
  \]
  By Lemma~\ref{lem:sigma-theta-eq-theta},
  we have
  $(\preconcl(\iota'))\theta
  = \concl(\iota)$.
  Hence, $\iota$ is the $\theta$-ground instance of $\iota'$
  and is therefore liftable.
\end{proof}

The other \infname{Sup} inferences might not be liftable, but they are
redundant:

\begin{lemmax}\label{lem:nonliftable-sup-redundant}
  Let $\iota\in\GHInf^\GHSel$ be a \infname{Sup} inference
  \[\namedinference{Sup}
  {D'\theta \mathrel\lor t\theta \eq t'\theta \hypsep C'\theta \mathrel\lor \fosubterm{s\theta}{t\theta}_p \doteq s'\theta}
  {D'\theta \mathrel\lor C'\theta \mathrel\lor \fosubterm{s\theta}{ t'\theta}_p \doteq s'\theta}
  \]
  where $\gnd^{-1}(D\theta) = D = D' \llor t \eq t'$ and
  $\gnd^{-1}(C\theta) = C = C' \llor s \doteq s'$.
  Suppose that Lemma~\ref{lem:lifting2} does not apply.
  This could be either because the position $p$ is below a variable in $s$
  or because the variable condition does not hold.
  Then $\iota\in\GHRedI^\GHSel(\gnd(N))$.
\end{lemmax}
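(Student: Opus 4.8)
The plan is to reduce the claim to ordinary first-order inference redundancy. Since $\iota$ is a \infname{Sup} inference (not \infname{GArgCong} or \infname{GExt}), by the definition of $\GHRedI^\GHSel$ it suffices to show $\floor{\iota}\in\GFRedI^{\floor{\GHSel}}(\floor{\gnd(N)})$, and for that it suffices, by the definition of $\GFRedI$, to exhibit clauses in $\floor{\gnd(N)}$ that are $\prec\floor{C\theta}=\mprem(\floor{\iota})$ and jointly entail $\floor{\concl(\iota)}$ in first-order logic. By Lemma~\ref{lem:subterm-correspondence1} the overlap position $p$ is a genuine first-order subterm position of $\floor{s\theta}$, so all rewriting arguments can be carried out at the $\GF$ level, where $\succ$ is compatible with contexts (Lemma~\ref{lem:order-prop-transfer}). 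One witness is always available: from inference condition~5, $C\sigma\not\preceq D\sigma$ grounds (using ground totality) to $C\theta\succ D\theta$, so $\floor{D\theta}\prec\floor{C\theta}$, and $D\theta\in\gnd(N)$ since $\gnd^{-1}(D\theta)=D\in N$; this clause supplies the rewrite equation $\floor{t\theta}\eq\floor{t'\theta}$, where $t\theta\succ t'\theta$ by condition~2.

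I would first dispose of the case where $p$ lies strictly below a variable $x$ of $s$. This is the classical superposition-below-variables situation. Writing $p''$ for the residual position of the overlap inside the ground term $x\theta$, set $\theta'=\theta[x\mapsto(x\theta)[t'\theta]_{p''}]$; since $p''$ sits inside an argument of $x\theta$, the rewrite survives any further application of $x$ in $C$, so $\floor{C\theta'}$ is obtained from $\floor{C\theta}$ by replacing $\floor{t\theta}$ by $\floor{t'\theta}$ in every first-order context in which it occurs. As $\floor{t'\theta}\prec\floor{t\theta}$ and $\succ$ is compatible with contexts on $\TGF$, we get $\floor{C\theta'}\prec\floor{C\theta}$, and $C\theta'\in\gnd(N)$. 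A routine first-order congruence argument using $\floor{t\theta}\eq\floor{t'\theta}$ then yields $\{\floor{D\theta},\floor{C\theta'}\}\models\floor{\concl(\iota)}$, establishing redundancy.

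The substantial case is when $u$ exists as a green subterm of $s$ at $p$ but the variable condition fails; then $u$ must have a variable head $x$. The key move is to instantiate the \emph{negated} variable condition at the inference's own substitution $\theta$: since $\sigma\theta=\theta$ (Lemma~\ref{lem:sigma-theta-eq-theta}) and $t\theta\succ t'\theta$, the order disjunct of the condition is satisfied at $\theta$, so its failure forces $C\{u\mapsto t'\}\theta\preceq C\theta$ in the intensional calculi (and $C''\theta\preceq C\theta$ with $C''=C\{x\mapsto\tilde{t}'\}$ from the jelling shape $t=\tilde{t}\,\tuple{y}_n$, $t'=\tilde{t}'\,\tuple{y}_n$ of Definition~\ref{def:jell} in the extensional calculi). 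This replaced clause is a ground instance of $C$, hence in $\gnd(N)$. The entailment of $\floor{\concl(\iota)}$ by $\floor{D\theta}$ and this clause is, however, \emph{not} immediate: because $\flooronly$ deliberately breaks argument congruence, the single equation $\floor{t\theta}\eq\floor{t'\theta}$ does not justify rewriting the \emph{applied} occurrences of $x$ in $C$. Here I would use saturation: since $t\eq t'$ is strictly eligible in $D$, \infname{ArgCong} applies to $D$, and saturation makes the clauses $D'\llor t\,\tuple{x}\eq t'\,\tuple{x}$ redundant, so their ground instances $D'\theta\llor t\theta\,\tuple{w}\eq t'\theta\,\tuple{w}$ (for the argument tuples $\tuple{w}$ actually occurring with $x$ in $C\theta$) are entailed by clauses of $\floor{\gnd(N)}$ below $\floor{C\theta}$, with the size bound supplied by the selection restriction forbidding selection of the maximal applied variable. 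These argument-congruence equations furnish exactly the congruences missing at the applied positions, and the first-order entailment then goes through. For the purifying calculi, where the failed condition is that $u$ jells and has a variable head, I would instead lean on purification together with the restriction barring selection of functional-type variables to produce the required smaller witnesses.

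I expect this third case to be the main obstacle, for two reasons. First, the lack of compatibility with arguments means strictness ($\prec$ rather than the $\preceq$ delivered by the negated condition) and the treatment of applied occurrences of $x$ cannot be obtained from monotonicity and must be extracted from the negated variable condition and from the \infname{ArgCong} consequences of $D$, with careful size control via the selection restrictions. Second, the argument must be redone for each of the four calculi, and the extensional jelling-based replacement and the purifying/purification bookkeeping are the most delicate to get exactly right.
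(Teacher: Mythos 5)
Your overall architecture matches the paper's: you reduce the claim to the $\GF$-level statement that clauses of $\floor{\gnd(N)}$ smaller than $\floor{C\theta}$ entail $\floor{\concl(\iota)}$, you dispose of the below-variable case by rewriting inside $x\theta$ (your $C\theta'$ is exactly the paper's $C\{x\mapsto v\}\theta$, justified the same way by compatibility with green contexts), and for a failing variable condition you combine the negated condition with saturation-derived congruence clauses. Case~1 is correct and essentially the paper's proof. The first genuine gap is in your treatment of the \emph{purifying} calculi: your ``key move''---instantiating the order disjunct of the negated variable condition at $\theta$---only exists for the nonpurifying calculi. The purifying calculi's variable conditions (``$u$ is not a variable''; ``$u$ has a nonvariable head or does not jell'') are purely syntactic, so negating them yields no ordering information and hence no source for $C''\theta \preceqC C\theta$. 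The paper argues differently there: when $x$ occurs only with the arguments $\tuple{v}_n$ (which purification guarantees in the intensional case), the substitution $\theta'$ with $x\theta' = \tilde{t}'\theta$ replaces $t\theta$ by $t'\theta$ at green positions only, so compatibility with green contexts yields $C\theta \succC C\theta'$ directly; and in the extensional purifying case where $x$ also occurs with other argument tuples, a substantially different argument is required (the substitution $\theta''$, a multiset comparison exploiting the maximality of $s\theta$, redundancy of the \infname{PosExt} inference from $D$, and special handling of the purification literals $x \noteq y$). Your closing sentence gestures at ``leaning on purification'' but contains none of these ideas, so the purifying half of the lemma is effectively unproven.

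The second gap is in the nonpurifying case, where ``the first-order entailment then goes through'' conceals the hardest part of the paper's proof. Because the order is not compatible with arguments, the needed equations $\floor{\tilde{t}\theta\;\tuple{u}} \eq \floor{\tilde{t}'\theta\;\tuple{u}}$ can be oriented either way depending on $\tuple{u}$, they are only known to hold in the model when \emph{both} sides are smaller than the maximal term of $C\theta$, and rewrite positions can be nested inside one another, so the order of rewriting matters. The paper resolves this with a terminating two-sided normalization: both $\floor{C\theta}$ and $\floor{C''\theta}$ are rewritten, always from larger to smaller term, innermost-first in the selected-literal case, and never at the top of the maximal term (this is precisely where the selection restriction that $x$ cannot head a maximal literal enters); the two normal forms are then shown to coincide. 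Simply invoking ``careful size control'' does not supply this argument. In addition, your congruence clauses come from \infname{ArgCong} alone, which only lengthens argument tuples; this suffices for the intensional nonpurifying calculus (where $n = 0$), but in the extensional nonpurifying calculus occurrences of $x$ with $m < n$ arguments require clauses $D' \lor \tilde{t}\;\tuple{y}_m \eq \tilde{t}'\;\tuple{y}_m$ obtained from $D$ by \infname{PosExt}, not by \infname{ArgCong}. Finally, note that the strictness issue you flag ($C\theta \succC C''\theta$ rather than $\succeqC$) is resolved easily---$x$ occurs in $C$ and $\tilde{t}\theta \neq \tilde{t}'\theta$---but it does need to be said.
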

  \begin{proof}
    By the definition of $\GHRedI$, to show $\iota\in\GHRedI^\GHSel(\gnd(N))$, it suffices to prove that
    $\{E \in \floor{\gnd(N)} \mid E \prec \floor{C\theta}\} \models \floor{\concl(\iota)}$.
    Let $\III$ be a first-order model of all $E\in \floor{\gnd(N)}$ with $E \prec \floor{C\theta}$.
    We must show that $\III \models \floor{\concl(\iota)}$. If $\III \models \floor{D'\theta}$, this is obvious.
    So we further assume that $\III \not\models \floor{D'\theta}$.
    Since $D\theta \precC C\theta$ by the \infname{Sup} inference conditions,
    it follows that $\III\models \floor{t\theta \eq t'\theta}$. By congruence, it suffices to show
    $\III\models \floor{C\theta}$.
    We proceed by a case distinction on the two possible reasons why Lemma~\ref{lem:lifting2} does not apply:

    \medskip

    \noindent
    \textsc{Case 1:}\enskip The position $p$ is below a variable in $s$.
    Then $t\theta$ is a proper green subterm of~$x\theta$
    and hence a green subterm of $x\theta\;\tuple{w}$ for any arguments $\tuple{w}$.
    Let $v$ be the term that we obtain by replacing $t\theta$ by $t'\theta$ in $x\theta$
    at the relevant position. It follows from our assumptions about $\III$ that
     $\III\models \floor{t\theta\eq t'\theta}$, and by congruence,
    $\III\models \floor{x\theta\;\tuple{w}\eq v\;\tuple{w}}$ for any arguments $\tuple{w}$.
    Hence, $\III\models \floor{C\theta}$ if and only if
    $\III\models \floor{C\{x\mapsto v\}\theta}$.
    By the inference conditions we have $t\theta \succ t'\theta$, which
    implies $\floor{C\theta} \succC \floor{C\{x\mapsto v\}\theta}$ by compatibility
    with green contexts. Therefore,
    we have $\III\models \floor{C\{x\mapsto v\}\theta}$ and
    hence $\III\models \floor{C\theta}$.

    \medskip

    \noindent
    \textsc{Case 2:}\enskip The variable condition does not hold.
    In the extensional calculi, it follows that $u$ has a variable head and jells with $t\eq t'$.
    By Definition~\ref{def:jell},
    this means that $u$, $t$, and $t'$ have the following form:
    $u=x\;\tuple{v}_n$ for some variable $x$ and a tuple of terms $\tuple{v}_n$
    of length $n \geq 0$; $t=\tilde{t}\;\tuple{x}_n$ and $t'=\tilde{t}\vthinspace'\>\tuple{x}_n$,
    where $\tuple{x}_n$ are variables that do not occur elsewhere in $D.$

    For the intensional calculi, we have $u\in\VV$.
    Thus, $u$, $t$, and $t'$ can be written in the same form as described above for the extensional calculi, with $n=0$.

    \medskip

    \noindent
    \textsc{Case 2.1 (Purifying calculi):}\enskip
    First, we assume that $x$ occurs only with arguments $\tuple{v}_n$ in $C.$ For
    the intensional calculus, this must be the case because $n=0$ and hence $x$
    can only occur without arguments by the definition of $\pureint$ and the
    literal selection restriction. Define a substitution $\theta'$ by $x\theta' =
    \tilde{t}\vthinspace'\theta$ and $y\theta' = y\theta$ for other variables
    $y$. Since $t\theta \succ t'\theta$ by the inference conditions, we have $C\theta \succC C\theta'$.
    Moreover, $C\theta' \in \gnd(N)$. Then $\III\models\floor{C\theta}$ by
    congruence, because $\III\models\floor{C\theta'}$ and  $\III\models
    \floor{t\theta \eq t'\theta}$.

    Now we assume that $x$ occurs with arguments other than $\tuple{v}_n$ in $C.$
    This can only happen in the extensional calculus and by the selection
    restrictions, $s\theta \doteq s'\theta$ must not be selected in
    $C\theta$. Therefore, $s\theta$ is the maximal term in $C\theta$. Then $s
    \not= x$ and hence $\tuple{v}_n\not=\varepsilon$ because otherwise $s\theta =
    x\theta$ would be smaller than the applied occurrence of $x\theta$ in
    $C\theta$.

    Define a substitution $\theta''$ such that $x\theta'' =
    \tilde{t}\vthinspace'\theta$, $y\theta'' = \tilde{t}\vthinspace'\theta$ for
    other variables $y$ if $y\theta = s\theta$ and $C$ contains the literal
    $x\noteq y$, and $y\theta'' = y\theta$ otherwise.

    We show that $C\theta \succC C\theta''$ by proving that no literal of
    $C\theta''$ is larger than the maximal literal $s\theta\doteq
    s'\theta$ of $C\theta$ and that $s\theta\doteq s'\theta$ appears more
    often in $C\theta$ than in $C\theta''$.

    For a literal of the form $x\noteq y$, we have $x\theta'' \prec s\theta$ and
    $y\theta'' \prec s\theta$. For literals that are not of this form, by the
    definition of $\pureext$ in the extensional calculus, $x$ occurs always with
    arguments $\tuple{v}_n$. Hence these literals are equal or smaller in
    $C\theta''$ than in $C\theta$, because $x\theta''\>\tuple{v}_n \prec
    x\theta\;\tuple{v}_n$ and $y\theta'' \preceq y\theta$. Therefore, no literal
    of $C\theta''$ is larger than the maximal literal $s\theta\doteq
    s'\theta$ of $C\theta$. Moreover, these inequalities show that every
    occurrence of $s\theta\doteq s'\theta$ in $C\theta''$ corresponds to
    an occurrence of $s\theta\doteq s'\theta$ in $C\theta$ that
    corresponds to a literal  in $C$ without the variable $x$. Since at least
    one occurrence of $s\theta\doteq s'\theta$  in $C\theta$ corresponds
    to a literal in $C$ containing $x$, $s\theta\doteq s'\theta$ appears
    more often in $C\theta$ than in $C\theta''$. This concludes the argument
    that $C\theta \succC C\theta''$.
    It follows that
    $\III\models \floor{C\theta''}$.

    We need to show that $\III \models\floor{C\theta}$. There is a \infname{PosExt} inference
    from $D$ to $D' \mathrel\lor \tilde{t} \eq
    \tilde{t}\vthinspace'$.
    This inference is in $\HRedI(N)$ because $N$ is saturated. Therefore, $D'\theta \mathrel\lor \tilde{t}\theta \eq
    \tilde{t}\vthinspace'\theta$ is in $\gnd(N)\ccup\GHRedC(\gnd(N))$.
    It follows that $\III \models \floor{D'\theta \mathrel\lor \tilde{t}\theta \eq
    \tilde{t}\vthinspace'\theta}$ because this clause is
    smaller than $\floor{D'\theta}$ and hence smaller than $\floor{C\theta}$.
    Since $\floor{D'\theta}$ is false in $\III$, we have
    $\III\models\floor{\tilde{t}\theta \eq \tilde{t}\vthinspace'\theta}$.

    For every literal of the form $x\noteq y$, where $y\theta = s\theta$, the
    variable $y$ can only occur without arguments in $C$ because of the
    maximality of $s\theta$.
    We distinguish two cases. If for every literal of the form $x\noteq
    y$ where $y\theta = s\theta$, we have $\III\models\floor{y\theta'' \eq
    y\theta}$, then $\III\models \floor{C\theta}$ by congruence. If for some
    literal of the form $x\noteq y$ where $y\theta = s\theta$, we have
    $\III\models\floor{y\theta'' \noteq y\theta}$, then
    $\III\models\floor{y\theta\noteq x\theta}$ because $y\theta'' =
    \tilde{t}\vthinspace'\theta$, $\III\models
    \floor{\tilde{t}\vthinspace'\theta \eq \tilde{t}\theta}$, and
    $\tilde{t}\theta = x\theta$. Hence a literal of $\floor{C\theta}$ is true in $\III$
    and therefore $\III \models C\theta$.

    \medskip

    \noindent
    \textsc{Case 2.2 (Nonpurifying calculi):}\enskip
    Since the variable condition does not hold,
    we have $C\theta \succeqC C''\theta$,
    where $C'' = C\{x\mapsto \tilde{t}\vthinspace'\}$.
    We cannot have $C\theta = C''\theta$ because $x\theta =
    \tilde{t}\theta\neq\tilde{t}\vthinspace'\theta$ and $x$ occurs in $C$.
    Hence, we have $C\theta \succC C''\theta$.

    By the definition of $\III$, $C\theta \succC C''\theta$ implies
    $\III\models\floor{C''\theta}$. We will use equalities that are true in $\III$ to
    rewrite $\floor{C\theta}$ into $\floor{C''\theta}$, which implies $\III\models
    \floor{C\theta}$ by congruence.

    By saturation of $N$, for any well-typed $m$-tuple of fresh
    variables $\tuple{z}$, we can use a \infname{PosExt} with premise $D$ (if
    $n>m$) or \infname{ArgCong} inference with premise $D$ (if $n<m$)  or using
    $D$ itself (if $n=m$) to show that $\gnd(D' \lor
    \tilde{t}\;\tuple{z} \eq \tilde{t}\vthinspace'\>\tuple{z})
    \subseteq \gnd(N) \cup
    \GHRedC(\gnd(N))$.
    Hence, $D'\theta \mathrel\lor \tilde{t}\theta\;\tuple{u} \eq
    \tilde{t}\vthinspace'\theta\;\tuple{u}$ is in $\gnd(N) \cup \GHRedC(\gnd(N))$ for
    any ground arguments $\tuple{u}$.

    We observe that whenever $\tilde{t}\theta\;\tuple{u}$ and
    $\tilde{t}\vthinspace'\theta\;\tuple{u}$ are smaller than the maximal term
    of $C\theta$ for some arguments $\tuple{u}$, we have
    \[\III\models \floor{\tilde{t}\theta\;\tuple{u}} \eq
    \floor{\tilde{t}\vthinspace'\theta\;\tuple{u}}
    \tag{\textdagger}\label{eq:congruences}  \]

    To show this, we assume that $\tilde{t}\theta\;\tuple{u}$ and
    $\tilde{t}\vthinspace'\theta\;\tuple{u}$ are smaller than the maximal term
    of $C\theta$ and we distinguish two cases: If $t\theta$ is smaller than the
    maximal term of $C\theta$, all terms in $D'\theta$ are smaller than the
    maximal term of $C\theta$ and hence $D'\theta \lor
    \tilde{t}\theta\;\tuple{u} \eq \tilde{t}\vthinspace'\theta\;\tuple{u} \precC
    C\theta$. If, on the other hand, $t\theta$ is equal to the maximal term of
    $C\theta$, $\tilde{t}\theta\;\tuple{u}$ and
    $\tilde{t}\vthinspace'\theta\;\tuple{u}$ are smaller than $t\theta$. Hence
    $\tilde{t}\theta\;\tuple{u} \eq \tilde{t}\vthinspace'\theta\;\tuple{u} \prec
    t\theta \eq t'\theta$ and $D'\theta \lor \tilde{t}\theta\;\tuple{u} \eq
    \tilde{t}\vthinspace'\theta\;\tuple{u} \precC D\theta \precC C\theta$. In
    both cases, since $\floor{D'\theta}$ is false in $\III$ by assumption,
    $\III\models \floor{\tilde{t}\theta\;\tuple{u}} \eq
    \floor{\tilde{t}\vthinspace'\theta\;\tuple{u}}$.

    We proceed by a case distinction on whether $s\theta$ appears in a selected
    or in a maximal literal of $C\theta$. In both cases we provide an algorithm
    that establishes the equivalence of $C\theta$ and $C''\theta$ via rewriting
    using (\ref{eq:congruences}). This might seem trivial at first sight, but we
    can only use the equations (\ref{eq:congruences}) if
    $\tilde{t}\theta\;\tuple{u}$ and $\tilde{t}\vthinspace'\theta\;\tuple{u}$
    are smaller than the maximal term of $C\theta$. Moreover, $\tuple{u}$ might
    itself contain positions where we want to rewrite, so the order of
    rewriting matters.

    \medskip
	
    \noindent
    \textsc{Case 2.2.1:}\enskip $s\theta$ is the maximal side of a maximal
    literal of $C\theta$. Then, since $C\theta \succC C''\theta$, every term in
    $C\theta$ and in $C''\theta $ is smaller than or equal to $s\theta$. Let $C_0$
    and $\tilde{C}_0$ be the clauses resulting from rewriting $\floor{t\theta}
    \rewrite \floor{t'\theta}$ wherever possible in $\floor{C\theta}$ and
    $\floor{C''\theta}$, respectively. Since $\floor{t\theta}$ is a subterm of
    $\floor{s\theta}$, now every term in $C_0$ and $\tilde{C}_0$ is strictly
    smaller than $\floor{s\theta}$.

    \looseness=-1
    We define $C_1, C_2, \dots$\ inductively as follows: Given $C_i$, choose a
    subterm of the form $\floor{\tilde{t}\theta\;\tuple{u}}$ where
    $\tilde{t}\theta\;\tuple{u} \succ \tilde{t}\vthinspace'\theta\;\tuple{u}$ or
    of the form $\floor{\tilde{t}\vthinspace'\theta\;\tuple{u}}$ where
    $\tilde{t}\vthinspace'\theta\;\tuple{u} \succ \tilde{t}\theta\;\tuple{u}$.
    Let $C_{i+1}$ be the clause resulting from rewriting that subterm
    $\floor{\tilde{t}\theta\;\tuple{u}}$ to
    $\floor{\tilde{t}\vthinspace'\theta\;\tuple{u}}$ or that subterm
    $\floor{\tilde{t}\vthinspace'\theta\;\tuple{u}}$ to
    $\floor{\tilde{t}\theta\;\tuple{u}}$ in $C_i$, depending on which term was
    chosen.
    Analogously, we define $\tilde{C}_1, \tilde{C}_2, \dots$ by applying the
    same algorithm to $\tilde{C}_0$. In both cases, the process terminates
    because $\succ$ is compatible with green contexts and well founded.
    Let $C_*$ and $\tilde{C}_*$ be the respective final clauses.

    The algorithm preserves the invariant that every term in $C_i$ and
    $\tilde{C}_i$ is strictly smaller than $s\theta$. By congruence via
    (\textdagger), applied at every step of the algorithm, we know that $C_{*}$
    and $\floor{C\theta}$ are equivalent in $\III$ and that $\tilde{C}_{*}$ and
    $\floor{C''\theta}$ are equivalent in $\III$ as well.

    We show that $C_{*} = \tilde{C}_{*}$. Assume that $C_{*} \not=
    \tilde{C}_{*}$. The algorithm preserves a second invariant, namely that
    $\ceil{C_i}$ and $\ceil{\tilde{C}_j}$ are equal except for positions where
    one contains $\tilde{t}\theta$ and the other one contains
    $\tilde{t}\vthinspace'\theta$. Consider a deepest position where
    $\ceil{C_*}$ and $\ceil{\tilde{C}_*}$ are different. The respective position
    in $C_*$ and $\tilde{C}_*$ then contains
    $\floor{\tilde{t}\theta\;\tuple{u}}$ and
    $\floor{\tilde{t}\vthinspace'\theta\;\tuple{u}}$ or vice versa. The
    arguments $\tuple{u}$ must be equal because we consider a deepest position.
    But then $\tilde{t}\theta\;\tuple{u} \succ
    \tilde{t}\vthinspace'\theta\;\tuple{u}$ or $\tilde{t}\theta\;\tuple{u} \prec
    \tilde{t}\vthinspace'\theta\;\tuple{u}$, which is impossible since the
    algorithm terminated in $C_*$ and $\tilde{C}_*$.
    This shows that $C_{*} = \tilde{C}_{*}$. Hence $\floor{C\theta}$ and
    $\floor{C''\theta}$ are equivalent, which proves
    $\III\models\floor{C\theta}$.

    \medskip

    \noindent
    \textsc{Case 2.2.2:}\enskip $s\theta$ is the maximal side of a selected
    literal of $C\theta$. Then, by the selection restrictions, $x$ cannot be the
    head of a maximal literal of $C.$

    At every position where $x\;\tuple{u}$ occurs in $C$ with some (or no)
    arguments~$\tuple{u}$, we rewrite $(\tilde{t}\;\tuple{u})\theta$ to
    $(\tilde{t}\vthinspace'\>\tuple{u})\theta$ in $C\theta$ if
    $(\tilde{t}\;\tuple{u})\theta\succ(\tilde{t}\vthinspace'\>\tuple{u})\theta$.
    We start with the innermost occurrences of $x$, so that the order of the
    two terms at one step does not change by later rewriting.
    Analogously, at every position where $x\;\tuple{u}$ occurs in $C$ with some
    (or no) arguments~$\tuple{u}$, we rewrite
    $(\tilde{t}\vthinspace'\>\tuple{u})\theta$ to $(\tilde{t}\;\tuple{u})\theta$
    in $C''\theta$ if
    $(\tilde{t}\vthinspace'\>\tuple{u})\theta\succ(\tilde{t}\;\tuple{u})\theta$,
    again starting with the innermost occurrences.

    We never rewrite at the top level of the maximal term of $C\theta$ or
    $C''\theta$ because $x$ cannot be the head of a maximal literal of $C.$ The
    two resulting clauses are identical because $C\theta$ and $C''\theta$ only
    differ at positions where $x$ occurs in $C.$ The rewritten terms are all
    smaller than the maximal term of $C\theta$. With (\ref{eq:congruences}),
    this implies that $\III\models \floor{C\theta}$ by congruence.
  \end{proof}

  With these properties of our inference systems in place,
  the saturation framework's lifting theorem (Theorem~\ref{thm:lifting-theorem}) 
  guarantees static and dynamic refutational completeness
  of $\HInf$ \wrt\ $\HRedI$. However, this theorem gives us refutational
  completeness \wrt\ the Herbrand entailment $\Gmodels$,
  defined as $N_1 \Gmodels N_2$ if $\gnd(N_1) \models \gnd(N_2)$,
  whereas our semantics is Tarski entailment $\models$, defined as
  $N_1 \models N_2$ if any model of $N_1$ is a model of $N_2$.
  The following lemma repairs this mismatch:

  \begin{lemmax}
    \label{lem:herbrand-tarski}
    For $N \subseteq \CHH$, we have $N \Gmodels \bot$ if and only if $N \models \bot$.
  \end{lemmax}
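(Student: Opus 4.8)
The plan is to treat the two directions separately. The forward implication $N \Gmodels \bot \Rightarrow N \models \bot$ is routine: unfolding the definitions, $N \Gmodels \bot$ says that $\gnd(N)$ is unsatisfiable and $N \models \bot$ says that $N$ is unsatisfiable, so it suffices to show that any model of $N$ yields a model of $\gnd(N)$. This is exactly Lemma~\ref{lem:apply-subst}, since every clause of $\gnd(N)$ is an instance $C\theta$ of some $C \in N$; the same (extensional, where required) interpretation therefore witnesses satisfiability of $\gnd(N)$.

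The backward implication $N \models \bot \Rightarrow N \Gmodels \bot$ is the substantial one; contrapositively, I must turn a model of $\gnd(N)$ into a model of $N$. The key is a sublemma: every \emph{term-generated} interpretation $\III$ that models $\gnd(N)$ also models $N$, where term-generated means that each universe is $\interpret{\tau}{\IIIty}{}$ for some ground type $\tau$ and each of its elements is $\interpreta{t}$ for some ground term $t$. To prove it, fix $C \in N$ and a valuation $\xi$. Term-generatedness lets me choose ground types $\tau_\alpha$ with $\interpret{\tau_\alpha}{\IIIty}{} = \xi(\alpha)$ and, after instantiating the type of each term variable $x$ accordingly, ground terms $s_x$ with $\interpreta{s_x} = \xi(x)$. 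For the grounding substitution $\theta = \{\alpha \mapsto \tau_\alpha,\, x \mapsto s_x\}$, the substitution lemma (Lemma~\ref{lem:subst-lemma-general}) gives $\interpretaxi{t} = \interpreta{t\theta}$ for every subterm $t$ of $C$, so $C$ is true under $\xi$ iff the ground instance $C\theta \in \gnd(N)$ is true in $\III$. Since $C\theta$ is true and $\xi$ is arbitrary, $\III \models C$, and hence $\III \models N$.

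It then remains to exhibit a term-generated model of $\gnd(N)$ whenever $\gnd(N)$ is satisfiable. For the intensional calculi I would obtain one directly, by restricting any model of $\gnd(N)$ to the term-generated elements of its universes and restricting $\EE$ accordingly; an easy induction shows that ground terms keep their denotations, so the restriction still models $\gnd(N)$ and is term-generated. For the extensional calculi this restriction need not remain extensional, so instead I would recycle the model construction behind Theorem~\ref{thm:GH-refutational-completeness}: saturate $\gnd(N)$ to a set $N^*$ \wrt\ $\GHInf$ and $\GHRedI$, and take the interpretation $\IIIho$ built from $R_{\floor{N^*}}$ in Lemma~\ref{lem:model-transfer}. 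By construction its universes are the $\U_\tau$ for ground $\tau$, so it is term-generated, it is extensional, and it models $N^* \supseteq \gnd(N)$.

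The main obstacle is guaranteeing $\bot \notin N^*$ in the extensional case, i.e.\ that saturating $\gnd(N)$ preserves satisfiability. The \infname{GArgCong} conclusions are sound, but the \infname{GExt} conclusions are ground instances of (\infname{Ext}), which need not hold in an arbitrary extensional model of $\gnd(N)$. In the intended application this is harmless, because (\infname{Ext}) arises from a premise-free inference and is therefore already present in the saturated sets to which the lemma is applied, so $\gnd(N)$ already contains all ground \infname{Ext} instances and \infname{GExt} adds nothing new. For a fully general $N$ one instead reinterprets the Skolem symbol $\diff$---fresh, hence absent from the user part of $\gnd(N)$---so that it picks, for each pair of distinct functions, a point at which extensionality makes them differ; this produces an extensional model of $\gnd(N) \cup \{(\infname{Ext})\}$ and thus $\bot \notin N^*$. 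A secondary point requiring care is the type-variable bookkeeping in the substitution-lemma step of the sublemma, which the polymorphic form of Lemma~\ref{lem:subst-lemma-general} supplies.
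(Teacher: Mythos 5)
Your forward direction (Lemma~\ref{lem:apply-subst}), your sublemma that every term-generated model of $\gnd(N)$ is a model of $N$ (via Lemma~\ref{lem:subst-lemma-general}), and your handling of the intensional case by restricting a model of $\gnd(N)$ to its term-generated part are exactly the paper's proof. Where you deviate, you have in fact caught a step that the paper's own proof silently glosses over: the paper applies that same restriction uniformly, for the extensional calculi as well, without checking that injectivity of the extension function survives. It need not: two distinct elements of a function universe can have $\EE$-images that differ only on arguments that are not denotations of ground terms, so the restricted interpretation can fail to be extensional. Your instinct that the extensional case needs a separate argument is correct.

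The trouble is that your substitute argument has a genuine gap, and unavoidably so. The fallback for arbitrary $N$---``reinterpret $\diff$ so that it picks difference points''---fails for two reasons. First, interpretations here are Henkin-style prestructures: the value assigned to $\diff$ must be an element of the \emph{existing} universe interpreting $(\alpha\fun\beta)\fun(\alpha\fun\beta)\fun\alpha$, and its functional behaviour is whatever $\EE$ says it is; nothing guarantees that this universe contains an element acting as the required choice function, and reinterpretation cannot create one. Second, $\diff$ is an ordinary signature symbol, so it may occur in $N$ itself (in the intended application the saturated sets do contain clauses descended from (\infname{Ext})), and changing its interpretation can falsify $\gnd(N)$. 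No patch can close this gap, because under the paper's definition of $\models$ for the extensional calculi the statement is false for arbitrary $N$: take $N = \{\cst{f}\>x \eq \cst{g}\>x,\; \cst{f} \noteq \cst{g}\}$ with $\cst{f}, \cst{g} \oftype \kappa\fun\kappa$. Every extensional model of the first clause satisfies $\EE(\interpreta{\cst{f}}) = \EE(\interpreta{\cst{g}})$ pointwise, hence $\interpreta{\cst{f}} = \interpreta{\cst{g}}$ by injectivity, contradicting the second clause, so $N \models \bot$; yet $\gnd(N)$ has an extensional model in which the universe of $\kappa$ is $\{0,1\}$, every ground term of type $\kappa$ denotes $0$, and $\cst{f}$, $\cst{g}$ denote distinct functions agreeing at $0$ (with $\EE$ injective by construction), so $N \not\Gmodels \bot$. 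Consequently only your ``intended application'' remark can be made rigorous: for the saturated sets to which the lemma is actually applied, the \infname{GExt} and \infname{GArgCong} inferences are redundant, and Theorem~\ref{thm:GH-refutational-completeness} together with Lemma~\ref{lem:model-transfer} already supplies a term-generated \emph{extensional} model $\IIIho$ of $\gnd(N)$, to which your sublemma applies. So the lemma in full generality holds only intensionally---a defect your attempt shares with, but at least diagnoses more honestly than, the paper's own proof.
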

  \begin{proof}
    By Lemma~\ref{lem:apply-subst},
    any model of $N$ is also a model of $\gnd(N)$---%
    i.e., $N \not\models \bot$ implies $N \not\Gmodels \bot$.
    For the other direction, we need to show that
    $N \not\Gmodels \bot$ implies $N \not\models \bot$.
    Assume that $N \not\Gmodels \bot$---i.e., $\gnd(N) \not\models \bot$.
    Then there is a model $\III$ of $\gnd(N)$.
    We must show that there exists a model of $N$---i.e., $N \not\models \bot$.
    Let $\III'$ be an interpretation derived from $\III$ by removing all
    universes that are not the denotation of a type in $\TyGH$
    and removing all domain elements that are not the denotation of a term in $\TGH$,
    making $\III'$ term-generated.
    Clearly, in our clausal logic,
    this leaves the denotations of terms and the truth of ground clauses unchanged.
    Thus, $\III'\models\gnd(N)$. We will show that $\III'\models N$.
    Let $C \in N$.
    We want to show that $C$ is true in $\III'$ for all valuations $\xi$.
    Fix a valuation $\xi$.
    By construction,
    for each variable $x$, there exists a ground term $s_x$ such that
    $\interpret{s_x}{\III'}{} = \xi(x)$.
    Let $\rho$ be the substitution that maps every free variable $x$ in $C$ to $s_x$.
    Then $\xi(x) = \interpret{s_x}{\III'}{} = \interpret{x\rho}{\III'}{}$ for all $x$.
    By treating the type variables of $C$ in the same way, we can also achieve that
    $\xi(\alpha) = \interpret{\alpha\rho}{\III'}{}$ for all $x$.
    By Lemma~\ref{lem:subst-lemma-general}, $\interpret{t\rho}{\III'}{} = \interpret{t}{\III'}{\xi}$ for all terms $t$ and
    $\interpret{\tau\negvthinspace\rho}{\III'}{} = \interpret{\tau}{\III'}{\xi}$ for all types $\tau$.
    Hence, $C\rho$ and $C$ have the same truth value in $\III'$ for $\xi$.
    Since $\III'\models\gnd(N)$, $C\rho$ is true in $\III'$ and thus
    $C$ is true in $\III'$ as well.
  \end{proof}

  \begin{theoremx}[Static refutational completeness]
    \label{thm:static-refutational-completeness}
    The inference system $\HInf$ is statically refutationally complete \wrt\ $(\HRedI, \HRedC)$.
    That means, if $N \subseteq \CHH$ is a clause set saturated \wrt\ $\HInf$ and $\HRedI$,
    then we have $N \models \bot$ if and only if $\bot \in N$.
  \end{theoremx}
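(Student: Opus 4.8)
The plan is to assemble the three-level machinery: combine the saturation framework's lifting theorem (Theorem~\ref{thm:lifting-theorem}) with the ground completeness result (Theorem~\ref{thm:GH-refutational-completeness}), and then convert the resulting Herbrand-style completeness into the Tarski-style statement via Lemma~\ref{lem:herbrand-tarski}.

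First I would discharge the two hypotheses of Theorem~\ref{thm:lifting-theorem}. Its first hypothesis---that $\GHInf^\GHSel$ is statically refutationally complete \wrt\ $(\GHRedI^\GHSel, \GHRedC)$ for every $\GHSel \in \gnd(\HSel)$---is exactly Theorem~\ref{thm:GH-refutational-completeness}, so nothing further is needed there. The second hypothesis requires, for a saturated $N \subseteq \CHH$, a selection function $\GHSel \in \gnd(\HSel)$ such that every ground inference in $\GHInf^\GHSel(\gnd(N))$ is either a $\gnd^\GHSel$-instance of an $\HInf$-inference from $N$ (liftable) or lies in $\GHRedI^\GHSel(\gnd(N))$ (redundant). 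I would fix $\GHSel$ via the $\gnd^{-1}$ construction described above: for each $C \in \gnd(N)$, choose a preimage $\gnd^{-1}(C) \in N$ and let $\GHSel$ select in $C$ exactly the literals corresponding to those that $\HSel$ selects in $\gnd^{-1}(C)$.

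With this $\GHSel$ in place, I would cover every ground inference using the liftability lemmas. Lemma~\ref{lem:lifting1} handles all \infname{ERes}, \infname{EFact}, \infname{GArgCong}, and \infname{GExt} inferences; Lemma~\ref{lem:lifting2} handles those \infname{Sup} inferences whose redex survives as a green subterm of the nonground clause and whose variable condition is met; and Lemma~\ref{lem:nonliftable-sup-redundant} shows that the remaining \infname{Sup} inferences---those below variables or failing the variable condition---are redundant. Together these yield the inclusion $\GHInf^\GHSel(\gnd(N)) \subseteq \gnd^\GHSel(\HInf(N)) \cup \GHRedI^\GHSel(\gnd(N))$, so Theorem~\ref{thm:lifting-theorem} applies and gives static refutational completeness of $\HInf$ \wrt\ $(\HRedI, \HRedC)$ and the Herbrand entailment $\Gmodels$.

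Finally I would upgrade from $\Gmodels$ to $\models$. The lifting theorem delivers ``$N \Gmodels \bot$ iff $\bot \in N$'' for saturated $N$, and Lemma~\ref{lem:herbrand-tarski} states $N \Gmodels \bot$ iff $N \models \bot$; substituting gives the desired ``$N \models \bot$ iff $\bot \in N$.'' The main obstacle in this assembly is the second hypothesis of the lifting theorem---not invoking its statement, which is routine, but the fact that it rests on the delicate case analysis of Lemma~\ref{lem:nonliftable-sup-redundant}, where nonmonotonicity forces us to establish redundancy of \infname{Sup} inferences at and below variables through model-level rewriting rather than the usual clause-order compatibility argument.
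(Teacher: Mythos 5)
Your proposal is correct and follows essentially the same route as the paper's proof: discharging the two hypotheses of Theorem~\ref{thm:lifting-theorem} via Theorem~\ref{thm:GH-refutational-completeness} and Lemmas~\ref{lem:lifting1}, \ref{lem:lifting2}, and~\ref{lem:nonliftable-sup-redundant} (with the selection function $\GHSel$ chosen through the $\gnd^{-1}$ construction), and then converting Herbrand entailment to Tarski entailment by Lemma~\ref{lem:herbrand-tarski}. No gaps or deviations from the paper's argument.
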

  \begin{proof}
    We apply Theorem~\ref{thm:lifting-theorem}.
    By Theorem~\ref{thm:GH-refutational-completeness},
    $\GHInf^\GHSel$ is statically refutationally complete for all  $\GHSel\in\gnd(\HSel)$.
    By Lemmas~\ref{lem:lifting1}, \ref{lem:lifting2}, and~\ref{lem:nonliftable-sup-redundant},
    for every saturated $N\subseteq \CHH$, there exists a selection
    function $\GHSel\in\gnd(\HSel)$ such that
    all inferences $\iota\in\GHInf^\GHSel$ with $\prem(\iota)\in\gnd(N)$
    either are $\gnd^\GHSel$-ground instances of $\HInf$-inferences from $N$ or
    belong to $\smash{\GHRedI^\GHSel(\gnd(N))}$.

    Theorem~\ref{thm:lifting-theorem} implies that
    if $N \subseteq \CHH$ is a clause set saturated \wrt\ $\HInf$ and $\HRedI$,
    then $N \Gmodels \bot$ if and only if $\bot \in N$.
    By Lemma~\ref{lem:herbrand-tarski}, this also holds for the Tarski entailment $\models$.
    That is,
    if $N \subseteq \CHH$ is a clause set saturated \wrt\ $\HInf$ and $\HRedI$,
    then $N \models \bot$ if and only if $\bot \in N$.
  \end{proof}

  From static refutational completeness, we can easily derive dynamic
  refutational completeness.

  \begin{theoremx}[Dynamic refutational completeness]
    \label{thm:dynamic-refutational-completeness}
    The inference system $\HInf$ is dynamically refutationally complete \wrt\ $(\HRedI, \HRedC)$,
    as defined in Definition~\ref{def:dyn-complete}.
  \end{theoremx}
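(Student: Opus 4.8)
The plan is to derive dynamic refutational completeness from the static version (Theorem~\ref{thm:static-refutational-completeness}) by invoking the general machinery of the saturation framework of Waldmann et al. Since all the hard work---lifting, redundancy, the ground model construction---has already been carried out to establish static completeness, the present theorem should follow from a generic result of the framework that converts static completeness into dynamic completeness for any fair derivation. Concretely, I would cite the relevant theorem of Waldmann et al.\ (their counterpart of the standard ``static implies dynamic'' result) and check that its hypotheses are met in our setting.

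First I would recall the setup already assembled in Section~\ref{ssec:the-redundancy-criterion} and used in Theorem~\ref{thm:static-refutational-completeness}: the consequence relation $\models$ on $\HH$, the redundancy criterion $(\HRedI,\HRedC)$, and the fact that these fit the framework (Lemmas~\ref{lem:redundancy-criterion} and~\ref{lem:grounding-function} established the corresponding facts on $\GH$, and the lifting theorem packaged them for $\HH$). The key observation is that the framework guarantees, for any redundancy criterion satisfying conditions (R1)--(R4), that static refutational completeness is \emph{equivalent} to dynamic refutational completeness for fair derivations. Thus the proof reduces to applying this equivalence.

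The main steps, in order, are: invoke the framework's theorem that static refutational completeness of an inference system with respect to a redundancy criterion entails dynamic refutational completeness (using the same instantiation $\mathbf{F} = \HH$, $\mathbf{G} = \GH$, $Q = \gnd(\HSel)$ as in the proof of Theorem~\ref{thm:lifting-theorem}); confirm that our notions of derivation and fairness (Definition~\ref{def:dyn-complete}) coincide with the framework's; and conclude via Theorem~\ref{thm:static-refutational-completeness} together with Lemma~\ref{lem:herbrand-tarski} to transfer the result from Herbrand entailment $\Gmodels$ back to Tarski entailment $\models$. One subtlety to handle carefully is that the framework is phrased with the intersected lifted redundancy criterion $\mathit{Red}^{\ccap\gnd,\sqsupset}$, which we already matched to $(\HRedI,\HRedC)$ in the proof of Theorem~\ref{thm:lifting-theorem}; I would simply reuse that identification.

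I expect no genuine obstacle here: the theorem is a direct corollary, and the only care needed is bookkeeping---ensuring that the definitions of fair derivation and the redundancy criterion line up exactly with the framework's, and that the entailment used is the Tarski entailment $\models$ rather than $\Gmodels$, which is precisely what Lemma~\ref{lem:herbrand-tarski} provides. The proof will therefore be short, essentially a citation to the saturation framework combined with the already-established static completeness and the Herbrand--Tarski bridge.
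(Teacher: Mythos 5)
Your proposal is correct and matches the paper's proof: the paper likewise derives dynamic completeness by citing the ``static implies dynamic'' theorem of Waldmann et al.\ (their Theorem~17), applied to Theorem~\ref{thm:static-refutational-completeness} together with Lemma~\ref{lem:herbrand-tarski}. The bookkeeping you describe (framework instantiation and the Herbrand--Tarski bridge) is exactly what makes that citation legitimate, so no further work is needed.
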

  \begin{proof}
    By
    Theorem~17 %
    of Waldmann et al.,
    this follows from Theorem~\ref{thm:static-refutational-completeness} and Lemma~\ref{lem:herbrand-tarski}.
  \end{proof}

\section{Implementation} %
\label{sec:implementation} %

Zipperposition \cite{cruanes-2015,cruanes-2017} is an open source
superposition-based theorem prover written in OCaml.%
\footnote{\url{https://github.com/sneeuwballen/zipperposition}}
It was initially designed for polymorphic first-order logic with equality, as
embodied by TPTP TF1 \cite{blanchette-paskevich-2013}. We will refer to this
implementation as Zipperposition's \relax{first-order mode}.
Later,
Cruanes extended the prover with a pragmatic \relax{higher-order mode} with support for
$\lambda$-abstractions and extensionality, without any completeness
guarantees.
We have now also implemented complete $\lambda$-free higher-order
modes based on the four calculi described in this \paper{}.

The pragmatic higher-order mode provided a convenient basis to implement our
calculi. It employs higher-order term and type representations and orders.
Its ad hoc calculus extensions are similar to our calculi. They
include an \infname{ArgCong}-like rule and a \infname{PosExt}-like rule, and
\infname{Sup} inferences are performed only at green subterms.
One of the bugs we found during our implementation work occurred because
argument positions shift when applied variables are instantiated. We resolved
this by numbering argument positions in terms from right to left.

To implement the $\lambda$-free higher-order mode, we restricted the
unification algorithm to non-$\lambda$-abstractions. To satisfy the
requirements on selection, we avoid selecting literals that contain
higher-order variables. To comply with our redundancy notion, we disabled
rewriting of nongreen subterms.
To improve term indexing of higher-order terms,
we replaced the imperfect discrimination trees by fingerprint indices
\cite{schulz-fingerprint-2012}.
To speed up the computation of the \infname{Sup} conditions,
we omit the condition $C\sigma \not\preceq D\sigma$ in the implementation,
at the cost of performing some additional inferences.

For the purifying calculi, we implemented purification as a
simplification rule. This ensures that it is applied aggressively on
all clauses, whether initial clauses from the problem or clauses produced
during saturation, before any inferences are performed.

For the nonpurifying calculi, we added the possibility to perform
\infname{Sup} inferences at variable positions. This means that variables must
be indexed as well. In addition, we modified the variable condition.
Depending on the term ordering, it may be expensive
or even impossible to decide whether there exists a grounding
substitution~$\theta$ with $t\sigma\theta \succ t'\sigma\theta$ and
$C\sigma\theta \prec C''\sigma\theta$.
We overapproximate the condition as follows:
(1)~check whether $x$ appears with different arguments in the clause~$C$;
(2)~use a term-order-specific algorithm to
determine whether there might exist a grounding substitution~$\theta$ and terms
$\tuple{u}$ such that $t\sigma\theta\succ t'\sigma\theta$ and
$t\sigma\theta\;\tuple{u} \prec t'\sigma\theta\;\tuple{u}$;
and (3)~check whether $C\sigma \not\succeqC C''\sigma$.
If these three conditions apply, we conclude that there might exist a ground
substitution~$\theta$ witnessing nonmonotonicity.

For the extensional calculi, we add axiom (\infname{Ext}) to the clause set.
To curb the explosion associated with extensionality, this axiom and all clauses derived from it are
penalized by the clause selection heuristic.
We also added the $\infname{NegExt}$ rule described in Section~\ref{ssec:rationale-for-the-inference-rules},
which resembles Vampire's extensionality
resolution rule \cite{gupta-et-al-2014}.

The \infname{ArgCong} rule can have infinitely many conclusions on polymorphic clauses.
To capture this in the implementation, we store these infinite sequences of conclusions
in the form of finite instructions of how to obtain the actual clauses. In addition to the usual
active and passive set of the DISCOUNT loop architecture \cite{avenhaus-et-al-1995}, we use a set of scheduled inferences that
stores these instructions. We visit the scheduled inferences in this additional set
and the clauses in the passive set fairly to
achieve dynamic completeness of our prover architecture. Waldmann et al.~\cite[Example~34%
]{waldmann-et-al-2020-saturation}
and Bentkamp et al.\ \cite[Section~6]{bentkamp-et-al-lamsup-journal} 
describe this architecture in more detail.

Using Zipperposition, we can quantify the disadvantage of the applicative encoding on
Example~\ref{ex:add-l-r}.
A well-chosen KBO instance with argument coefficients allows Zipperposition to derive
$\bot$ in 4~iterations of the prover's main loop and 0.03\,s.
KBO or LPO with default settings needs 203~iterations and 0.4\,s,
whereas KBO or LPO on the applicatively encoded problem needs
203~iterations and more than 1\,s due to the larger terms.

\section{Evaluation}
\label{sec:evaluation}

We evaluated Zipperposition's implementation of our
four
calculi on
Sledgehammer-generated Isabelle/HOL benchmarks \cite{boehme-nipkow-2010}
and on benchmarks from the TPTP library \cite{sutcliffe-et-al-2012-tff,sutcliffe-et-al-2009}.
Our experimental data is available online.%
\footnote{\url{https://doi.org/10.5281/zenodo.3992618}}
We used the development version of Zipperposition, revision
\OK{\texttt{2031e216}}.\footnote{\OK{\url{https://github.com/sneeuwballen/zipperposition/tree/2031e216c1941acd76187882a073e8f1e533}}}

The Sledgehammer benchmarks,
corresponding to Isabelle's Judgment Day suite,
were regenerated to target
clausal $\lambda$-free higher-order logic.
They comprise 2506 problems in total, divided in two
groups based on the number of Isabelle facts (lemmas, definitions, etc.)\ selected
for inclusion in each problem:\ either 16 facts (SH16) or 256 facts (SH256). The
problems were generated by encoding $\lambda$-expressions as $\lambda$-lifted
supercombinators \cite{meng-paulson-2008-trans}.

From the TPTP library, we collected \OK{708}
first-order problems in TFF format and \OK{717} higher-order problems in
THF format, both groups containing both monomorphic and polymorphic problems.
We
excluded all problems
that contain interpreted arithmetic symbols,
the symbols \texttt{(@@+)}, \texttt{(@@-)}, \texttt{(@+)}, \texttt{(@-)}, \texttt{(\&)}, or tuples,
as well as
the \texttt{SYN000} problems, which are only intended to test the parser,
and
problems whose
clausal normal form
takes longer than 15\,s to compute or falls outside
the $\lambda$-free fragment described in Section~\ref{sec:logic}.

We want to answer the following %
questions:
\begin{enumerate}
\item What is the overhead of our calculi on first-order benchmarks? \label{q:vs-fo}
\item Do the calculi outperform the applicative encoding? \label{q:app-enc}
\item Do the purifying or the nonpurifying calculi achieve better results? \label{q:puri-vs-nonpuri}
\item What is the cost of nonmonotonicity? \label{q:cost-nonmono}
\end{enumerate}
Since 
the present work
is only a stepping stone towards a prover for full
higher-order logic, it would be misleading to compare this prototype with
state-of-the-art higher-order provers that support a stronger logic.
Many of the higher-order problems in the TPTP library are in fact
satisfiable for our \hbox{$\lambda$-free}
logic, even though they may be unsatisfiable for full
higher-order logic and labeled as such in the TPTP.

To answer question (\ref{q:vs-fo}) and (\ref{q:puri-vs-nonpuri}), we ran
Zipperposition's first-order mode on the first-order benchmarks and the
purifying and nonpurifying modes on all benchmarks.
To answer question (\ref{q:app-enc}), we implemented an applicative encoding mode in Zipperposition,
which performs the applicative encoding after the clausal normal form transformation
and then proceeds with Zipperposition's first-order mode.
The encoding makes all function symbols nullary and replaces all
applications with a polymorphic binary $\cst{app}$ symbol.

We instantiated all four calculi with three term orders:\
LPO \cite{blanchette-et-al-2017-rpo},
KBO~\cite{becker-et-al-2017} (without argument coefficients),
and EPO%
\ \cite{bentkamp-2018-epo-formalization}%
.
Among these, LPO is the only nonmonotonic order and therefore
the most relevant option to evaluate our calculi,
which are designed to cope with nonmonotonicity.
To answer question~(\ref{q:cost-nonmono}), we also include the monotone orders KBO and EPO.
EPO is an order designed to resemble LPO 
while fulfilling the requirements of 
a ground-total simplification order on $\lambda$-free terms.
KBO and EPO serve as a yardstick to assess the cost of nonmonotonicity.
With these monotone orders, no superposition inferences at variables are necessary
and thus the nonpurifying calculi become a straightforward generalization of the
standard superposition calculus
with the caveat that
it may be more
efficient %
to superpose at nongreen
subterms directly instead of relying on the \infname{ArgCong} rule.
On first-order benchmarks and in the applicative encoding mode, all three orders are monotone
because they are monotone on first-order terms.

Figure~\ref{fig:eval-int} summarizes, for the intensional calculi, the number
of solved satisfiable and unsatisfiable problems within 180\,s, and the
time taken to show unsatisfiability. Figure~\ref{fig:eval-ext} presents the
corresponding data for the extensional calculi.
\OK{The average time is computed over the problems
that all configurations for the respective benchmark set and term order found to be
unsatisfiable within the time limit.}
For each combination of benchmark set and term order,
the best result is highlighted in bold.
The evaluation was carried out on StarExec Iowa
\cite{stump-et-al-2014} using Intel Xeon E5-2609\,0 CPUs clocked at 2.40\,GHz.

\begin{figure}
  \def\arraystretch{1.1}
  \setlength\tabcolsep{.666ex}
  \begin{tabular}{@{}ll@{\hskip 3ex}rrr@{\hskip 3ex}rrr@{\hskip 3ex}rrr@{}}
    \toprule
  &            & \multicolumn{3}{c@{\hskip 3ex}}{\# sat}   & \multicolumn{3}{c@{\hskip 3ex}}{\# unsat}
  & \multicolumn{3}{c@{\hskip 3ex}}{\diameter~time} \\[-.1ex] %
  &            & LPO   & KBO   & EPO  & LPO   & KBO   & EPO  & LPO   & KBO   & EPO   \\
  \hline \rule{0pt}{3ex}%
SH16
  & applicative encoding & 111 & \bf 189 & 65 & 373 & 382 & 157 & 0.9 & 1.2 & 10.7 \\
  & nonpurifying calculus & \bf 136 & 165 & \bf 133 & \bf 383 & \bf 385 & \bf 381 & \bf 0.4 & \bf 0.3 & \bf 0.0 \\
  & purifying calculus & 82 & 98 & 82 & 363 & 363 & 355 & 1.3 & 2.0 & \bf 0.0 \\[1.25\jot]
SH256
  & applicative encoding & \bf 1 & \bf 1 & \bf 1 & 471 & 488 & 36 & 9.4 & 8.7 & 63.8 \\
  & nonpurifying calculus & \bf 1 & \bf 1 & \bf 1 & \bf 543 & \bf 554 & \bf 498 & \bf 2.3 & \bf 2.3 & \bf 0.1 \\
  & purifying calculus & \bf 1 & \bf 1 & \bf 1 & 523 & 528 & 484 & 2.6 & 3.4 & 0.5 \\[1.25\jot]
TFF
  & first-order mode & 0 & 0 & 0 & \bf 212 & \bf 229 & \bf 107 & \bf 1.9 & \bf 2.3 & \bf 1.5 \\
  & applicative encoding & 0 & 0 & 0 & 180 & 205 & 21 & 7.0 & 10.0 & 4.6 \\
  & nonpurifying calculus & 0 & 0 & 0 & 210 & \bf 229 & 105 & \bf 1.9 & 2.4 & \bf 1.5 \\
  & purifying calculus & 0 & 0 & 0 & 211 & \bf 229 & 105 & 2.1 & 2.6 & 1.6 \\[1.25\jot]
THF
  & applicative encoding & \bf 127 & \bf 115 & 111 & 523 & 522 & 428 & 0.9 & 0.6 & 0.8 \\
  & nonpurifying calculus & 111 & 114 & \bf 112 & \bf 529 & \bf 527 & \bf 516 & \bf 0.3 & \bf 0.3 & \bf 0.0 \\
  & purifying calculus & 108 & 109 & 108 & 528 & 526 & 514 & \bf 0.3 & 0.5 & \bf 0.0
  \\ \bottomrule
\end{tabular}
  \caption{Evaluation of the intensional calculi}
  \label{fig:eval-int}

  \vspace*{\floatsep}

  \def\arraystretch{1.1}
  \setlength\tabcolsep{.666ex}
  \begin{tabular}{@{}ll@{\hskip 3ex}rrr@{\hskip 3ex}rrr@{\hskip 3ex}rrr@{}}
    \toprule
  &            & \multicolumn{3}{c@{\hskip 3ex}}{\# sat}   & \multicolumn{3}{c@{\hskip 3ex}}{\# unsat}
  & \multicolumn{3}{c@{\hskip 3ex}}{\diameter~time} \\[-.1ex] %
  &            & LPO   & KBO   & EPO  & LPO   & KBO   & EPO   & LPO   & KBO   & EPO   \\
  \hline \rule{0pt}{3ex}%
SH16
  & applicative encoding & 79 & \bf 152 & 48 & 379 & 386 & 157 & 1.2 & 1.3 & 11.4 \\
  & nonpurifying calculus & \bf 103 & 131 & \bf 95 & \bf 386 & \bf 393 & \bf 387 & \bf 0.4 & \bf 0.1 & \bf 0.0 \\
  & purifying calculus & 32 & 57 & 32 & 367 & 365 & 363 & 2.0 & 1.7 & \bf 0.0 \\[1.25\jot]
SH256
  & applicative encoding & \bf 1 & \bf 1 & \bf 1 & 462 & 486 & 36 & 7.5 & 9.4 & 63.8 \\
  & nonpurifying calculus & \bf 1 & \bf 1 & \bf 1 & \bf 548 & \bf 572 & \bf 504 & \bf 1.9 & \bf 2.1 & \bf 0.1 \\
  & purifying calculus & \bf 1 & \bf 1 & \bf 1 & 512 & 529 & 482 & 2.2 & 5.0 & \bf 0.1 \\[1.25\jot]
TFF
  & first-order mode & 0 & 0 & 0 & \bf 212 & \bf 229 & \bf 107 & \bf 1.9 & \bf 2.5 & \bf 1.5 \\
  & applicative encoding & 0 & 0 & 0 & 178 & 202 & 21 & 7.9 & 11.7 & 4.7 \\
  & nonpurifying calculus & 0 & 0 & 0 & 207 & \bf 229 & 106 & 2.1 & 3.0 & \bf 1.5 \\
  & purifying calculus & 0 & 0 & 0 & 210 & \bf 229 & 105 & 2.2 & 3.2 & 1.6 \\[1.25\jot]
THF
  & applicative encoding & \bf 108 & \bf 109 & 105 & 526 & 527 & 436 & 0.9 & 0.6 & 1.1 \\
  & nonpurifying calculus & 106 & 108 & \bf 107 & \bf 539 & \bf 535 & \bf 526 & \bf 0.3 & \bf 0.3 & \bf 0.0 \\
  & purifying calculus & 96 & 97 & 96 & 530 & 529 & 519 & \bf 0.3 & 0.6 & \bf 0.0
  \\ \bottomrule
\end{tabular}
  \caption{Evaluation of the extensional calculi}
  \label{fig:eval-ext}
\end{figure}

The experimental results on the TFF part of the TPTP library confirm that our
calculi handle the vast majority of problems that are solvable in first-order
mode gracefully, and thus that the overhead is minimal, answering
question (\ref{q:vs-fo}).
On first-order problems, the calculi are occasionally at
variance with the first-order mode, due to the interaction of
\infname{ArgCong} with polymorphic types and due to the extensionality axiom (\infname{Ext}).
In contrast, the applicative encoding is comparatively
inefficient on problems that are already first-order.
For LPO, the success rate decreases by \OK{around 15\%}, and the average time to
show unsatisfiability \OK{triples}.

The SH16 benchmarks consist mostly of small higher-order problems.
The
small number of axioms benefits the applicative encoding enough to outperform
the purifying calculi but not the nonpurifying ones.
The SH256 benchmarks are also higher-order but much larger. Such
problems are underrepresented in the TPTP library. On these, our
calculi clearly outperform the applicative encoding, answering question
(\ref{q:app-enc}) decisively. This is hardly surprising
given that the proving effort is dominated by first-order reasoning, which they
can perform gracefully.

The THF benchmarks generally require more sophisticated higher-order reasoning than the Sledgehammer benchmarks,
as observed by Sultana, Blanchette, and Paulson \cite[Section~5]{sultana-et-al-2013}.
On these benchmarks, the empirical results are less clear;
the applicative encoding and our calculi are roughly neck-and-neck.
The nonpurifying calculi detect unsatisfiability slightly more frequently, whereas the applicative
encoding tends to find more saturations.
It seems that, due to the large amount of higher-order reasoning necessary
to solve TPTP problems, the advantage of our calculi on the first-order parts of the
derivation is not a decisive factor on these benchmarks.

Concerning question~(\ref{q:puri-vs-nonpuri}), the nonpurifying calculi
outperform their purifying relatives across all benchmarks. The raw data show
that on most benchmark sets, the problems solved by the nonpurifying calculi
are almost a superset of the problems solved by the purifying calculi. Only on
the SH256 benchmarks, the purifying calculi can solve a fair number of
problems that the nonpurifying calculi cannot solve (11~problems for the
intensional calculi with LPO and 9 problems for the extensional calculi with
LPO).

KBO tends to have a slight advantage over LPO on all benchmark sets.
But the gap between KBO and LPO is not larger on the higher-order benchmarks than on TFF.
Since LPO is monotonic on first-order terms but nonmonotonic on higher-order terms,
whereas KBO is monotonic on both, the best answer we can give to
question~(\ref{q:cost-nonmono}) is that no substantial cost seems to be
associated with nonmonotonicity.
In particular, for the nonpurifying calculi,
the additional superposition inferences at variables necessary 
with LPO do not have a negative impact on the overall performance.
EPO generally performs worse than the other two orders,
with the exception of the nonpurifying calculus on SH16 benchmarks, where it is roughly neck-and-neck with LPO.
This suggests that for small, mildly higher-order problems,
EPO can be a viable LPO-like complement to KBO
if one considers the effort to implement our calculi too high.

\section{Discussion and Related Work}
\label{sec:discussion-and-related-work}

Our calculi join a long list of extensions and refinements of superposition.
Among the most closely related is Peltier's \cite{peltier-2016} Isabelle/HOL
formalization of the refutational completeness of a superposition calculus
that operates on $\lambda$-free higher-order terms and that is parameterized by
a monotonic term order. Extensions with polymorphism and induction,
independently developed by Cruanes \cite{cruanes-2015,cruanes-2017} and Wand
\cite{wand-2017}, contribute to increasing the power of
automatic
provers. Detection of inconsistencies in axioms, as suggested by Schulz et
al.~\cite{schulz-et-al-2017}, is important for large axiomatizations.

Also of interest is Bofill and Rubio's \cite{bofill-rubio-2013}
integration of nonmonotonic orders in ordered paramodulation, a precursor of
superposition. Their work is a veritable tour de force, but it is also highly
complicated and restricted to ordered paramodulation. Lack of compatibility
with arguments being a mild form of nonmonotonicity, it seemed preferable to
start with superposition, enrich it with an \infname{ArgCong} rule, and tune
the side conditions until we obtained a complete calculus.

Most complications can be avoided by using a monotonic order such as KBO
without argument coefficients. However, coefficients can be useful
to help achieve compatibility with $\beta$-reduction. For example, the term
$\lambda x.\; x + x$ could be treated as a constant with a coefficient of~2 on
its argument and a heavy weight to ensure $(\lambda x.\; x + x)\; y
\succ y + y$. Although they do not use argument coefficients,
the recently developed 
\relax{$\lambda$-superposition} calculus by
Bentkamp et al.\ \cite{bentkamp-et-al-lamsup-journal}
and 
\relax{combinatory superposition} calculus by Bhayat and Reger
\cite{bhayat-reger-2020-combsup} need%
\ a nonmonotonic order to cope with
$\beta$-reduction.
They are modeled after our extensional nonpurifying and
intensional nonpurifying calculi, respectively.
Many researchers have proposed or used encodings of higher-order logic constructs into
first-order logic, including Robinson \cite{robinson-1970}, Kerber
\cite{kerber-1991}, Dougherty \cite{dougherty-1993}, Dowek et al.\
\cite{dowek-et-al-1995}, Hurd \cite{hurd-2003},
Meng and Paulson \cite{meng-paulson-2008-trans},
Obermeyer \cite{obermeyer-2009}, and Czajka \cite{czajka-2016}.
Encodings of types, such as those by Bobot and Paskevich \cite{bobot-paskevich-2011}
and Blanchette et al.\ \cite{blanchette-et-al-2016-types}, are also crucial
to obtain a sound encoding of higher-order logic.
These ideas are implemented in proof assistant tools such as HOLyHammer and
Sledgehammer \cite{blanchette-et-al-2016-qed}.

In the term rewriting community, $\lambda$-free higher-order logic is known as
applicative first-order logic. First-order rewrite techniques can be applied to
this logic via the applicative encoding. However, there are similar drawbacks
as in theorem proving to having $\cst{app}$ as the only nonnullary symbol.
Hirokawa et al.\ \cite{hirokawa-2013} propose a technique that resembles our
mapping $\flooronly$ to avoid these drawbacks.

Another line of research has focused on the development of automated proof
procedures for higher-order logic. 
Robinson's
\cite{robinson-1969}, Andrews's \cite{andrews-1971}, and Huet's \cite{huet-1973}
pioneering work stands out. Andrews \cite{andrews-2001} and Benzm\"uller and
Miller \cite{benzmueller-miller-2014} provide excellent surveys. The
competitive higher-order automatic theorem provers include
\textsc{Leo}-II \cite{benzmueller-et-al-2008-leo2} (based on RUE resolution),
Satallax \cite{brown-2012-ijcar} (based on a tableau calculus and a SAT solver),
agsyHOL \cite{lindblad-2014} (based on a focused sequent calculus and a
generic narrowing engine), Leo-III \cite{steen-benzmueller-2018} (based on
a pragmatic higher-order version of ordered paramodulation with no
completeness guarantees),
CVC4 and veriT \cite{barbosa-et-al-2019-hosmt} (both based on satisfiability modulo theories), and
Vampire \cite{bhayat-reger-2019-rcu,bhayat-reger-2020-combsup}
(based on superposition and $\cst{SK}$-style combinators).
The Isabelle proof assistant \cite{nipkow-et-al-2002}
(which includes a tableau reasoner and a rewriting engine)
and its Sledgehammer subsystem %
have also participated in the higher-order division of the CADE ATP System
Competition.
Zipperposition is a convenient vehicle for experimenting and prototyping
because it is easier to understand and modify than highly-optimized
C or \cpp{} provers. Our middle-term goal is to design higher-order
superposition calculi, implement them in state-of-the-art provers such as E
\cite{schulz-2013}, SPASS \cite{weidenbach-et-al-2009}, and Vampire
\cite{kovacs-voronkov-2013}, and integrate these in proof assistants to
provide a high level of automation. With its stratified architecture,
Otter-$\lambda$ \cite{beeson-2004} is perhaps the closest to what we are
aiming at, but it is limited to second-order logic and offers no completeness
guarantees. As a first step, Vukmirovi\'c, Blanchette, Cruanes, and Schulz
\cite{vukmirovic-et-al-2019} have generalized E's data structures and
algorithms to clausal $\lambda$-free higher-order logic, assuming a monotonic
KBO \cite{becker-et-al-2017}.

\section{Conclusion}
\label{sec:conclusion}

We presented four superposition calculi for intensional and extensional
clausal $\lambda$-free higher-order logic and proved them refutationally
complete. The calculi nicely generalize standard superposition and are
compatible with our $\lambda$-free higher-order LPO and KBO. Especially on
large problems, our experiments confirm what one would naturally expect:\ that
native support for partial application and applied variables outperforms the
applicative encoding.

The new calculi reduce the gap between proof assistants based on higher-order
logic and superposition provers. We can use them to reason about arbitrary
higher-order problems by axiomatizing suitable combinators. 
But perhaps more
importantly, they appear promising as a stepping stone towards complete,
highly efficient automatic theorem provers for full higher-order logic.
Indeed, the subsequent work by Bentkamp et al.\
\cite{bentkamp-et-al-lamsup-journal}, which introduces support for
$\lambda$-expressions, and Bhayat and Reger \cite{bhayat-reger-2020-combsup},
which works with $\cst{SK}$-style combinators, is largely based on our
nonpurifying calculi.

\let\Acksize=\normalsize

\def\ackname{\Acksize Acknowledgment}

\paragraph{\bfseries\upshape\ackname.}
{\Acksize
We are grateful to the maintainers of StarExec for letting us use their
service.
We want to thank Petar Vukmirovi\'c for his work on Zipperposition and for his advice.
We thank Christoph Benzm\"uller, Sander Dahmen, Johannes H\"olzl,
Anders Schlichtkrull, Stephan Schulz, Alexander Steen, \hbox{Geoff} Sutcliffe,
Andrei Voronkov, Daniel Wand,
Christoph Weidenbach, and the participants in the 2017 Dagstuhl
Seminar on Deduction beyond First-Order Logic for stimulating discussions.
We also want to thank Christoph Benzm\"uller, Ahmed Bhayat, Wan Fokkink, Alexander Steen,
Mark Summerfield, Sophie Tourret, and the anonymous reviewers for suggesting
several textual improvements.
Bentkamp and Blanchette's research has received funding from the European
Research Council (ERC) under the European Union's Horizon 2020 research and
innovation program (grant agreement No.\ 713999, Matryoshka).

}

\bibliographystyle{alpha}
\bibliography{ms}

\newcommand{\etalchar}[1]{$^{#1}$}
\begin{thebibliography}{BREO{\etalchar{+}}19}

\bibitem[ADF95]{avenhaus-et-al-1995}
J{\"{u}}rgen Avenhaus, J{\"{o}}rg Denzinger, and Matthias Fuchs.
\newblock {DISCOUNT:} {A} system for distributed equational deduction.
\newblock In Jieh Hsiang, editor, {\em RTA-95}, volume 914 of {\em LNCS}, pages
  397--402. Springer, 1995.

\bibitem[And71]{andrews-1971}
Peter~B. Andrews.
\newblock Resolution in type theory.
\newblock {\em J. Symb. Log.}, 36(3):414--432, 1971.

\bibitem[And01]{andrews-2001}
Peter~B. Andrews.
\newblock Classical type theory.
\newblock In John~Alan Robinson and Andrei Voronkov, editors, {\em Handbook of
  Automated Reasoning}, volume~II, pages 965--1007. Elsevier and {MIT} Press,
  2001.

\bibitem[BBCW18]{bentkamp-et-al-2018}
Alexander Bentkamp, Jasmin~Christian Blanchette, Simon Cruanes, and Uwe
  Waldmann.
\newblock Superposition for lambda-free higher-order logic.
\newblock In Didier Galmiche, Stephan Schulz, and Roberto Sebastiani, editors,
  {\em {IJCAR} 2018}, volume 10900 of {\em LNCS}, pages 28--46. Springer, 2018.

\bibitem[BBPS16]{blanchette-et-al-2016-types}
Jasmin~Christian Blanchette, Sascha B{\"{o}}hme, Andrei Popescu, and Nicholas
  Smallbone.
\newblock Encoding monomorphic and polymorphic types.
\newblock {\em Log. Meth. Comput. Sci.}, 12(4:13):1--52, 2016.

\bibitem[BBT{\etalchar{+}}21]{bentkamp-et-al-lamsup-journal}
Alexander Bentkamp, Jasmin Blanchette, Sophie Tourret, Petar Vukmirovi\'c, and
  Uwe Waldmann.
\newblock Superposition with lambdas.
\newblock {\em J. Autom. Reason.}, 2021.

\bibitem[BBWW17]{becker-et-al-2017}
Heiko Becker, Jasmin~Christian Blanchette, Uwe Waldmann, and Daniel Wand.
\newblock A transfinite {K}nuth--{B}endix order for lambda-free higher-order
  terms.
\newblock In Leonardo de~Moura, editor, {\em CADE-26}, volume 10395 of {\em
  LNCS}, pages 432--453. Springer, 2017.

\bibitem[BC04]{bertot-casteran-2004}
Yves Bertot and Pierre Cast\'eran.
\newblock {\em Interactive Theorem Proving and Program Development: {Coq'Art}:
  The Calculus of Inductive Constructions}.
\newblock Texts in Theoretical Computer Science. Springer, 2004.

\bibitem[Bee04]{beeson-2004}
Michael Beeson.
\newblock Lambda logic.
\newblock In David~A. Basin and Micha{\"{e}}l Rusinowitch, editors, {\em
  {IJCAR} 2004}, volume 3097 of {\em LNCS}, pages 460--474. Springer, 2004.

\bibitem[Ben18]{bentkamp-2018-epo-formalization}
Alexander Bentkamp.
\newblock Formalization of the embedding path order for lambda-free
  higher-order terms.
\newblock {\em Archive of Formal Proofs}, 2018.
\newblock \url{http://isa-afp.org/entries/Lambda_Free_EPO.html}.

\bibitem[BG94]{bachmair-ganzinger-1994}
Leo Bachmair and Harald Ganzinger.
\newblock Rewrite-based equational theorem proving with selection and
  simplification.
\newblock {\em J. Log. Comput.}, 4(3):217--247, 1994.

\bibitem[BG01]{bachmair-ganzinger-2001-resolution}
Leo Bachmair and Harald Ganzinger.
\newblock Resolution theorem proving.
\newblock In John~Alan Robinson and Andrei Voronkov, editors, {\em Handbook of
  Automated Reasoning}, volume~I, pages 19--99. Elsevier and {MIT} Press, 2001.

\bibitem[BKPU16]{blanchette-et-al-2016-qed}
Jasmin~Christian Blanchette, Cezary Kaliszyk, Lawrence~C. Paulson, and Josef
  Urban.
\newblock Hammering towards {QED}.
\newblock {\em J. Formaliz. Reas.}, 9(1):101--148, 2016.

\bibitem[BM14]{benzmueller-miller-2014}
Christoph Benzm{\"{u}}ller and Dale Miller.
\newblock Automation of higher-order logic.
\newblock In J{\"{o}}rg~H. Siekmann, editor, {\em Computational Logic},
  volume~9 of {\em Handbook of the History of Logic}, pages 215--254. Elsevier,
  2014.

\bibitem[BN10]{boehme-nipkow-2010}
Sascha B{\"{o}}hme and Tobias Nipkow.
\newblock Sledgehammer: {J}udgement {D}ay.
\newblock In J{\"{u}}rgen Giesl and Reiner H{\"{a}}hnle, editors, {\em {IJCAR}
  2010}, volume 6173 of {\em LNCS}, pages 107--121. Springer, 2010.

\bibitem[BP11]{bobot-paskevich-2011}
Fran\c{c}ois Bobot and Andrei Paskevich.
\newblock Expressing polymorphic types in a many-sorted language.
\newblock In Cesare Tinelli and Viorica Sofronie-Stokkermans, editors, {\em
  FroCoS 2011}, volume 6989 of {\em LNCS}, pages 87--102. Springer, 2011.

\bibitem[BP13]{blanchette-paskevich-2013}
Jasmin~Christian Blanchette and Andrei Paskevich.
\newblock {TFF1}: The {TPTP} typed first-order form with rank-1 polymorphism.
\newblock In Maria~Paola Bonacina, editor, {\em CADE-24}, volume 7898 of {\em
  LNCS}, pages 414--420. Springer, 2013.

\bibitem[BPTF08]{benzmueller-et-al-2008-leo2}
Christoph Benzm{\"u}ller, Lawrence~C. Paulson, Frank Theiss, and Arnaud
  Fietzke.
\newblock {LEO-II}---{A} cooperative automatic theorem prover for higher-order
  logic.
\newblock In Alessandro Armando, Peter Baumgartner, and Gilles Dowek, editors,
  {\em IJCAR 2008}, volume 5195 of {\em {LNCS}}, pages 162--170. Springer,
  2008.

\bibitem[BR13]{bofill-rubio-2013}
Miquel Bofill and Albert Rubio.
\newblock Paramodulation with non-monotonic orderings and simplification.
\newblock {\em J. Autom. Reason.}, 50(1):51--98, 2013.

\bibitem[BR19]{bhayat-reger-2019-rcu}
Ahmed Bhayat and Giles Reger.
\newblock Restricted combinatory unification.
\newblock In Pascal Fontaine, editor, {\em CADE-27}, volume 11716 of {\em
  LNCS}, pages 74--93. Springer, 2019.

\bibitem[BR20]{bhayat-reger-2020-combsup}
Ahmed Bhayat and Giles Reger.
\newblock A combinator-based superposition calculus for higher-order logic.
\newblock In Nicolas Peltier and Viorica Sofronie-Stokkermans, editors, {\em
  IJCAR 2020, Part I}, volume 12166 of {\em LNCS}, pages 278--296. Springer,
  2020.

\bibitem[Bra75]{Brand1975}
D.~Brand.
\newblock Proving theorems with the modification method.
\newblock {\em SIAM J. Comput.}, 4:412--430, 1975.

\bibitem[BREO{\etalchar{+}}19]{barbosa-et-al-2019-hosmt}
Haniel Barbosa, Andrew Reynolds, Daniel El~Ouraoui, Cesare Tinelli, and Clark
  Barrett.
\newblock Extending {SMT} solvers to higher-order logic.
\newblock In Pascal Fontaine, editor, {\em CADE-27}, volume 11716 of {\em
  LNCS}, pages 35--54. Springer, 2019.

\bibitem[Bro12]{brown-2012-ijcar}
Chad~E. Brown.
\newblock {S}atallax: {A}n automatic higher-order prover.
\newblock In Bernhard Gramlich, Dale Miller, and Uli Sattler, editors, {\em
  IJCAR 2012}, volume 7364 of {\em LNCS}, pages 111--117. Springer, 2012.

\bibitem[BWW17]{blanchette-et-al-2017-rpo}
Jasmin~Christian Blanchette, Uwe Waldmann, and Daniel Wand.
\newblock A lambda-free higher-order recursive path order.
\newblock In Javier Esparza and Andrzej~S. Murawski, editors, {\em {FoSSaCS}
  2017}, volume 10203 of {\em LNCS}, pages 461--479. Springer, 2017.

\bibitem[Chu40]{church-1940}
Alonzo Church.
\newblock A formulation of the simple theory of types.
\newblock {\em J. Symb.\ Log.}, 5(2):56--68, 1940.

\bibitem[CP03]{cervesato-pfenning-2003}
Iliano Cervesato and Frank Pfenning.
\newblock A linear spine calculus.
\newblock {\em J. Log. Comput.}, 13(5):639--688, 2003.

\bibitem[Cru15]{cruanes-2015}
Simon Cruanes.
\newblock {\em Extending Superposition with Integer Arithmetic, Structural
  Induction, and Beyond}.
\newblock {Ph.D.}\ thesis, \'Ecole polytechnique, 2015.

\bibitem[Cru17]{cruanes-2017}
Simon Cruanes.
\newblock Superposition with structural induction.
\newblock In Clare Dixon and Marcelo Finger, editors, {\em FroCoS 2017}, volume
  10483 of {\em LNCS}, pages 172--188. Springer, 2017.

\bibitem[Cza16]{czajka-2016}
{\L}ukasz Czajka.
\newblock Improving automation in interactive theorem provers by efficient
  encoding of lambda-abstractions.
\newblock In Jeremy Avigad and Adam Chlipala, editors, {\em CPP 2016}, pages
  49--57. {ACM}, 2016.

\bibitem[DH86]{DigricoliHarrison1986}
Vincent~J. Digricoli and Malcolm~C. Harrison.
\newblock Equality-based binary resolution.
\newblock {\em J. ACM}, 33(2):253--289, 1986.

\bibitem[DHK95]{dowek-et-al-1995}
Gilles Dowek, Th{\'e}r{\`e}se Hardin, and Claude Kirchner.
\newblock Higher-order unification via explicit substitutions (extended
  abstract).
\newblock In {\em LICS '95}, pages 366--374. IEEE Computer Society, 1995.

\bibitem[Dou93]{dougherty-1993}
Daniel~J. Dougherty.
\newblock Higher-order unification via combinators.
\newblock {\em Theor. Comput. Sci.}, 114(2):273--298, 1993.

\bibitem[Fit96]{fitting-1996}
Melvin Fitting.
\newblock {\em First-Order Logic and Automated Theorem Proving}.
\newblock Springer-Verlag, 2nd edition, 1996.

\bibitem[Fit02]{fitting-2002}
Melvin Fitting.
\newblock {\em Types, Tableaus, and {G}{\"o}del's God}.
\newblock Kluwer, 2002.

\bibitem[GKKV14]{gupta-et-al-2014}
Ashutosh Gupta, Laura Kov{\'{a}}cs, Bernhard Kragl, and Andrei Voronkov.
\newblock Extensional crisis and proving identity.
\newblock In Franck Cassez and Jean{-}Fran{\c{c}}ois Raskin, editors, {\em
  {ATVA} 2014}, volume 8837 of {\em LNCS}, pages 185--200. Springer, 2014.

\bibitem[GM93]{gordon-melham-1993}
M.~J.~C. Gordon and T.~F. Melham, editors.
\newblock {\em Introduction to {HOL}: A Theorem Proving Environment for Higher
  Order Logic}.
\newblock Cambridge University Press, 1993.

\bibitem[Hen50]{henkin-1950}
Leon Henkin.
\newblock Completeness in the theory of types.
\newblock {\em J. Symb. Log.}, 15(2):81--91, 1950.

\bibitem[HMZ13]{hirokawa-2013}
Nao Hirokawa, Aart Middeldorp, and Harald Zankl.
\newblock Uncurrying for termination and complexity.
\newblock {\em J. Autom. Reasoning}, 50(3):279--315, 2013.

\bibitem[Hue73]{huet-1973}
G{\'{e}}rard~P. Huet.
\newblock A mechanization of type theory.
\newblock In Nils~J. Nilsson, editor, {\em IJCAI-73}, pages 139--146. William
  Kaufmann, 1973.

\bibitem[Hur03]{hurd-2003}
Joe Hurd.
\newblock First-order proof tactics in higher-order logic theorem provers.
\newblock In Myla Archer, Ben {Di~Vito}, and C{\'{e}}sar Mu{\~{n}}oz, editors,
  {\em Design and Application of Strategies/\kern.5pt Tactics in Higher Order
  Logics}, {NASA} Technical Reports, pages 56--68, 2003.

\bibitem[Ker91]{kerber-1991}
Manfred Kerber.
\newblock How to prove higher order theorems in first order logic.
\newblock In John Mylopoulos and Raymond Reiter, editors, {\em IJCAI-91}, pages
  137--142. Morgan Kaufmann, 1991.

\bibitem[Kop12]{kop-2012-phd}
Cynthia Kop.
\newblock {\em Higher Order Termination: Automatable Techniques for Proving
  Termination of Higher-Order Term Rewriting Systems}.
\newblock {Ph.D.}\ thesis, Vrije Universiteit Amsterdam, 2012.

\bibitem[KV13]{kovacs-voronkov-2013}
Laura Kov{\'{a}}cs and Andrei Voronkov.
\newblock First-order theorem proving and {V}ampire.
\newblock In Natasha Sharygina and Helmut Veith, editors, {\em {CAV} 2013},
  volume 8044 of {\em LNCS}, pages 1--35. Springer, 2013.

\bibitem[Lei94]{leivant-1994}
Daniel Leivant.
\newblock Higher order logic.
\newblock In Dov~M. Gabbay, Christopher~J. Hogger, J.~A. Robinson, and
  J{\"{o}}rg~H. Siekmann, editors, {\em Handbook of Logic in Artificial
  Intelligence and Logic Programming, Volume 2, Deduction Methodologies}, pages
  229--322. Oxford University Press, 1994.

\bibitem[Lin14]{lindblad-2014}
Fredrik Lindblad.
\newblock A focused sequent calculus for higher-order logic.
\newblock In St{\'{e}}phane Demri, Deepak Kapur, and Christoph Weidenbach,
  editors, {\em {IJCAR} 2014}, volume 8562 of {\em LNCS}, pages 61--75.
  Springer, 2014.

\bibitem[Mil87]{miller-1987}
Dale~A. Miller.
\newblock A compact representation of proofs.
\newblock {\em Studia Logica}, 46(4):347--370, 1987.

\bibitem[MP08]{meng-paulson-2008-trans}
Jia Meng and Lawrence~C. Paulson.
\newblock Translating higher-order clauses to first-order clauses.
\newblock {\em J. Autom. Reason.}, 40(1):35--60, 2008.

\bibitem[NPW02]{nipkow-et-al-2002}
Tobias Nipkow, Lawrence~C. Paulson, and Markus Wenzel.
\newblock {\em {I}sabelle/{HOL}: A Proof Assistant for Higher-Order Logic},
  volume 2283 of {\em {LNCS}}.
\newblock Springer, 2002.

\bibitem[Obe09]{obermeyer-2009}
Fritz~H. Obermeyer.
\newblock {\em Automated Equational Reasoning in Nondeterministic
  $\lambda$-Calculi Modulo Theories $\mathcal{H}{}^{*}$}.
\newblock {Ph.D.}\ thesis, Carnegie Mellon University, 2009.

\bibitem[Pel16]{peltier-2016}
Nicolas Peltier.
\newblock A variant of the superposition calculus.
\newblock {\em Archive of Formal Proofs}, 2016.

\bibitem[Rob69]{robinson-1969}
J.A. Robinson.
\newblock Mechanizing higher order logic.
\newblock In B.~Meltzer and D.~Michie, editors, {\em Machine Intelligence},
  volume~4, pages 151--170. Edinburgh University Press, 1969.

\bibitem[Rob70]{robinson-1970}
J.A. Robinson.
\newblock A note on mechanizing higher order logic.
\newblock In B.~Meltzer and D.~Michie, editors, {\em Machine Intelligence},
  volume~5, pages 121--135. Edinburgh University Press, 1970.

\bibitem[SB18]{steen-benzmueller-2018}
Alexander Steen and Christoph Benzm\"uller.
\newblock The higher-order prover {Leo-III}.
\newblock In Didier Galmiche, Stephan Schulz, and Roberto Sebastiani, editors,
  {\em IJCAR 2018}, volume 10900 of {\em {LNCS}}, pages 108--116. Springer,
  2018.

\bibitem[SBBT09]{sutcliffe-et-al-2009}
Geoff Sutcliffe, Christoph Benzm{\"{u}}ller, Chad~E. Brown, and Frank Theiss.
\newblock Progress in the development of automated theorem proving for
  higher-order logic.
\newblock In Renate~A. Schmidt, editor, {\em CADE-22}, volume 5663 of {\em
  LNCS}, pages 116--130. Springer, 2009.

\bibitem[SBP13]{sultana-et-al-2013}
Nik Sultana, Jasmin~Christian Blanchette, and Lawrence~C. Paulson.
\newblock {LEO-II} and {Satallax} on the {Sledgehammer} test bench.
\newblock {\em J. Applied Logic}, 11(1):91--102, 2013.

\bibitem[Sch02]{schulz-2002-brainiac}
Stephan Schulz.
\newblock E - a brainiac theorem prover.
\newblock {\em {AI} Commun.}, 15(2-3):111--126, 2002.

\bibitem[Sch12]{schulz-fingerprint-2012}
Stephan Schulz.
\newblock Fingerprint indexing for paramodulation and rewriting.
\newblock In Bernhard Gramlich, Dale Miller, and Uli Sattler, editors, {\em
  {IJCAR} 2012}, volume 7364 of {\em LNCS}, pages 477--483. Springer, 2012.

\bibitem[Sch13]{schulz-2013}
Stephan Schulz.
\newblock System description: {E} 1.8.
\newblock In Kenneth~L. McMillan, Aart Middeldorp, and Andrei Voronkov,
  editors, {\em LPAR-19}, volume 8312 of {\em LNCS}, pages 735--743. Springer,
  2013.

\bibitem[SL91]{SnyderLynch1991}
Wayne Snyder and Christopher Lynch.
\newblock Goal directed strategies for paramodulation.
\newblock In Ronald~V. Book, editor, {\em RTA-91}, volume 488 of {\em LNCS},
  pages 150--161. Springer, 1991.

\bibitem[SS89]{SchmidtSchauss1989}
Manfred Schmidt-Schau{\ss}.
\newblock Unification in a combination of arbitrary disjoint equational
  theories.
\newblock {\em J. Symb. Comput.}, 8:51--99, 1989.

\bibitem[SSCB12]{sutcliffe-et-al-2012-tff}
Geoff Sutcliffe, Stephan Schulz, Koen Claessen, and Peter Baumgartner.
\newblock The {TPTP} typed first-order form with arithmetic.
\newblock In Nikolaj Bj{\o}rner and Andrei Voronkov, editors, {\em LPAR-18},
  volume 7180 of {\em LNCS}, pages 406--419. Springer, 2012.

\bibitem[SST14]{stump-et-al-2014}
Aaron Stump, Geoff Sutcliffe, and Cesare Tinelli.
\newblock {StarExec}: {A} cross-community infrastructure for logic solving.
\newblock In St{\'{e}}phane Demri, Deepak Kapur, and Christoph Weidenbach,
  editors, {\em {IJCAR} 2014}, volume 8562 of {\em LNCS}, pages 367--373.
  Springer, 2014.

\bibitem[SSUP17]{schulz-et-al-2017}
Stephan Schulz, Geoff Sutcliffe, Josef Urban, and Adam Pease.
\newblock Detecting inconsistencies in large first-order knowledge bases.
\newblock In Leonardo de~Moura, editor, {\em {CADE}-26}, volume 10395 of {\em
  LNCS}, pages 310--325. Springer, 2017.

\bibitem[V{\"a}{\"a}19]{vaananen-2019}
Jouko V{\"a}{\"a}n{\"a}nen.
\newblock Second-order and higher-order logic.
\newblock In Edward~N. Zalta, editor, {\em The Stanford Encyclopedia of
  Philosophy}. Metaphysics Research Lab, Stanford University, fall 2019
  edition, 2019.

\bibitem[VBCS19]{vukmirovic-et-al-2019}
Petar Vukmirovi\'c, Jasmin~Christian Blanchette, Simon Cruanes, and Stephan
  Schulz.
\newblock Extending a brainiac prover to lambda-free higher-order logic.
\newblock In Tomas Vojnar and Lijun Zhang, editors, {\em TACAS 2019}, volume
  11427 of {\em LNCS}, pages 192--210. Springer, 2019.

\bibitem[Wan17]{wand-2017}
Daniel Wand.
\newblock {\em Superposition: Types and Polymorphism}.
\newblock {Ph.D.}\ thesis, Universit\"at des Saarlandes, 2017.

\bibitem[WDF{\etalchar{+}}09]{weidenbach-et-al-2009}
Christoph Weidenbach, Dilyana Dimova, Arnaud Fietzke, Rohit Kumar, Martin Suda,
  and Patrick Wischnewski.
\newblock {SPASS} version 3.5.
\newblock In Renate~A. Schmidt, editor, {\em CADE-22}, volume 5663 of {\em
  LNCS}, pages 140--145. Springer, 2009.

\bibitem[WTRB20]{waldmann-et-al-2020-saturation}
Uwe Waldmann, Sophie Tourret, Simon Robillard, and Jasmin Blanchette.
\newblock A comprehensive framework for saturation theorem proving.
\newblock In Nicolas Peltier and Viorica Sofronie-Stokkermans, editors, {\em
  IJCAR 2020, Part I}, volume 12166 of {\em LNCS}, pages 316--334. Springer,
  2020.

\end{thebibliography}
\end{document}